\newtheorem{proposition}{Proposition}
\newcommand*{\defeq}{\mathrel{\rlap{%
                     \raisebox{0.3ex}{$\m@th\cdot$}}%
                     \raisebox{-0.3ex}{$\m@th\cdot$}}%
                     =}
\newcommand{\thickhline}{%
    \noalign {\ifnum 0=`}\fi \hrule height 1pt
    \futurelet \reserved@a \@xhline
		}
\newcommand{\subscript}[2]{$#1 _ #2$}
\begin{document}

\title{Decentralized DC MicroGrid Monitoring and Optimization via Primary Control Perturbations
}


\author{Marko Angjelichinoski,~\IEEEmembership{Student Member,~IEEE,}
       Anna Scaglione,~\IEEEmembership{Fellow,~IEEE,}
       Petar Popovski,~\IEEEmembership{Fellow,~IEEE,}
       \v Cedomir Stefanovi\' c,~\IEEEmembership{Senior Member,~IEEE}
\thanks{M. Angjelichinoski, P. Popovski and \v C. Stefanovi\'c are with the Department of Electronic Systems, Aalborg University, Denmark (e-mail: $\left\{\mbox{maa,petarp,cs}\right\}$@es.aau.dk). A. Scaglione is with the School of Electrical, Computer and Energy Engineering, Arizona State University, AZ, USA (e-mail: Anna.Scaglione@asu.edu).}%
\thanks{The work presented in this paper was supported in part by EU, under grant agreement no. 607774 ``ADVANTAGE''.}
}


\maketitle


\begin{abstract}
We treat the emerging power systems with direct current (DC) MicroGrids, characterized with high penetration of power electronic converters. We rely on the power electronics to propose a decentralized solution for autonomous learning of and adaptation to the operating conditions of the DC Mirogrids; the goal is to eliminate the need to rely on an external communication system for such purpose. The solution works within the primary droop control loops and uses only local bus voltage measurements. Each controller is able to estimate (i) the generation capacities of power sources, (ii) the load demands, and (iii) the conductances of the distribution lines.
To define a well-conditioned estimation problem, we employ decentralized strategy where the primary droop controllers temporarily switch between operating points in a coordinated manner, following amplitude-modulated training sequences.
We study the use of the estimator in a decentralized solution of the Optimal Economic Dispatch problem. 
The evaluations confirm the usefulness of the proposed solution for autonomous MicroGrid operation.
\end{abstract}

\begin{IEEEkeywords}
direct current MicroGrids, droop control, training, Maximum Likelihood, Optimal Economic Dispatch
\end{IEEEkeywords}

\IEEEpeerreviewmaketitle

\section{Introduction}
\label{sec:intro}
\IEEEPARstart{S}{ince} {their inception, MicroGrids (MGs) have evolved substantially, particularly in the domain of low voltages (LV), leading to variety of use cases and topologies \cite{1,2,3,ref:new_last_1,2018_new1,2018_new2,2018_new5,2018_new6,2018_new7,2018_new8,2018_new9}: from small clusters of distributed energy resources (DERs) serving houses or buildings, to large meshes of small MGs covering large areas, such as neighborhoods, industrial complexes and remote villages.
As a result, the future smart grid (SG) is envisioned as a mesh of interconnected autonomous MG systems.
It is also within the field of MGs where \emph{direct current} (DC) power networks have experienced a renaissance 
due to 
the seamless integration with DC renewable generation, DC energy storage systems and DC smart loads \cite{2,3,ref:new_last_1}.
Hence, LV DC MGs are considered as a solution for residential and industrial use cases.} 

{A distinctive characteristic of DC MGs is the use of programmable DC/DC and AC/DC \emph{power electronic converters} (PECs) to connect the DERs to the DC distribution system.
PECs are digital signal processors (DSPs) that allow for software implementation of \emph{advanced control systems} \cite{2,3}. 
Leveraging on the advanced features of PECs 
the control system design also shifted from simple strategies, suitable for small systems \cite{14,15,16}, to modular hierarchical architectures where several interacting control layers dynamically respond to state variations on different time scales and pursue various complementary objectives \cite{3,ref:new_last_1,4,5,6,2018_new3,2018_new4,2018_new11,8,9,10}. 
Specifically, the MG control plane is organized into \emph{dual-layer} architecture, comprising \emph{primary} and \emph{upper} control layer \cite{3,4}.
The primary control is decentralized and deals with high frequency dynamic compensation and state regulation  \cite{3}. 
The upper control layer deals with slow, global changes in the MG by providing updated primary control references and is implemented in distributed/centralized fashion 
\cite{4,5,6,2018_new3,2018_new4,2018_new11,8,9,10}.
An exemplary upper layer application is the \emph{Optimal Economic Dispatch} (OED), which aims to compute the optimal dispatch policies that minimize the total generation cost while keeping the load balanced \cite{6}.
}

{The standard design assumption is that the feedback of the upper control layer is closed via an \emph{external communication system}, usually via off-the-shelf wireless technologies \cite{3,6,8}.
However, this approach was challenged recently due to several issues \cite{2,3}.
First, the distributed power systems, particularly MGs, are dynamic and ad-hoc in nature, thus the installation of communication hardware may prove impractical and cost inefficient.
Second, the external communication system reduces the resilience of the overall MG system, as it becomes a factor in the system reliability/availability. 
Finally, there is a growing concern about the cyber-security of power systems that exploit external communications, as the related security threats and attacks might severely compromise their stability and operation, 
leading to blackouts, equipment damage, data theft and investment losses \cite{12,13,ref:new_last_3,ref:new_last_4}.
}

{A straightforward solution would be to remove the upper layer completely and run the DC MG only with primary control without any further coordination.
However, the approach is not suitable for advanced MG topologies, 
as it can not foster optimal and sustainable regulation. 
The 
\emph{DC bus signaling} has been introduced as an enhancement of the above idea \cite{14,15,16}.
It uses the variations of the steady state bus voltage as an implicit coordination signal that tells the DERs how to behave in specific conditions.
The idea is motivated by the fact that DC systems are inherently tolerant to steady state voltage variations, allowing for voltage ripples of up to $10\%$ \cite{2,3,2018_new1}. 
Each PEC monitors the local voltage and if it the crosses predefined threshold, the PEC takes predefined actions.
This approach has reliability, availability and security advantages over traditional networked design and requires only software modifications of the PECs.
However, it is configuration-dependent, performing well in environments with predictable loads, but not in large, dynamic and general-purpose MGs.
Moreover, the range of upper layer applications that can be supported is limited. 
Another alternative to wireless communications is to use conventional powerline communications (PLC)~\cite{17}.
This way, some of the security concerns can be alleviated as now an attacker would need physical access to the MG.
Nevertheless, PLCs are still essentially an external communication system coupled to the control of the MG, as they require installation of dedicated modems.
}

{Motivated by the shortcomings of the above approaches, we propose a \emph{decentralized} dual-layer control architecture for \emph{autonomous} DC MGs in which each primary controller \emph{locally} acquires the information required for the operation of the upper layer and determines the updated primary control references \emph{without} the support of external communication enabler.
To support the majority of applications, the upper control layer \emph{requires} information about: i) the generation capacities of the dispatchable DERs, ii) the demands of the loads, and iii) the conductance matrix of the distribution network \cite{6,2018_new4}.
This information can be \emph{inferred} from local voltage observations, since the bus voltages are \emph{functionally related} to the MG parameters 
through a non-linear model.
To extract these parameters, the PECs deliberately move the MG through a sequence of sub-optimal states via coordinated and amplitude-modulated perturbations of the primary control parameters, referred to as \emph{training sequences}.
This way, the PECs obtain sequences of local bus voltage measurements from which the required information can be uniquely estimated, provided that the training sequences satisfy \emph{sufficiency} criteria.
To this end, we formulate a constrained Maximum Likelihood (ML) estimation problem that estimates the MG parameters \emph{jointly} with the state of the DC MG.
To solve the non-convex optimization problem, we develop an iterative algorithm and compare its performance against the Cramer-Rao Lower Bound (CRLB).
We illustrate the practical potential of the method by applying it in \emph{decentralized OED} (DOED) and we show how to minimize the operational cost by optimizing the design of the training sequences.
The proposed solution does not rely on any additional communication hardware, as it exploits the signal processing capabilities of the PECs and its locally available voltage measurements, such that it can be implemented \emph{only} in software.
}

The rest of the paper is organized as follows.
Section~\ref{sec:example} gives an overview of the main contributions. 
Section~\ref{sec:sys_model} introduces the system model.
Section~\ref{sec:problem} presents the training protocol and formulates the decentralized system identification problem.
Section~\ref{sec:main1} is the pivotal section of the paper, presenting our take to the problem formulated in Section~\ref{sec:problem}.
Section~\ref{sec:DOED} introduces the periodic DOED protocol.
Section~\ref{sec:Results} presents the results and Section~\ref{sec:conc} concludes the paper.

\emph{Notation:} 
Column vectors and matrices are denoted by lowercase and uppercase bold letters, e.g., $\mathbf{a}\in\mathbb{R}^{N\times 1}$ and $\mathbf{A}\in\mathbb{R}^{N\times M}$.
$\mathbf{a}_{-n}\in\mathbb{R}^{(N-1)\times 1}$ is obtained from $\mathbf{a}$ by removing the element at position $n$.
Similarly, $\mathbf{A}_{-m}\in\mathbb{R}^{N\times (M-1)}$ is obtained from $\mathbf{A}$ by removing the $m$-th column $\mathbf{a}_m$. 
$(\cdot)^{\mathsf{T}}$, $(\cdot)^{\dagger}$, $\mathsf{vec}(\cdot)$, $\mathsf{dim}(\cdot)$, $\mathsf{rank}(\cdot)$, $\mathsf{trace}(\cdot)$ and $\|\cdot\|_l$ denote the transpose, the pseudo-inverse, the {vectorization}, the {dimension}, the {rank}, the trace and the $l$-norm of the argument.
$\otimes$ denotes the Kroneker product while $\odot$ and $\oslash$ denote the Hadamard (element-vise) product and division of vectors/matrices of adequate dimensions. 
The vectors $\mathbf{1}_N$, $\mathbf{0}_N$ and $\mathbf{e}_{n},~n\in\mathcal{N}$, denote the all-one, all-zero and the principal coordinate vector, $\mathbf{1}_{N\times M}$, $\mathbf{0}_{N\times M}$ denote the $N\times M$ all-one and all-zero matrices, and $\mathbf{I}_{N}$ is the $N\times N$ identity matrix.
$\mathsf{D}(\mathbf{a})$ denotes diagonal matrix with the entries of $\mathbf{a}$ on the main diagonal. 
We frequently use the identity $\mathsf{vec}(\mathsf{D}(\mathbf{a}))=\mathbf{O}_N\mathbf{a}$ where the $N^2\times N$ matrix $\mathbf{O}_N = \sum_{n=1}^N\mathbf{e}_{n}\otimes(\mathbf{e}_{n}\mathbf{e}_{n}^{\mathsf{T}})$.


\section{Overview of Contributions}
\label{sec:example}

{The proposed solution is illustrated in Fig.~\ref{PropSol}. 
We consider a generic DC MG model with multiple buses, described in Sections~\ref{sec:sys_model} and~\ref{sec:problem}. 
We assume that the MG does not have access to reliable external communication resources.
The physical \emph{state} of DC MGs is characterized by the steady state \emph{bus voltages}.
We introduce a \emph{parameter vector} $\boldsymbol{\theta}$ that collects all system variables whose values are determined by \emph{exogenous} influences; this includes the generation capacities of the DERs, the load demands and the distribution network topology, i.e., the conductance matrix, see Section~\ref{sec:param}.
Using the power balance equation, we represent the bus voltages thorough a \emph{non-linear} and \emph{implicit} model, parametrized by $\boldsymbol{\theta}$, see Section~\ref{sec:steady_state}.
Evidently, $\boldsymbol{\theta}$ \emph{varies} with time; to respond to its variations on different time scales, the DC MG is governed by a \emph{hierarchical control system}, comprising \emph{primary} and \emph{upper} control layer. 
The primary control is \emph{decentralized}: several controllers regulate the bus voltages, using only local feedbacks without exchanging any information with peer controllers.
They are very fast and capable of responding to high frequency variations in $\boldsymbol{\theta}$. 
Popular primary controller in DC MGs is the \emph{Voltage Source Converter (VSC)} with \emph{voltage droop control}, which is reminiscent to the widespread \emph{frequency droop control} in AC systems, but defined over the DC voltage; it is therefore standard practice to refer to it simply as \emph{droop controller}
\cite{2,3}.
The upper control layer, on the other hand, responds to less frequent changes in $\boldsymbol{\theta}$ that affect the global behavior of the system; examples include changes of the load/generation profile, faults, attacks, etc.
Its main role is to \emph{adapt} the system to the new conditions by computing \emph{updated optimal control references} for the primary controllers; \emph{all} upper layer control applications require full/partial knowledge of $\boldsymbol{\theta}$ to determine the control references that adequately reflect the new conditions \cite{4,5,6,2018_new3,2018_new4,2018_new11,8,9,10}.
}

{Unlike conventional centralized networked control solutions, where the upper control layer is supported by an external communication enabler, 
we propose a \emph{decentralized} control architecture that relies solely on the DSP capabilities of the PECs: namely, in our solution the upper control layer is implemented \emph{locally} within each PEC, and uses only the locally available state measurements, as depicted in Fig.~\ref{PropSol}.
The solution comprises two main functional blocks, i.e., the \emph{monitoring} and \emph{optimization}, executed \emph{sequentially}.}

\begin{figure}[t]
\centering
\includegraphics[scale=0.35]{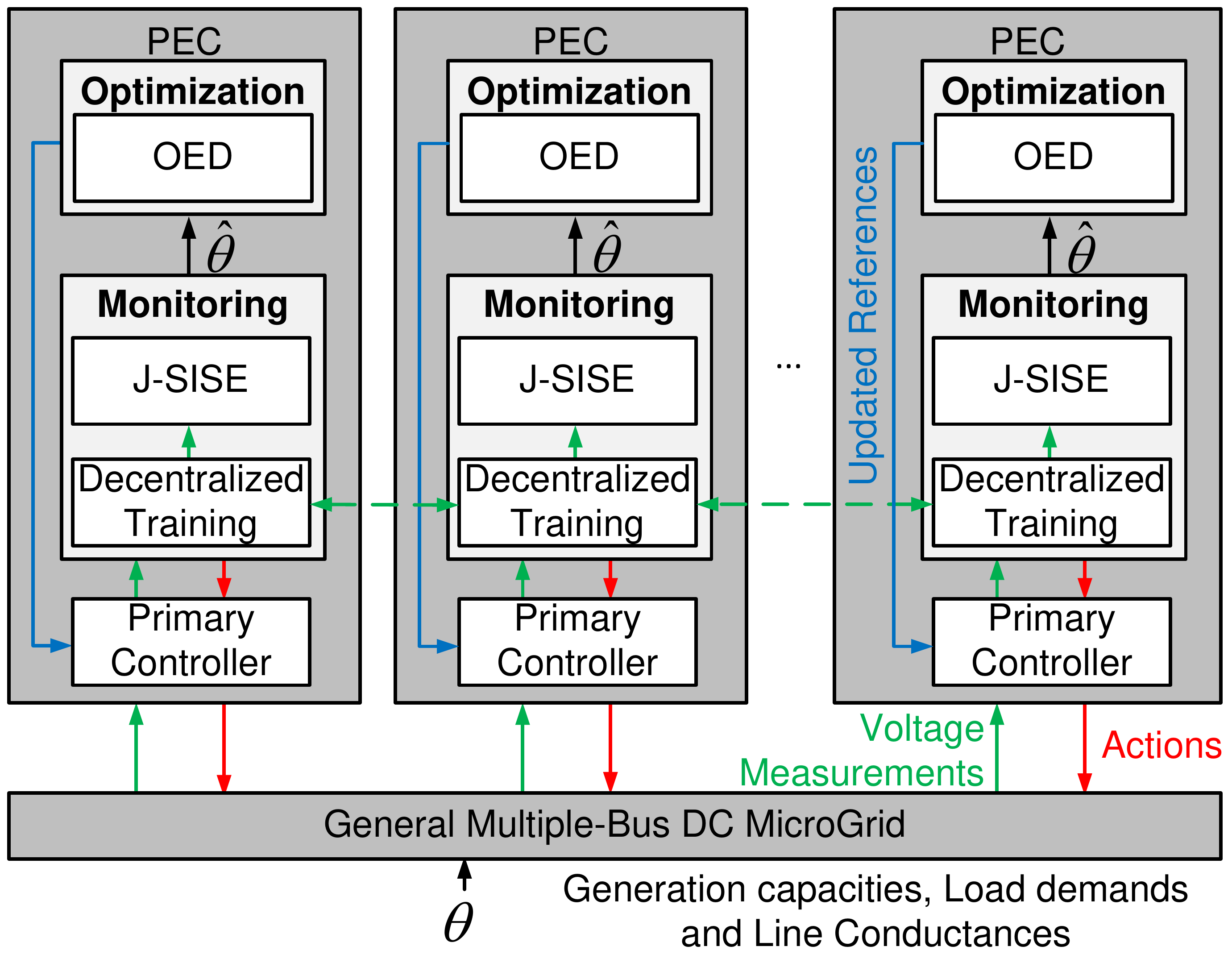}
\centering
\caption{{Overview of the proposed decentralized dual-layer control architecture.}}
\label{PropSol}
\end{figure}

{\textit{Monitoring}.
This functional block 
exploits the fact that the steady state bus voltages are functionally related with $\boldsymbol{\theta}$ through the \emph{power balance} equation; hence, each controller can compute a local estimate of $\boldsymbol{\theta}$. 
The key challenge 
is that it is impossible to infer $\boldsymbol{\theta}$ by using only local measurements of a single realization of the state, as the system is not observable and the estimation is ill-conditioned.
To address this, the monitoring block comprises two procedures: (1) coordinated \emph{decentralized training} \cite{25,27} via \emph{primary control perturbations}, see Section~\ref{sec:problem}, and (2) \emph{Joint System Identification and State Estimation} (J-SISE), see Section~\ref{sec:main1}.
During training, the controllers \emph{perturb} the values of the local droop control parameters, for a limited period of time, 
following predetermined \emph{training sequences}.
This generates a sequence of different realizations of the state.
The controllers collect the local measurements of the state sequence and \emph{modulate} them into the perturbation signals, see Section~\ref{sec:phases_subphases}.
In other words, the relation between the primary control perturbation signals and the induced state deviations is interpreted as the input-output relation of an \emph{implicit communication channel} \cite{18,19,20,21,last}, through which the controllers \emph{exchange} their local observations.
Hence, the training sequences are used both for generating multiple states \emph{and} communicating the local state observations. 
If the training sequences satisfy sufficiency criteria, see Section~\ref{sec:suff_exct}, each controller is able to compute unique estimate $\hat{\boldsymbol{\theta}}$ using the steady state voltage measurements acquired during training and the J-SISE algorithm, see Section~\ref{sec:JMLE}. 
The J-SISE is formulated as non-convex, constrained ML optimization problem in classical estimation framework which we solve via iterative algorithm based on partially linearized constraints and evaluate its performance using the CRLB, see Sections~\ref{sec:CRLB} and~\ref{sec:jmle_perf}.}

{\textit{Optimization}.
The local estimates $\hat{\boldsymbol{\theta}}$ are used as inputs to an \emph{energy management application} which computes \emph{updated} primary control references, see Fig.~\ref{PropSol}. 
Any application for which $\boldsymbol{\theta}$ is sufficient can be applied. 
We focus on DOED with linear generation cost model, since a simple, decentralized closed form solution is available in this case \cite{6,last}.
To this end, we design \emph{periodic protocol}, detailed in Section~\ref{sec:DOED}, where the controllers first perform training and obtain $\hat{\boldsymbol{\theta}}$ via J-SISE, then re-dispatch.
Finally, we show how to minimize the \emph{operational cost} of the protocol by calibrating the training parameters, see Section~\ref{sec:cost_tradeoff}.}

{We conclude by highlighting the benefits of the proposed solution.
First and foremost, it promotes the principle of \emph{self-sustainability} in SG as it reuses the DSP features of the available power electronics and obviates critical reliance on external communication system. 
Further, the optimization block is not limited only to OED,
as the knowledge of $\boldsymbol{\theta}$ allows each controller to solve locally 
a great deal of energy management optimizations (even if they do not have decentralized formulation) such as \emph{Optimal Power Flow} (OPF), \emph{Unit Commitment} (UC) and security-related applications, such as \emph{Fault Detection and Diagnosis} (FDD) \cite{8,2018_new4}.
This flexibility strengthens the \emph{autonomous} operation of the DC MG. 
Finally, the developed framework 
can be adapted for \emph{arbitrary} DC MG systems, as discussed in Section~\ref{sec:jmle_perf}. 
}

\section{System Model}
\label{sec:sys_model}

{The terminology and the notation system applied to the model is standardly used in power engineering literature \cite{3,2018_new4}. 
Section~\ref{sec:problem} introduces compact, matrix notation of the power balance equation which is easier to manipulate later on; this can be also seen as a standalone contribution, as this is the first work that introduces such compact notation for droop-controlled DC MG.}



\subsection{General Multiple-Bus DC MicroGrid}
\label{sec:DCMGs}

\subsubsection{Buses and Distribution Network}
A DC MG is a collection of DERs and loads, connected to low voltage DC distribution system, see Fig.~\ref{MBMG}.
The distribution system consist of $N\geq 1$ buses, indexed in the set $\mathcal{N}=\left\{1,...,N\right\}$. 
Each bus $n$ in steady state is characterized by a bus voltage $v_n$, and all DERs and loads connected to bus $n$ measure the same voltage $v_n$.
The distribution line connecting buses $n$ and $m,~n\neq m$ has a \emph{line conductance} denoted by $y_{n,m},~y_{n,m}\equiv y_{m,n}\geq 0$ \cite{3}.
The topology of the distribution system is specified via the symmetric $N\times N$ \emph{conductance matrix} $\mathbf{Y}$ with elements:
\begin{align}\label{eq:Psi_def}
[\mathbf{Y}]_{n,m} & = \left\{
  \begin{array}{lr}
	  \sum_{j\in\mathcal{N}}y_{n,j}, & n=m,\\
     -y_{n,m}, & n\neq m,
  \end{array}
\right.
~n,m\in\mathcal{N}
\end{align}

\subsubsection{Distributed Energy Resources}
We model each DER as separate bus, i.e., we assume that each bus hosts at most one DER; hence, the total number of DERs is $N$ and they are indexed in the set $\mathcal{N}$.
This modeling choice simplifies the notation  without losing generality; in fact, if DERs $n$ and $m$ are connected to the same physical point, i.e., the same bus, by definition $y_{n,m}=\infty$.
The $n-$th DER has current $i_n$ and power output $p_n=v_ni_n$.
We assume that the DERs in the MG are small-scale power sources such as renewables (RESs) or distributed generators (DGs) based on traditional fossil fuel.
Each DER $n$ has an instantaneous \emph{generation capacity} $g_n\geq 0$, and the output power $p_n$ should satisfy $0\leq p_n\leq g_n$.

\subsubsection{Loads}
The $n-$th bus hosts a collection of loads, represented through an aggregate model as a mixture of three components (also known as ZIP load model \cite{28}): 1) \emph{constant conductance} $y_{n}^{\text{ca}}={x^{-2}}{d_{n}^{\text{ca}}}$, 2) \emph{constant current} $i_{n}^{\text{cc}}={x^{-1}}{d_{n}^{\text{cc}}}$, and 3) \emph{constant power} component $d_{n}^{\text{cp}}$, see Fig.~\ref{MBMG}.
The quantities $d_{n}^{\text{ca}}$, $d_{n}^{\text{cc}}$ and $d_{n}^{\text{cp}}$ are the instantaneous \emph{power demands} of the components at a rated voltage $x$.
For a given $d_{n}^{\text{cp}}$, the constant power component in steady state is approximated with an equivalent positive current source in parallel with negative conductance and 
the electrical parameters are \cite{3}:
\begin{equation}\label{eq:cpl}
	i_{n}^{\text{cp}} \approx \frac{2d_{n}^{\text{cp}}}{v_n},\;y_{n}^{\text{cp}} \approx -\frac{v_n^2}{d_{n}^{\text{cp}}},~n\in\mathcal{N}.
\end{equation}

\begin{figure}[t]
\centering
\includegraphics[scale=0.35]{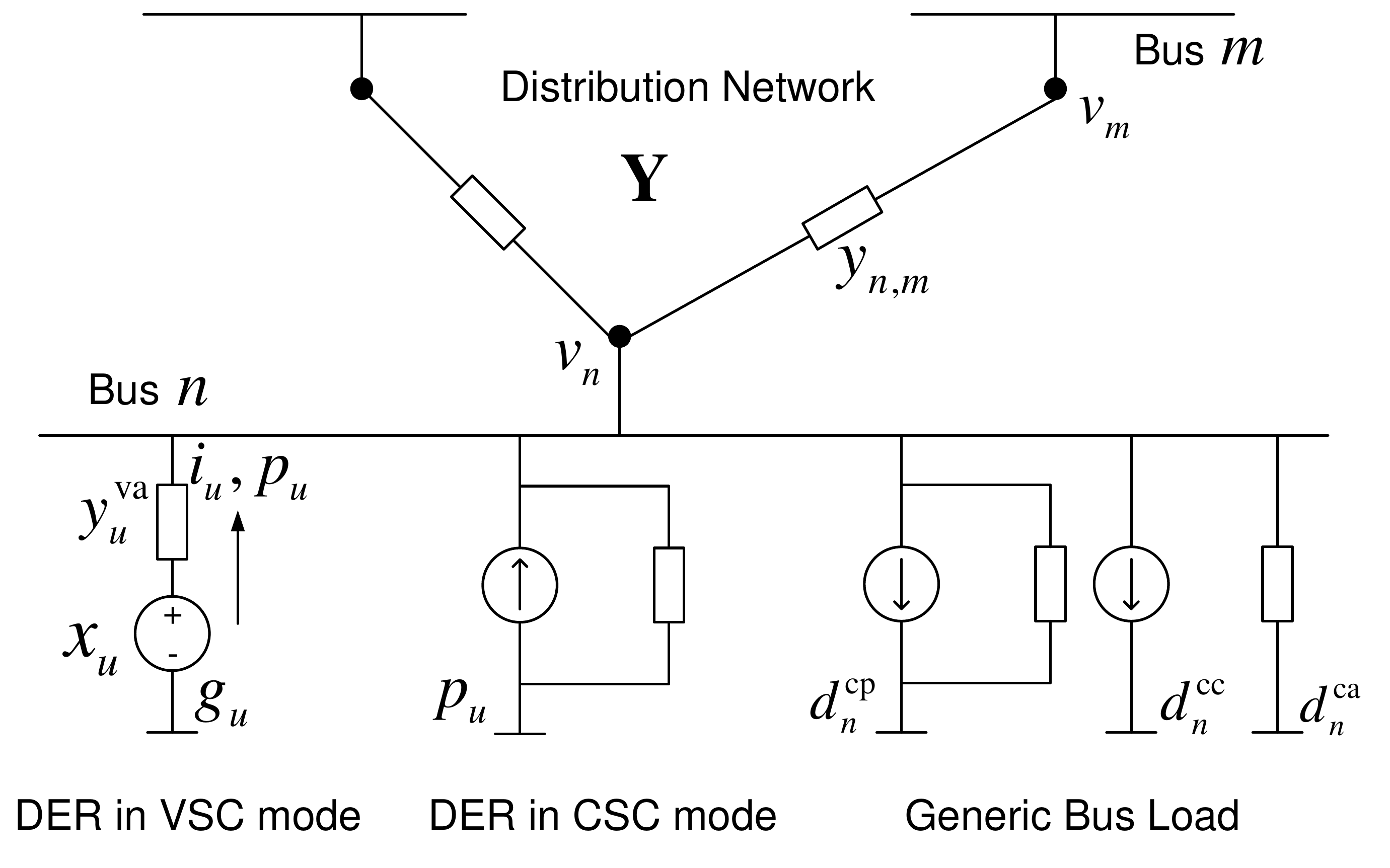}
\centering
\caption{System model of general multiple-bus DC MG in steady state.}
\label{MBMG}
\end{figure}

\subsubsection{Primary Control}
The DERs use PECs to interface the buses; the bus voltage $v_n$ and/or current $i_n$, i.e., power $p_n$ are locally controlled through decentralized primary controller, which is a software program executed by the PEC \cite{3}.
Two primary control schemes, i.e., \emph{modes} are commonly used, see Fig.~\ref{PrimaryControl}: 1) a closed loop \emph{Voltage Source Converter (VSC)}, and 2) an open loop \emph{Current Source Converter (CSC)}.
VSC regulates the bus voltage and current of the DER as the loads/generation in the system change in order to keep the bus voltage within predefined margins and foster fair power sharing.
It contains fast inner and slow outer control loops.
An inner control loop consists of a cascade of voltage and current loops with control bandwidth of the order of several tens of $\text{kHz}$, equal to the sampling frequency $\phi_S$ of the converter. Its role is to maintain the output bus voltage $v_n$ to specific reference value, dictated by the outer control loop.
The outer control loop is closed via filtered current feedback, and is slower than the inner control loop by an order of magnitude.
The current feedback generates the reference value for the inner voltage loop, via the following steady state control law:

\begin{equation}\label{eq:droop}
	v_n = x_{n} - (y_{n}^{\text{va}})^{-1}i_{n},~n\in\mathcal{N}.
\end{equation}
This is known as \emph{decentralized droop control} for DC MGs \cite{3,4} with 
two controllable parameters: 
the \emph{reference voltage} $x_{n}$ and the \emph{virtual conductance} $y_{n}^{\text{va}}$.
Their values are set (i) to keep the bus voltage, as closely as possible to the rated voltage $x$, within predefined margins $v_{\max}\leq v_n\leq v_{\min}$ for any $n\in\mathcal{N}$, and (ii) to enable fair power sharing among DERs based on their instantaneous generation capacities \cite{3}.
Fig.~\ref{VI} depicts a widespread droop control law that meets the above conditions, with droop control parameters set as follows:
\begin{align}\label{eq:droop_gen}
v_{\min} < x_n \leq v_{\max},\; y_n^{\text{va}} = \frac{g_n}{(x_n - \Delta v_n)\Delta v_n} \equiv s_n g_n,
\end{align}
where $s_n \equiv ((x_n - \Delta v_n)\Delta v_n)^{-1}$ is the \emph{droop slope} (in volts$^{-2}$).
The configuration $y_n^{\text{va}}=s_ng_n$ enables proportional power sharing among the DERs.
When the DER operates close to its capacity, the maximal voltage drop is $\Delta v_n$, $0 < \Delta v_n \leq x_n - v_{\min}$. 
In steady state, the droop-controlled VSC units are modeled as voltage sources in series with virtual conductance, see Fig.~\ref{MBMG}.

\begin{figure}[t]
\centering
\includegraphics[scale=0.35]{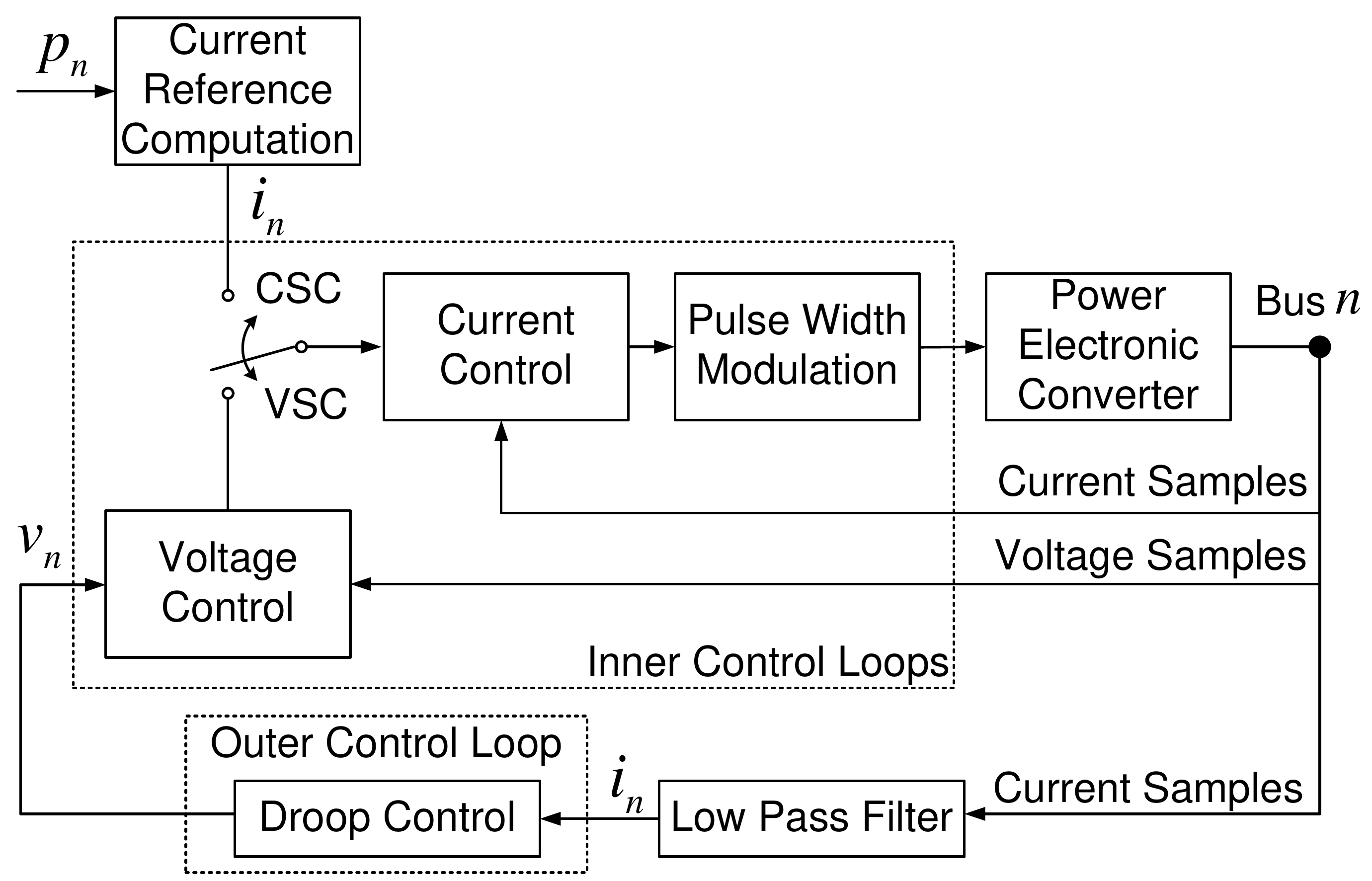}
\caption{Configuration of a primary controller: VSC and CSC modes.}
\label{PrimaryControl}
\end{figure}


\begin{figure}[t]
\centering
\includegraphics[scale=0.35]{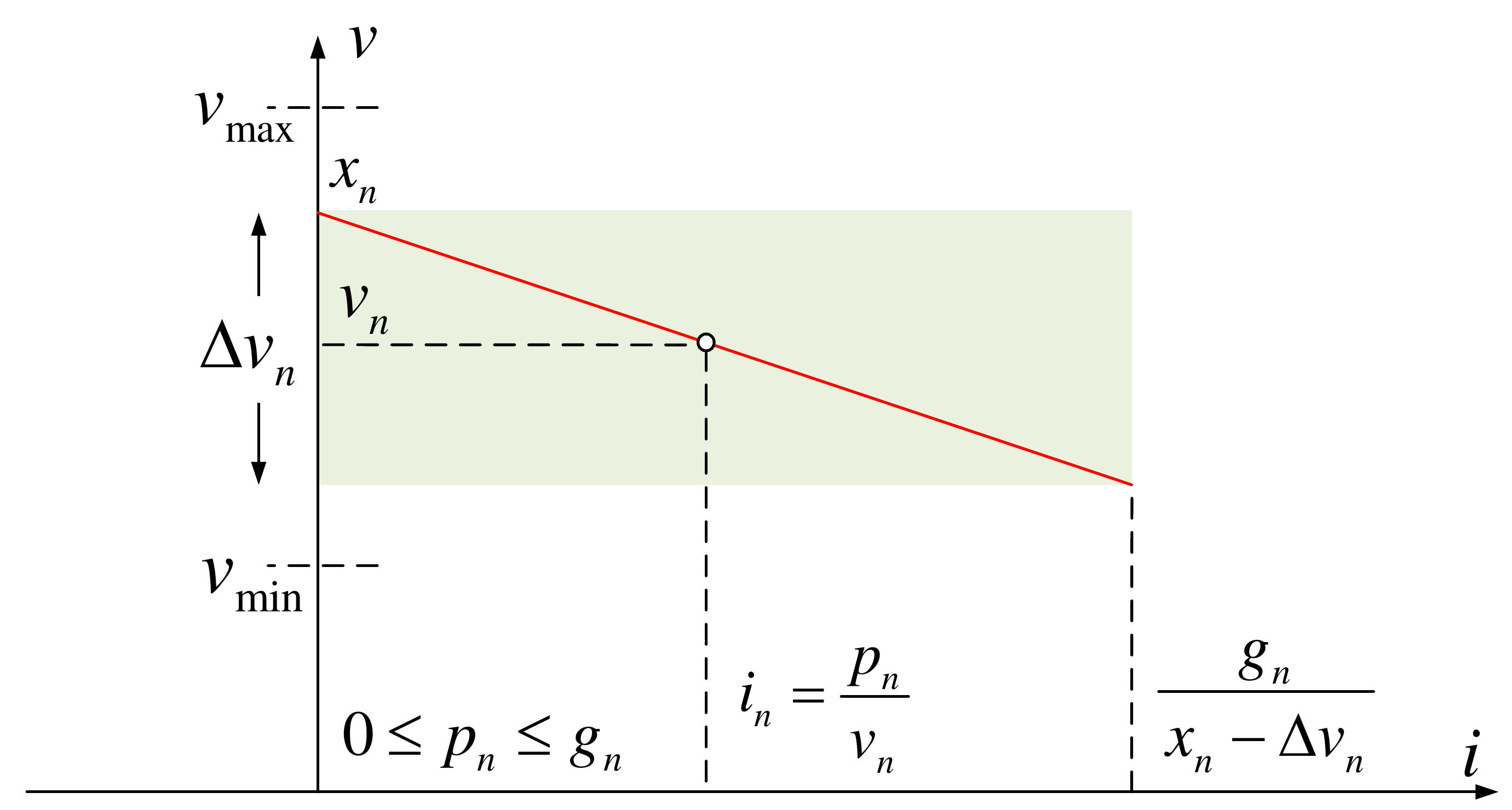}
\caption{$v-i$ diagram of droop-controlled DER in VSC mode. As the net load demand increases/decreases, controller $n\in\mathcal{N}$ responds by shifting the point $[v_n,i_n]$ down/up the (red) droop line.}
\label{VI}
\end{figure}

The other primary control mode CSC does not have outer control loop and inner voltage loops, see Fig.~\ref{PrimaryControl}. The reference for the inner current loop is generated via a separate algorithm that gets as an input fixed power reference \cite{3}.
Hence, a CSC acts as a constant power component, neither participating in voltage regulation nor power sharing.
It is modeled as a negative current source and parallel conductance, as in \eqref{eq:cpl} but with opposite sign. It is architecturally equivalent to a 
negative constant power load, see Fig.~\ref{MBMG}.

The subsets of DERs operating in VSC/CSC, denoted respectively by $\mathcal{N}^{\text{V}} /\mathcal{N}^{\text{C}}$, are determined 
\emph{dynamically} by the upper layer application, see Section~\ref{sec:DOED} for an example.
To support this dynamic operation, each converter is assumed to have \emph{dual mode}, and is capable to switch between VSC and CSC control mode seamlessly \cite{3,rev8}, see also Fig.~\ref{PrimaryControl}.

\subsection{Steady State Equations}
\label{sec:steady_state}

A DC MG is governed by Ohm's and Kirchhoff's laws, resulting in a system of $N$ steady state power balance equations for $N$ buses:
\begin{equation}\label{eq:power_balance_system}
\omega_n = 0,~n\in\mathcal{N},
\end{equation}  
with $\omega_n$ given by:
\begin{equation}\label{eq:MBMG_power_balance_compact}
\omega_n = v_n^2\left(\zeta_n y_n^{\text{va}} + \frac{1}{x^2}d_n^{\text{ca}} + \sum_{m\in\mathcal{N}}y_{n,m}\right) - v_n\sum_{m\in\mathcal{N}}v_my_{n,m} - v_n\left(\zeta_nx_ny_n^{\text{va}} - \frac{1}{x}d_n^{\text{cc}}\right) + d_n^{\text{cp}} - (1-\zeta_n)p_n. 
\end{equation}
The binary variable $\zeta_n$ in \eqref{eq:MBMG_power_balance_compact} is $1/0$ if DER $n$ is configured in VSC/CSC control mode, respectively. 
The system of equations is quadratic in the bus voltages, such that, in general, a closed form solution for $v_n,~n\in\mathcal{N}$ is not possible.
The non-linear nature of the power balance equations stems from the presence of constant power components \cite{29}, both constant power loads and CSCs.
Hence, in the case when $d_n^{\text{cp}}=0$ for all $n$ and $\mathcal{N}^{\text{C}}=\emptyset$, the system \eqref{eq:power_balance_system} becomes linear in the bus voltages.
Another special case with closed-form solution is the Single-Bus DC MG which we have studied separately \cite{21} due to its practical importance.

\section{Problem Formulation and Training Epoch}
\label{sec:problem}

The DC MG is \emph{not} connected to an external communication system and the PECs only have the local voltage/current measurements to work with.
To learn {(i) the generation capacities of remote DERs, (ii) the power demands of the loads and, (iii) the conductances of the distribution lines}, the controllers need to solve a \emph{decentralized system identification} problem, formulated below.

Before we begin, we list the main assumptions:
\begin{enumerate}[label=(\subscript{A}{\arabic*})]
\item The primary controllers are fully synchronized to a common time reference.
\item No prior knowledge on the generation capacities, load demands or the conductance matrix is used.
\item The rate of load/generation/topology variations is an order of magnitude smaller than the frequency of the primary controllers. 
\end{enumerate}



\subsection{Parameter Vector}
\label{sec:param}

Let $\mathbf{g} = [g_1,\hdots,g_N]^{\mathsf{T}}$ be a $N\times 1$ vector that collects the instantaneous generation capacities of all DERs in the MG.
Similarly, the instantaneous load demands are collected in separate $N\times 1$ vectors: $\mathbf{d}^{\text{ca}}=[d_1^{\text{ca}},\hdots,d_N^{\text{ca}}]^{\mathsf{T}}$, $\mathbf{d}^{\text{cc}}=[d_1^{\text{cc}},\hdots,d_N^{\text{cc}}]^{\mathsf{T}}$ and $\mathbf{d}^{\text{cp}}=[d_1^{\text{cp}},\hdots,d_N^{\text{cp}}]^{\mathsf{T}}$.
The $3N\times 1$ load demand vector is defined as $\mathbf{d}=[(\mathbf{d}^{\text{ca}})^{\mathsf{T}},\;(\mathbf{d}^{\text{cc}})^{\mathsf{T}},\;(\mathbf{d}^{\text{cp}})^{\mathsf{T}}]_{n\in\mathcal{N}}^{\mathsf{T}}$.
Further, we observe that $\mathbf{Y}$ is fully specified by its supra(infra)-diagonal elements, see \eqref{eq:Psi_def}.
We organize these elements in a vector $\boldsymbol{\psi}=[\hdots,y_{n,m},\hdots]^{\mathsf{T}}$, $n,m\in\mathcal{N}$, $m > n$, with dimension $\mathsf{dim}(\boldsymbol{\psi}) = \frac{1}{2}N(N-1)\times 1$.
Using $\boldsymbol{\psi}$, we can write $\mathbf{Y}$ as the weighted Laplacian $\mathbf{Y} = \mathbf{A}\mathsf{D}(\boldsymbol{\psi})\mathbf{A}^{\mathsf{T}}$, where $\mathbf{A}\in\left\{-1,0,1\right\}^{N\times\mathsf{dim}(\boldsymbol{\psi})}$ is the oriented incidence matrix \cite{rev9}.

The deterministic parameter vector $\boldsymbol{\theta}$ is defined as:
\begin{equation}\label{eq:param_k_general}
\boldsymbol{\theta} = [\mathbf{g}^{\mathsf{T}},\mathbf{d}^{\mathsf{T}},\boldsymbol{\psi}^{\mathsf{T}}]^{\mathsf{T}},
\end{equation}
with dimension $\mathsf{dim}(\boldsymbol{\theta}) = \frac{1}{2}N(N+7)\times 1$.
From the discussion in Section~\ref{sec:steady_state}, the steady state bus voltage $v_n$ depends on $\boldsymbol{\theta}$, see eq. \eqref{eq:droop_gen}, \eqref{eq:MBMG_power_balance_compact}.
This suggests that an arbitrary controller can infer the parameter vector $\boldsymbol{\theta}$ locally, using local measurements of the steady state bus voltage (see also \cite{ref:new_last_6} and references therein for similar approaches).
However, it is impossible to determine $\boldsymbol{\theta}$ uniquely in classical, non-Bayesian estimation framework, using only a single observation of the local steady state bus voltage.
To address this issue, the following subsection introduces a technique based on \emph{decentralized training} via \emph{primary control perturbations}. 

\begin{figure}[t]
\centering
\includegraphics[scale=0.35]{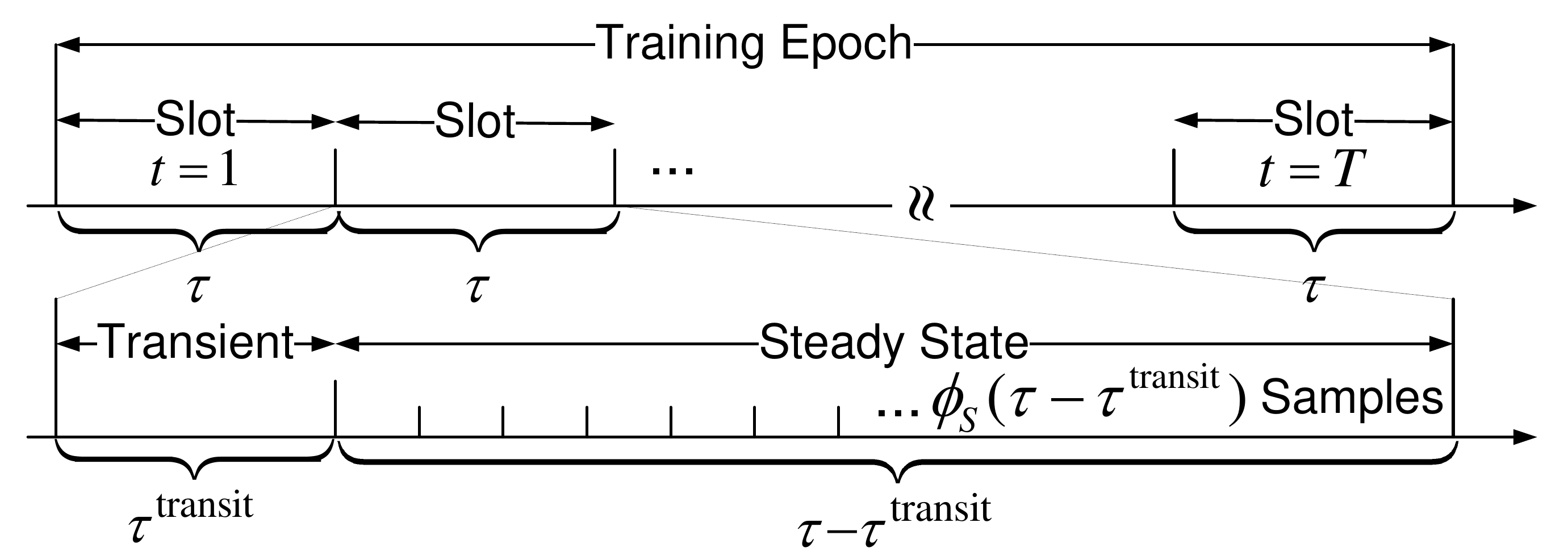}
\caption{Training epoch: time-slotted organization.}
\label{TimeAxis}
\end{figure}

\subsection{Training Protocol and Training Sequences}
\label{sec:train}

We introduce a dedicated \emph{training epoch} of predefined 
duration, in which (i) all controllers switch to VSC mode using a droop control law of the form \eqref{eq:droop_gen}, (ii) perturb their local droop control parameters, causing deviations of the bus voltages, and (iii) measure the local bus voltage response, collecting sequences of steady state bus voltage measurements.
The training epoch design uses the assumptions $(A_1)$ and $(A_2)$.
Specifically, the time axis during the training epoch is divided into $T$ \emph{time slots}, see Fig.~\ref{TimeAxis}, and \emph{all} controllers are synchronized to this structure.
We index each slot with $t\in\mathcal{T}=\left\{1,\hdots,T\right\}$.
The slot duration $\tau$ complies with the control bandwidth of the primary control loops, allowing the bus to reach a steady state after a {transient time} $\tau^{\text{transit}}\ll \tau$, yielding $\phi_S(\tau-\tau^{\text{transit}})$ voltage samples per slot for each controller, see Fig.~\ref{TimeAxis}.
The system constant $\tau^{\text{transit}}$, usually several milliseconds \cite{3},
is determined by the sampling frequency $\phi_S$ and the line capacitors.
Following $(A_2)$, $\boldsymbol{\theta}$ can be assumed to remain constant during the training epoch.

We use $\tilde{\cdot}$ to denote the unperturbed, i.e., \emph{nominal} droop control parameters during the training epoch; we use the law \eqref{eq:droop_gen} with equal reference voltages and droop slopes:
\begin{equation}\label{eq:nominal_droop}
\tilde{x}_n \equiv \tilde{x},~\Delta\tilde{v}_n \equiv \Delta\tilde{v},~\tilde{s}_n \equiv \tilde{s},~n\in\mathcal{N}. 
\end{equation}
In slot $t$, all controllers \emph{simultaneously} perturb the reference voltages and droop slopes, according to \emph{perturbation signals} $x_n(t)\neq \tilde{x}$, $s_n(t)\neq \tilde{s}$, $n\in\mathcal{N}$; they are organized in $T\times N$ \emph{training matrices} $\mathbf{X}$, $\mathbf{S}$, defined as $[\mathbf{X}]_{t,n}=x_n(t)$ and $[\mathbf{S}]_{t,n}=s_n(t)$, $n\in\mathcal{N}$, $t\in\mathcal{T}$.
The columns $\mathbf{x}_{n}$/$\mathbf{s}_{n}$ of $\mathbf{X}$/$\mathbf{S}$, correspond to the \emph{training sequence} injected by controller $n$.

\subsection{Steady State Bus Voltages and Measurement Vectors}
\label{sec:meas}
 
The steady state bus voltage $\tilde{v}_n$ corresponds to the nominal, unperturbed, droop parameters $\tilde{x}_n$, $\tilde{s}_n$.
The steady state bus voltage response in the $t-$th slot is $v_n(t)\neq \tilde{v}_n,~n\in\mathcal{N}$. The $T\times N$ steady state bus voltage matrix $\mathbf{V}$ is defined as $[\mathbf{V}]_{t,n}=v_n(t)$, $n\in\mathcal{N}$, $t\in\mathcal{T}$. The following proposition characterizes 
$\mathbf{V}$ in terms of $\mathbf{X}$, $\mathbf{S}$ and $\boldsymbol{\theta}$:

\begin{proposition}\label{prop2}
The steady state of DC MG during the training epoch is characterized by the implicit power balance equation:
\begin{align}\label{eq:g_cond_slot}
\mathbf{\Omega} & = \mathbf{0}_{T\times N},
\end{align}
where $\mathbf{\Omega}:[v_{\min},\;v_{\max}]^{T\times N}\times\mathbb{X}\times\mathbb{S}\times\mathbb{R}^{\mathsf{dim}(\boldsymbol{\theta})}\mapsto\mathbf{0}_{T\times N}$ is defined as $[\mathbf{\Omega}]_{t,n}=\omega_n(t),~n\in\mathcal{N},~t\in\mathcal{T}$, and given by:
\begin{align}\label{eq:PowerBalanceMatrix}
	\mathbf{\Omega} = \bigg( \mathbf{S}\mathsf{D}(\mathbf{g}) + \frac{1}{x^2}\mathbf{1}_{T}(\mathbf{d}^{\text{ca}})^{\mathsf{T}} \bigg)\odot\mathbf{V}\odot\mathbf{V} + (\mathbf{V}\mathbf{Y})\odot\mathbf{V} - \bigg((\mathbf{S}\odot\mathbf{X})\mathsf{D}(\mathbf{g}) - \frac{1}{x}\mathbf{1}_{T}(\mathbf{d}^{\text{cc}})^{\mathsf{T}}\bigg)\odot \mathbf{V} + \mathbf{1}_{T}(\mathbf{d}^{\text{cp}})^{\mathsf{T}}.
\end{align}
The subsets $\mathbb{X}\subset\mathbb{R}^{T\times N}$ and $\mathbb{S}\subset\mathbb{R}^{T\times N}$ comprise all training matrices $\mathbf{X}$ and $\mathbf{S}$ that keep $\mathbf{V}$ within $[v_{\min},\;v_{\max}]^{T\times N}$.
\end{proposition}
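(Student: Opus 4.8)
The plan is to obtain \eqref{eq:PowerBalanceMatrix} simply by stacking the scalar power-balance equation \eqref{eq:MBMG_power_balance_compact} across all buses $n\in\mathcal{N}$ and all training slots $t\in\mathcal{T}$, and then recognizing each resulting sum as a matrix product or Hadamard product. First I would fix a slot $t$ and a bus $n$, and write down \eqref{eq:MBMG_power_balance_compact} with the substitutions dictated by the training protocol: all converters are in VSC mode (so $\zeta_n=1$ and the term $(1-\zeta_n)p_n$ vanishes), the droop parameters are the perturbed values $x_n(t)$ and $y_n^{\text{va}}(t)=s_n(t)g_n$ from \eqref{eq:droop_gen}--\eqref{eq:nominal_droop}, and the bus voltage is the response $v_n(t)$. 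This yields
\begin{equation*}
\omega_n(t) = v_n(t)^2\Big(s_n(t)g_n + \tfrac{1}{x^2}d_n^{\text{ca}} + \textstyle\sum_{m}y_{n,m}\Big) - v_n(t)\textstyle\sum_{m}v_m(t)y_{n,m} - v_n(t)\big(s_n(t)x_n(t)g_n - \tfrac{1}{x}d_n^{\text{cc}}\big) + d_n^{\text{cp}}.
\end{equation*}

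Next I would match each of the four groups of terms to the $(t,n)$ entry of one of the four summands in \eqref{eq:PowerBalanceMatrix}. The coefficient $s_n(t)g_n$ is $[\mathbf{S}\mathsf{D}(\mathbf{g})]_{t,n}$, and $\tfrac1{x^2}d_n^{\text{ca}}$ is $[\tfrac1{x^2}\mathbf{1}_T(\mathbf{d}^{\text{ca}})^{\mathsf{T}}]_{t,n}$; multiplying by $v_n(t)^2$ is exactly the Hadamard product with $\mathbf{V}\odot\mathbf{V}$, giving the first summand. For the second summand, note that $\sum_m v_m(t)y_{n,m} = [\mathbf{V}\mathbf{Y}]_{t,n}$ once we use \eqref{eq:Psi_def}: the diagonal entry $[\mathbf{Y}]_{n,n}=\sum_j y_{n,j}$ combines with the off-diagonal entries $-y_{n,m}$ so that $[\mathbf{V}\mathbf{Y}]_{t,n} = v_n(t)\sum_m y_{n,m} - \sum_{m}v_m(t)y_{n,m}$, which is precisely the combination $v_n(t)^2\sum_m y_{n,m} - v_n(t)\sum_m v_m(t)y_{n,m}$ after the Hadamard product with $\mathbf{V}$. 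The third summand uses $[\mathbf{S}\odot\mathbf{X}]_{t,n}=s_n(t)x_n(t)$, so $[(\mathbf{S}\odot\mathbf{X})\mathsf{D}(\mathbf{g})]_{t,n}=s_n(t)x_n(t)g_n$, again minus $\tfrac1x d_n^{\text{cc}}$ carried by $\tfrac1x\mathbf{1}_T(\mathbf{d}^{\text{cc}})^{\mathsf{T}}$, all Hadamard-multiplied by $\mathbf{V}$. The fourth summand $d_n^{\text{cp}}$ is $[\mathbf{1}_T(\mathbf{d}^{\text{cp}})^{\mathsf{T}}]_{t,n}$. Assembling, $[\mathbf{\Omega}]_{t,n}=\omega_n(t)$ for all $t,n$, which is \eqref{eq:PowerBalanceMatrix}, and the system \eqref{eq:power_balance_system} applied slotwise gives \eqref{eq:g_cond_slot}.

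The only genuinely non-mechanical point — and the step I would treat most carefully — is the bookkeeping that turns the scalar Kirchhoff sum $v_n\sum_m v_m y_{n,m}$ together with the self term $v_n^2\sum_m y_{n,m}$ into the single clean expression $(\mathbf{V}\mathbf{Y})\odot\mathbf{V}$; this is where the sign conventions in the Laplacian-type definition \eqref{eq:Psi_def} must be tracked exactly, and it is worth stating explicitly that $\mathbf{Y}\mathbf{1}_N$ need not vanish because the ZIP/constant-conductance loads and the line conductances to ground are folded into the diagonal here. Everything else is a direct transcription. Finally, the domain statement is immediate: by construction of the training protocol the voltage responses $v_n(t)$ are the physical steady-state solutions, which lie in $[v_{\min},v_{\max}]$, and $\mathbb{X}$, $\mathbb{S}$ are defined as exactly those training matrices for which this holds, so $\mathbf{\Omega}$ is well-defined on the stated product domain and vanishes there. $\hfill\blacksquare$
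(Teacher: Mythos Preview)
Your proposal is correct and follows essentially the same approach as the paper's proof: both specialize \eqref{eq:MBMG_power_balance_compact} to the training configuration ($\zeta_n=1$, $y_n^{\text{va}}=s_n(t)g_n$) and then stack the scalar identities across $n$ and $t$, recognizing each block as the corresponding matrix/Hadamard product. The paper proceeds by first assembling a slot-wise column vector $\boldsymbol{\omega}^t$ and then stacking rows, whereas you identify the $(t,n)$ entry of each summand directly; these are the same computation in a different order.

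One small correction to your side remark: in this paper $\mathbf{Y}$ is exactly the weighted graph Laplacian of the bus--bus network as defined in \eqref{eq:Psi_def}, so in fact $\mathbf{Y}\mathbf{1}_N=\mathbf{0}_N$. The constant-conductance load terms $x^{-2}d_n^{\text{ca}}$ are \emph{not} folded into the diagonal of $\mathbf{Y}$; they appear separately as the summand $\tfrac{1}{x^2}\mathbf{1}_T(\mathbf{d}^{\text{ca}})^{\mathsf{T}}$ in \eqref{eq:PowerBalanceMatrix}. This does not affect your argument, since the identity $[(\mathbf{V}\mathbf{Y})\odot\mathbf{V}]_{t,n}=v_n(t)^2\sum_m y_{n,m}-v_n(t)\sum_m v_m(t)y_{n,m}$ holds regardless of whether $\mathbf{Y}\mathbf{1}_N$ vanishes, but the parenthetical claim about loads and ground conductances being absorbed into $\mathbf{Y}$ should be dropped.
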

\begin{proof}
See Appendix~\ref{app:propI}.
\end{proof}

The power balance equation \eqref{eq:g_cond_slot} reflects the requirement to keep the system balanced and stable, i.e., in a valid (albeit suboptimal) operating point, in each slot during training.
It also gives an \emph{implicit} relation between $\mathbf{V}$ and $\boldsymbol{\theta}$, since \eqref{eq:g_cond_slot} cannot be solved in closed form for $\mathbf{V}$. 

The $n-$th controller measures the $n$-th column $\mathbf{v}_{n}$ of $\mathbf{V}$ during the training epoch.
The noisy measurement obtained by controller $n$ in slot $t$ is an average of multiple voltage samples collected during the steady state period of the slot, and can be written as $w_n(t) = v_n(t) + z_n(t)$ with $z_n(t)$ denoting the additive noise.
The $T\times N$ bus-voltage measurements matrix $\mathbf{W}$, with $[\mathbf{W}]_{t,n} = w_{n}(t)$, $n\in\mathcal{N}$, $t\in\mathcal{T}$, is given as: 
\begin{equation}\label{eq:noisy_measurement}
\mathbf{W} = \mathbf{V} + \mathbf{Z},
\end{equation}
where $\mathbf{Z}$ represents the noise and $\mathsf{vec}(\mathbf{Z})$ is a zero-mean, white Gaussian random vector with standard deviation $\sigma$ \cite{rev11}, such that the probability density function (pdf) of $\mathsf{vec}(\mathbf{W})$ is:
\begin{equation}\label{eq:pdf1}
 \rho(\mathsf{vec}(\mathbf{W});\boldsymbol{\theta}) = \mathsf{N}(\mathsf{vec}({\mathbf{V}}),\sigma^2\mathbf{I}_{NT}).
\end{equation}

The decentralized system identification problem for DC MGs is about devising an \emph{efficient} and \emph{unbiased} estimator of the local parameter vector $\boldsymbol{\theta}_{-n}$, denoted with $\hat{\boldsymbol{\theta}}_{-n}$, using only local bus voltage measurements $\mathbf{w}_{n}$, for any $n\in\mathcal{N}$.

\subsection{Relaxing Assumptions $(A_1)-(A_3)$}
\label{sec:relax}

We briefly discuss the implications that arise when assumptions $(A_1)-(A_3)$ are no longer valid; addressing these implications is out of the paper's scope.
We start with $(A_1)$, as the strongest assumption.
Maintaining precise synchronization among the controllers on the level of slot and training epoch can be easily achieved if the PECs are equipped with GPS modules.
Alternatively, one can use common decentralized network synchronization approaches, typically used in sensor networks \cite{rev4}. Since the method operates in a time scale in the order of milliseconds, it should be significantly easier to maintain (at least coarse) synchronization for long periods of time.
Finally, if synchronization is not possible, and the controllers inject perturbation signals without any prior coordination, then the formulation of the problem should be modified accordingly to account for asynchronous training.
For instance, the parameter vector should be extended to include binary variables that capture the activity patterns of the controllers and the start times of individual training sequences, as well as their end times in case of variable training sequence durations. 

Assumption $(A_2)$ simply casts our problem in classical estimation framework.
In practice, prior knowledge is always available to some extent; in fact, $\boldsymbol{\theta}$ can be assumed to evolve over time following a stochastic process, paving the way for formulating the identification problem in sophisticated Bayesian filtering/prediction framework \cite{32}.
Nevertheless, the analysis of the non-Bayesian case naturally comes first.  

We use assumption $(A_3)$ to postulate that $\boldsymbol{\theta}$ remains fixed during training, which is not true in general.
In practice, $\boldsymbol{\theta}$ might change at any time due to load/generation variation or a system fault.
To incorporate this notion we should reformulate the problem accordingly.
One way is to first relax assumption $(A_2)$ and model the dynamic evolution of $\boldsymbol{\theta}$ via stochastic process, where relaxing assumption $(A_3)$ arises naturally.
We can avoid relaxing $(A_2)$ and still use the classical framework as presented in the paper, but with modified definition of the parameter vector.
For instance, let us assume that $\boldsymbol{\theta}$ has changed no more than $J\geq 0$ times during training; then, the parameter vector should comprise $J+1$ different values for $\boldsymbol{\theta}$ as defined in \eqref{eq:param_k_general}, in addition to the time instances when the changes have occurred.
Such formulations in the literature are known as \emph{model change detection}, see \cite{rev3}. 

\section{Decentralized Generation, Demand and Topology Estimation}
\label{sec:main1}

\subsection{Preliminaries and Notation}

In the case when the controllers do not not have any knowledge of the steady state bus voltages at remote buses, the system is \emph{not observable}; hence $\boldsymbol{\theta}_{-n}$ cannot be uniquely identified in classical, non-Bayesian sense (see Appendix~\ref{app:nonident}). 

Motivated by the ideas in \cite{18}, we propose a decentralized solution that splits the slots into two consecutive \emph{training phases}: (i) \emph{measurement} phase, denoted as ${M}$-phase, and (ii) \emph{communication} phase, denoted as ${C}$-phase.
The slots in the ${C}$-phase are used to \emph{disseminate} the local steady state voltage measurements acquired in the ${M}$-phase to remote controllers via \emph{amplitude modulation} of the reference voltage perturbation signals.
Each controller then uses a sequential-type of demodulator to process the local bus voltage measurements acquired in the $C$-phase and acquire full knowledge of the portion of $\mathbf{W}$ that corresponds to the ${M}$-phase. 
If the training matrices in the ${M}$-phase satisfy predefined conditions, elaborated in subsection~\ref{sec:suff_exct}, then knowing only the ${M}$-phase portion of $\mathbf{W}$ is sufficient to uniquely estimate the parameter vector locally.

\begin{figure}[t]
\centering
\includegraphics[scale=0.35]{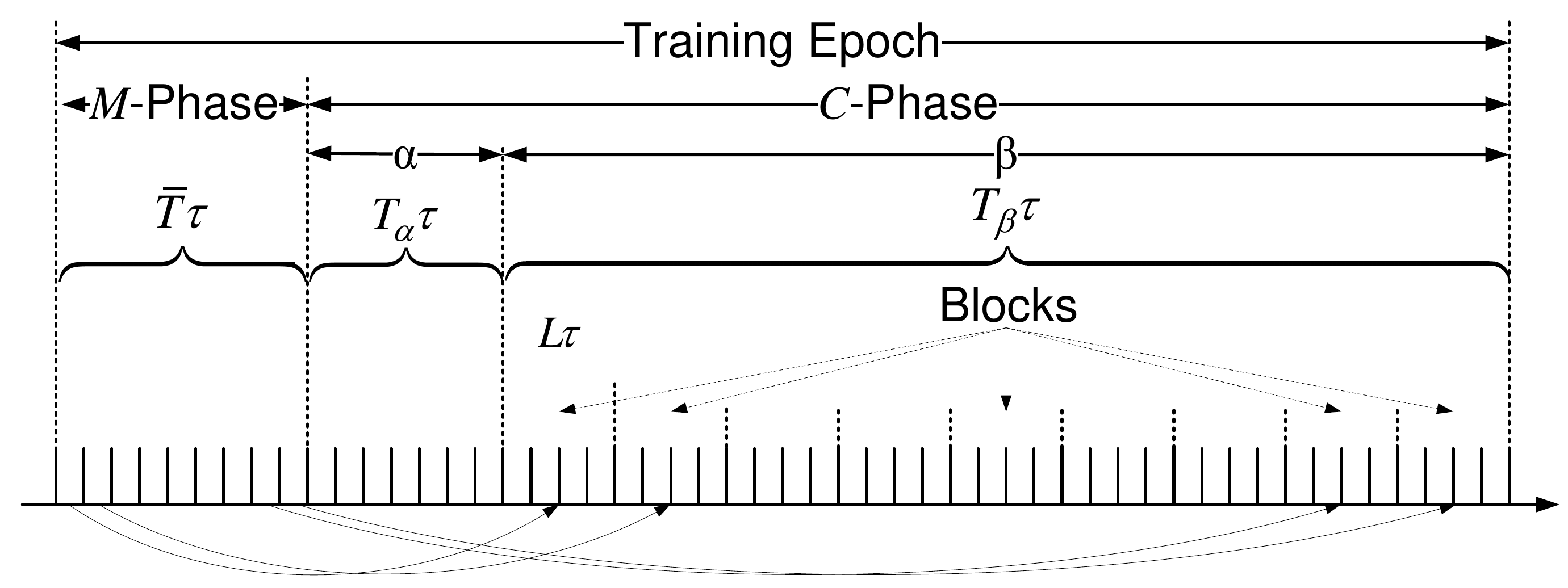}
\caption{Proposed training epoch organization: phases, sub-phases and blocks.}
\label{TimeAxis2}
\end{figure}

The temporal organization of the proposed training protocol is depicted in Fig.~\ref{TimeAxis2}, see also Fig.~\ref{PhasesSubphases}.
The $C$-phase is further split into sub-phases $\alpha$ (\emph{channel estimation} sub-phase) and $\beta$ (\emph{modulation and demodulation} sub-phase).
The $M$-phase contains the first $\overline{T}$ slots, indexed in $\overline{\mathcal{T}}=\left\{1,\hdots,\overline{T}\right\}$, the sub-phase $\alpha$ takes the subsequent $T^{\alpha}$ slots indexed in $\mathcal{T}^{\alpha}=\left\{\overline{T}+1,\hdots,\overline{T}+T^{\alpha}\right\}$ and the sub-phase $\beta$ comprises the remaining $T^{\beta}=T-\overline{T}-T^{\alpha}$ slots indexed in $\mathcal{T}^{\beta} = \left\{\overline{T}+T^{\alpha}+1,\hdots,T\right\}$.
The sub-phase $\beta$ is further split into $\overline{T}$ \emph{blocks}, one for each slot in the $M$-phase, see Fig.~\ref{TimeAxis2}; hence, the blocks are indexed in $\overline{\mathcal{T}}$.
Each block is formed by $L$ consecutive time slots, such that $L\overline{T} = T^{\beta}$.
We write $\mathcal{T}^{\beta}=\cup_{b\in\overline{\mathcal{T}}}\mathcal{T}^{\beta;b}$ where $\mathcal{T}^{\beta;b}=\left\{\overline{T} + T^{\alpha} + (b-1)L + 1,\hdots,\overline{T} + T^{\alpha} + bL\right\},~b\in\overline{\mathcal{T}}$ is the set indexing the slots in block $b$.
As elaborated in subsection~\ref{sec:phases_subphases}, in block $b$, the controllers disseminate the measurements obtained in slot $b$ in the $M$-phase, see Fig.~\ref{PhasesSubphases}.
We introduce notation corresponding to (sub-)phase-wise and block-wise partition of the matrices $\mathbf{W}$, $\mathbf{X}$, $\mathbf{S}$, $\mathbf{V}$ and $\mathbf{\Omega}$.
Take the measurement matrix $\mathbf{W}$ as an example (analogous notation applies to $\mathbf{X}$, $\mathbf{S}$, $\mathbf{V}$ and $\mathbf{\Omega}$); it can be partitioned as, see Fig.~\ref{PhasesSubphases}:
\begin{equation}\nonumber
\mathbf{W} = 
\begin{bmatrix}
\overline{\mathbf{W}} \\
\mathbf{W}^{\alpha} \\
\mathbf{W}^{\beta}
\end{bmatrix},\;
\mathbf{W}^{\beta} =
\begin{bmatrix}
\mathbf{W}^{\beta;1} \\
\vdots\\
\mathbf{W}^{\beta;\overline{T}}
\end{bmatrix}.
\end{equation}
The $\overline{T}\times N$ matrix $\overline{\mathbf{W}}$, with $[\overline{\mathbf{W}}]_{t,n}=w_n(t),~n\in\mathcal{N},~t\in\overline{\mathcal{T}}$,  contains the steady state bus voltage measurements from the $M$-phase; $\mathbf{W}^{\alpha}$, $\mathbf{W}^{\beta}$ as well as each of the matrices $\mathbf{W}^{\beta;b},~b\in\overline{\mathcal{T}}$ are defined analogously.
$\overline{\mathbf{w}}_n$ denotes the $n-$th column of $\overline{\mathbf{W}}$; analogous notation applies to the other matrices.

\subsection{Sufficient Excitation}
\label{sec:suff_exct}

The purpose of the $C$-phase is to enable each controller to learn $\overline{\mathbf{W}}$, which is sufficient to generate locally a unique estimate of $\boldsymbol{\theta}_{-n}$ for any $n\in\mathcal{N}$, if and only if the Jacobians of $\mathsf{vec}(\overline{\mathbf{\Omega}})$ w.r.t. $\boldsymbol{\theta}_{-n}$ and $\mathsf{vec}(\overline{\mathbf{V}})$, denoted with $\mathbf{\Upsilon}_{-n}$ and $\mathbf{\Gamma}$, respectively, satisfy the rank conditions:
\begin{align}\label{eq:suff_cond1}
\mathsf{rank}(\mathbf{\Upsilon}_{-n}) & = \mathsf{dim}(\boldsymbol{\theta}_{-n}),\\\label{eq:suff_cond2}
\mathsf{rank}(\mathbf{\Gamma}) & = N\overline{T},
\end{align}
for any $n\in\mathcal{N}$.
The sufficient excitation conditions provide practical guidelines for designing the training matrices $\overline{\mathbf{X}}$ and $\overline{\mathbf{S}}$; this is further discussed in subsection~\ref{sec:disc}.

We note that the vectorization of $\overline{\mathbf{\Omega}}$ is \emph{linear} in $\boldsymbol{\theta}$:
\begin{equation}\label{eq:g_cond_slot_lin_param}
\mathsf{vec}(\overline{\mathbf{\Omega}}) = \mathbf{\Upsilon}\boldsymbol{\theta} =\mathbf{0}_{\overline{T}N}.
\end{equation}
In fact, it can be shown that it is always linear in $\mathbf{d}$ and $\boldsymbol{\psi}$; however, the linearity in $\mathbf{g}$ is a direct corollary of the virtual resistance configuration \eqref{eq:droop_gen} for proportional power sharing based on the instantaneous generation capacities.
This result is useful for finding good initial estimates of $\boldsymbol{\theta}_{-n}$ which will be used to initialize the iterative algorithm.

\begin{figure}[t]
\centering
\includegraphics[scale=0.35]{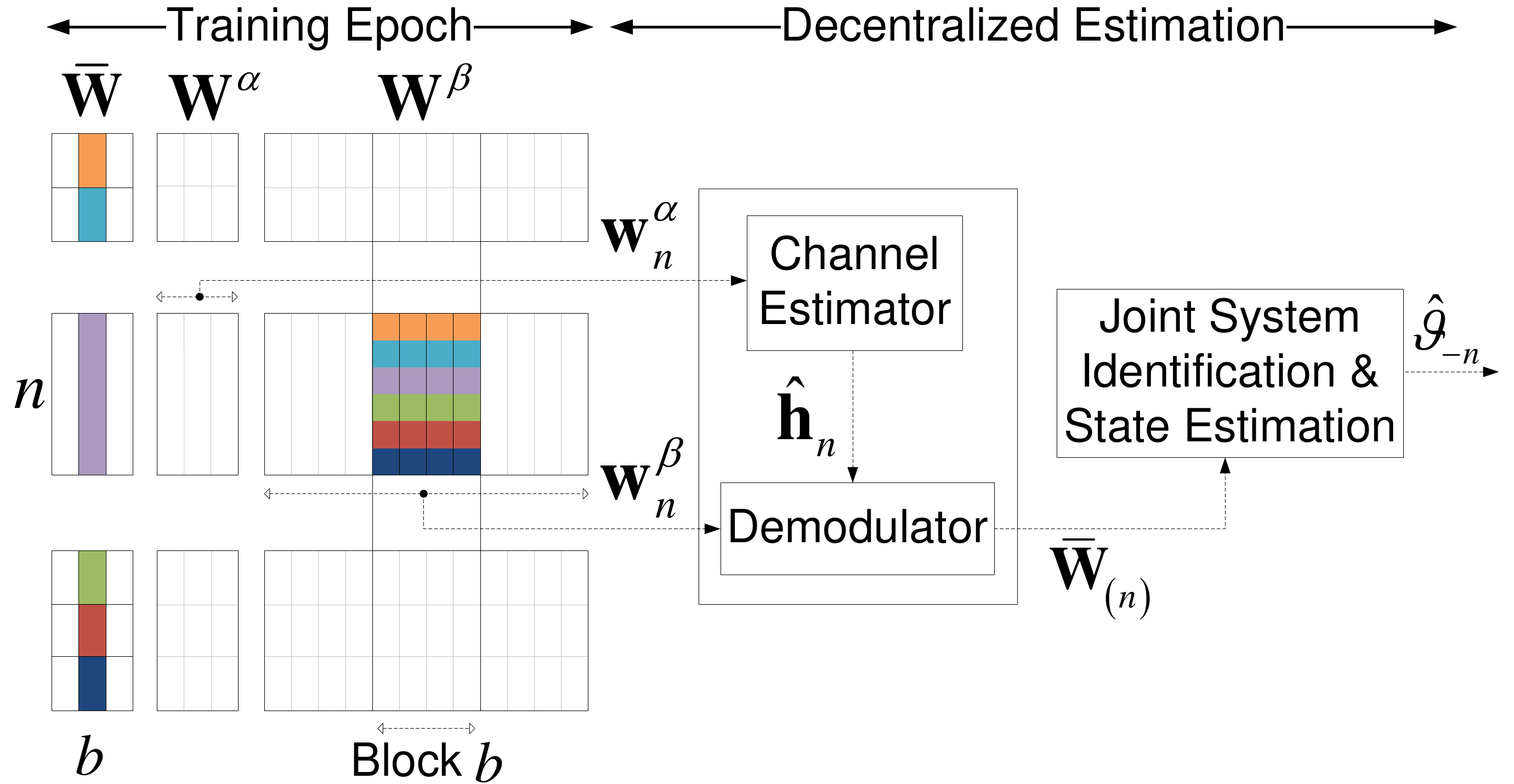}
\caption{Proposed decentralized solution.} 
\label{PhasesSubphases}
\end{figure}


\subsection{Training Phases and Sub-phases}
\label{sec:phases_subphases}

In the $M$-phase, the $n-$th controller obtains $\overline{\mathbf{w}}_n$, 
the $n$-th column of $\overline{\mathbf{W}}$.
Learning the remaining columns $\overline{\mathbf{w}}_n,~n\neq m$ and obtaining local copy of $\overline{\mathbf{W}}$, denoted with $\overline{\mathbf{W}}_{(n)}$, is done in the $C$-phase where controller $n$ disseminates $\overline{\mathbf{w}}_n$ to remote controllers by modulating the amplitudes of the reference voltage deviations and, in the same time, demodulates $\overline{\mathbf{w}}_m,~m\neq n$ from the locally available measurements $\mathbf{w}_n^{\alpha/\beta}$ via sequential demodulator, see Fig.~\ref{PhasesSubphases}.

In the $C$-phase, we adopt the following perturbation signals:
\begin{align}
x_n(t) = \tilde{x} + \sqrt{\pi_n(t)}\Delta x_n(t),~s_n(t) = \tilde{s},~n\in\mathcal{N},~t\in\mathcal{T}^{\alpha/\beta},
\end{align}
where $\Delta x_n(t)\in[-1,+1]$ is the reference voltage perturbation and {$\sqrt{\pi_n(t)}>0$ is the \emph{perturbation amplitude}}; hence, the droop slopes in the $C$-phase are kept fixed to the nominal value and the communication channel is established via the reference voltage perturbation signals.
The $C$-phase training matrices $\mathbf{X}^{\alpha}$ and $\mathbf{X}^{\beta}$ can then be written as follows:
\begin{equation}\label{eq:ref_vol_perturb}
	\mathbf{X}^{\alpha/\beta} = \tilde{{x}}\mathbf{1}_{T^{\alpha/\beta}\times N} + \mathbf{\Pi}^{\alpha/\beta}\odot\Delta\mathbf{X}^{\alpha/\beta},
\end{equation}
where $\Delta\mathbf{X}$ and $\mathbf{\Pi}$ are the reference voltage perturbation and perturbation amplitude matrices, defined as $[\Delta\mathbf{X}]_{t,n} = \Delta x_n(t)$ and $[\mathbf{\Pi}]_{t,n} = \sqrt{\pi_n(t)}$, $n\in\mathcal{N},~{t}\in\mathcal{T}^{\alpha/\beta}$, respectively.
To facilitate the design of the demodulator, we make the following small signal assumption: the reference voltage deviation amplitudes in the $C$-phase are relatively small w.r.t. the nominal reference voltage, i.e., $\pi_n(t) \ll {\tilde{x}}_n$, $n\in\mathcal{N}$, $t\in\mathcal{T}^{\alpha/\beta}$.
Using Taylor's series expansion, the signal collected by controller $n$ in the $C$-phase can be written as:
\begin{align}\label{eq:lin_approx_time}
 \mathbf{w}_n^{\alpha/\beta} \approx \tilde{v}_n\mathbf{1}_{T^{\alpha/\beta}} + (\mathbf{\Pi}^{\alpha/\beta}\odot\Delta\mathbf{X}^{\alpha/\beta})\mathbf{h}_n + \mathbf{z}_n^{\alpha/\beta}.
\end{align}
The model above defines the input-output relation of a real, linear, synchronous communication channel with \emph{channel vector} given by the gradient $\mathbf{h}_n$ (evaluated at the nominal droop values) which contains the real coefficients of the equivalent linear channels that controller $n$ sees to the other controllers; in localized and strongly connected MGs, the entries in $\mathbf{h}_n$ do not differ significantly (see also \cite{20}), i.e., the channel \eqref{eq:lin_approx_time} experiences strong all-to-all property.

We use the linear model to design sequential transceiver that operates as follows. 
First, in sub-phase $\alpha$, controller $k$ estimates $\mathbf{h}_k$; for this purpose, we fix the perturbation amplitudes to be all \emph{known} and \emph{equal} constants:
\begin{equation}\label{eq:pi_alpha}
	\sqrt{\pi_n(t)}=\sqrt{\pi^{\alpha}},~n\in\mathcal{N},~t\in\mathcal{T}^{\alpha}.
\end{equation}
Then, in sub-phase $\beta$ the controllers \emph{disseminate} the information acquired in the $M$-phase via the following linear amplitude modulation (without any additional error protection): 
\begin{equation}\label{eq:amp_beta}
	\sqrt{\pi_n(t)} = \sqrt{\pi^{\beta}}(\overline{w}_n(b) - \chi_n),~n\in\mathcal{N},~t\in\mathcal{T}^{\beta;b},~b\in\overline{\mathcal{T}},
\end{equation}
where $\pi^{\beta}$ and $\chi_{n}$ are known positive constants.
Clearly, $\pi_n(t)$ remains fixed in block $b\in\overline{\mathcal{T}}$, carrying the information about $\overline{w}_n(b)$ by embedding it into the amplitude of the perturbation signal $\Delta\mathbf{x}_n^{\beta;b}$.
The controllers operate in \emph{full duplex} transmission mode,  \emph{simultaneously} broadcasting and receiving one voltage measurement per block to/from all other controllers.\footnote{The scheme suits well channels with strong all-to-all property, i.e., channels where the gains in $\mathbf{h}_k$ do not differ significantly; this is the case for small and localized MGs. As the system grows in size and scope, the all-to-all property ceases to be valid and one should consider applying more sophisticated digital modulation/demodulation and scheduling schemes, including error protection coding; see \cite{19,20} for alternatives.} 

To guarantee the uniqueness of the local copies $\overline{\mathbf{W}}_{(n)}$, we restrict the columns of the reference voltage perturbation matrices $\Delta\mathbf{X}^{\alpha}$ and $\Delta\mathbf{X}^{\beta;b}$ to be zero mean and orthogonal:
\begin{align}\label{eq:train_alpha}
&(\Delta\mathbf{X}^{\alpha})^{\mathsf{T}}\mathbf{1}_{T^{\alpha}} =\mathbf{0}_N,~(\Delta\mathbf{X}^{\alpha})^{\mathsf{T}}\Delta\mathbf{X}^{\alpha} =\delta^{\alpha}\mathbf{I}_N,\\\label{eq:train_beta}
&(\Delta\mathbf{X}^{\beta;b})^{\mathsf{T}}\mathbf{1}_{L} =\mathbf{0}_N,~(\Delta\mathbf{X}^{\beta;b})^{\mathsf{T}}\Delta\mathbf{X}^{\beta;b} =\delta^{\beta}\mathbf{I}_N,
\end{align}
where $\delta^{\alpha}=\|\Delta\mathbf{x}_n^{\alpha}\|_2^2\leq T^{\alpha}$, $\delta^{\beta}=\|\Delta\mathbf{x}_n^{\beta;b}\|_2^2\leq L$, for every $n\in\mathcal{N}$, $b\in\overline{\mathcal{T}}$.
We note that the above assumptions are a bit restrictive.
Given the perturbation signals \eqref{eq:pi_alpha} and \eqref{eq:amp_beta} in sub-phases $\alpha$ and $\beta$, the sufficient conditions for uniqueness of $\overline{\mathbf{W}}_{(n)}$ for any $n\in\mathcal{N}$ are $\mathsf{rank}(\Delta\mathbf{X}^{\alpha})=\mathsf{rank}(\Delta\mathbf{X}^{\beta;b})=N$ for any $b\in\overline{\mathcal{T}}$; however, we use \eqref{eq:train_alpha}, \eqref{eq:train_beta} for convenience, namely, to obtain compact expression for ${\overline{\mathbf{W}}}_{(n)}$ without loosing generality.
Replacing \eqref{eq:pi_alpha} and \eqref{eq:amp_beta} in \eqref{eq:lin_approx_time} and using assumptions \eqref{eq:train_alpha}, \eqref{eq:train_beta}, we derive $\overline{\mathbf{W}}_{(n)}$: 

\begin{proposition}\label{prop:demodulator}
The local estimators of $\mathsf{vec}(\overline{\mathbf{W}})$ are given by:
\begin{align}\label{eq:Wk_full_est}
\mathsf{vec}({\overline{\mathbf{W}}}_{(n)}) = \frac{\sqrt{\pi^{\alpha}}\delta^{\alpha}}{\sqrt{\pi^{\beta}}\delta^{\beta}}\mathsf{D}^{-1}(\boldsymbol{\mathcal{X}}^{\alpha}\mathbf{w}_n^{\alpha})\sum_{b\in\overline{\mathcal{T}}}(\boldsymbol{\mathcal{X}}^{\beta;b}\mathbf{w}_n^{\beta;b}) + \boldsymbol{\mathcal{I}}\boldsymbol{\chi},
\end{align}
for any $n\in\mathcal{N}$; for notational brevity, we used $\boldsymbol{\mathcal{X}}^{\alpha} = (\Delta\mathbf{X}^{\alpha})^{\mathsf{T}}\otimes\mathbf{1}_{\overline{T}}$, $\boldsymbol{\mathcal{X}}^{\beta;b} = (\Delta\mathbf{X}^{\beta;b})^{\mathsf{T}}\otimes\mathbf{e}_{b}$, $\boldsymbol{\mathcal{I}}=\mathbf{I}_N\otimes\mathbf{1}_{\overline{T}}$ and $\boldsymbol{\chi} = [\chi_1,\hdots,\chi_N]^{\mathsf{T}}$.
\end{proposition}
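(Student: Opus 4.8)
The plan is to start from the linearized channel model \eqref{eq:lin_approx_time}, substitute the specific perturbation-amplitude choices \eqref{eq:pi_alpha} and \eqref{eq:amp_beta} into the sub-phase $\alpha$ and sub-phase $\beta$ measurement blocks, and then invert the resulting system using the orthogonality conditions \eqref{eq:train_alpha}--\eqref{eq:train_beta}. First I would write the model for sub-phase $\alpha$: plugging \eqref{eq:pi_alpha} into \eqref{eq:lin_approx_time} gives $\mathbf{w}_n^{\alpha}\approx \tilde v_n\mathbf{1}_{T^{\alpha}} + \sqrt{\pi^{\alpha}}\,\Delta\mathbf{X}^{\alpha}\mathbf{h}_n + \mathbf{z}_n^{\alpha}$. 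Left-multiplying by $(\Delta\mathbf{X}^{\alpha})^{\mathsf{T}}$ and invoking the zero-mean property $(\Delta\mathbf{X}^{\alpha})^{\mathsf{T}}\mathbf{1}_{T^{\alpha}}=\mathbf{0}_N$ kills the DC term, and the orthogonality $(\Delta\mathbf{X}^{\alpha})^{\mathsf{T}}\Delta\mathbf{X}^{\alpha}=\delta^{\alpha}\mathbf{I}_N$ leaves (ignoring noise) $(\Delta\mathbf{X}^{\alpha})^{\mathsf{T}}\mathbf{w}_n^{\alpha} = \sqrt{\pi^{\alpha}}\,\delta^{\alpha}\,\mathbf{h}_n$; this is the channel estimate. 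The key observation is that $\mathbf{h}_n$ is shared between sub-phases, so the $\alpha$ block provides exactly the scaling needed to normalize the $\beta$ block.

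Next I would treat block $b$ of sub-phase $\beta$. Substituting \eqref{eq:amp_beta} into \eqref{eq:lin_approx_time} for $t\in\mathcal{T}^{\beta;b}$ gives $\mathbf{w}_n^{\beta;b}\approx \tilde v_n\mathbf{1}_L + \sqrt{\pi^{\beta}}\,\Delta\mathbf{X}^{\beta;b}\,\mathsf{D}(\overline{\mathbf{w}}_{\cdot}(b)-\boldsymbol{\chi})\,\mathbf{h}_n + \mathbf{z}_n^{\beta;b}$, where $\overline{\mathbf{w}}_{\cdot}(b)$ is the vector $[\overline w_1(b),\dots,\overline w_N(b)]^{\mathsf{T}}$, i.e.\ row $b$ of $\overline{\mathbf{W}}$ transposed. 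Left-multiplying by $(\Delta\mathbf{X}^{\beta;b})^{\mathsf{T}}$ and using \eqref{eq:train_beta} removes the DC term and yields $(\Delta\mathbf{X}^{\beta;b})^{\mathsf{T}}\mathbf{w}_n^{\beta;b} = \sqrt{\pi^{\beta}}\,\delta^{\beta}\,\mathsf{D}(\overline{\mathbf{w}}_{\cdot}(b)-\boldsymbol{\chi})\,\mathbf{h}_n = \sqrt{\pi^{\beta}}\,\delta^{\beta}\,\mathsf{D}(\mathbf{h}_n)\,(\overline{\mathbf{w}}_{\cdot}(b)-\boldsymbol{\chi})$, using the commuting-diagonal identity $\mathsf{D}(\mathbf{a})\mathbf{b}=\mathsf{D}(\mathbf{b})\mathbf{a}$. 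Dividing component-wise by the channel estimate from sub-phase $\alpha$ and by the constant $\sqrt{\pi^{\beta}}\delta^{\beta}/(\sqrt{\pi^{\alpha}}\delta^{\alpha})$ recovers $\overline{\mathbf{w}}_{\cdot}(b)-\boldsymbol{\chi}$, i.e.\ the $b$-th row of $\overline{\mathbf{W}}$; adding back $\boldsymbol{\chi}$ finishes that row.

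Finally I would assemble the rows into the full matrix by vectorizing. The Kronecker gadgets $\boldsymbol{\mathcal{X}}^{\alpha}=(\Delta\mathbf{X}^{\alpha})^{\mathsf{T}}\otimes\mathbf{1}_{\overline T}$, $\boldsymbol{\mathcal{X}}^{\beta;b}=(\Delta\mathbf{X}^{\beta;b})^{\mathsf{T}}\otimes\mathbf{e}_b$, $\boldsymbol{\mathcal{I}}=\mathbf{I}_N\otimes\mathbf{1}_{\overline T}$ are exactly the bookkeeping devices that place the recovered $N$-vector for block $b$ into the $b$-th slot of each length-$\overline T$ stripe of $\mathsf{vec}(\overline{\mathbf{W}})$; here I would verify carefully that $\boldsymbol{\mathcal{X}}^{\alpha}\mathbf{w}_n^{\alpha}$ gives $\sqrt{\pi^{\alpha}}\delta^{\alpha}$ copies of $\mathbf{h}_n$ stacked (so that $\mathsf{D}^{-1}$ of it is the right per-entry rescaling across all $\overline T$ rows), that $\sum_b \boldsymbol{\mathcal{X}}^{\beta;b}\mathbf{w}_n^{\beta;b}$ correctly concatenates the $\overline T$ rescaled measurement rows column-major, and that $\boldsymbol{\mathcal{I}}\boldsymbol{\chi}$ adds $\chi_m$ to every entry of the $m$-th stripe. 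Matching these against \eqref{eq:Wk_full_est} closes the proof. The main obstacle I anticipate is purely notational: getting the Kronecker/$\mathsf{vec}$ index alignment exactly right — in particular confirming that the ordering convention used in $\mathsf{vec}(\overline{\mathbf{W}})$ (stacking columns) is consistent with $\mathbf{e}_b$ selecting the $b$-th row-block and $\mathbf{1}_{\overline T}$ broadcasting the channel estimate — rather than any analytic difficulty, since the small-signal linearization and orthogonality do all the real work.
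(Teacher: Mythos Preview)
Your proposal is correct and follows essentially the same approach as the paper's proof: derive the channel estimate from the sub-phase~$\alpha$ model via the orthogonality/zero-mean conditions, recover each row of $\overline{\mathbf{W}}$ from the corresponding $\beta$-block by the same trick plus the commuting-diagonal identity, and then vectorize using the Kronecker shorthands. The only cosmetic difference is that the paper phrases both inversion steps as linear least-squares problems, whereas you directly left-multiply by the transposed training matrices---under \eqref{eq:train_alpha}--\eqref{eq:train_beta} these are identical operations.
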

\begin{proof}
See Appendix~\ref{app:propII}.
\end{proof}
By the end of the training epoch, the $n-$th controller has a local copy of the $M$-phase measurement matrix $\mathsf{vec}(\overline{\mathbf{W}}_{(n)})$; if the sufficient excitation conditions \eqref{eq:suff_cond1}, \eqref{eq:suff_cond2} hold, then $\mathsf{vec}(\overline{\mathbf{W}}_{(n)})$ is sufficient to estimate $\boldsymbol{\theta}_{-n}$.
Formulating an ML estimation problem using $\mathsf{vec}(\overline{\mathbf{W}}_{(n)})$ requires knowledge of the pdf $\rho(\mathsf{vec}(\overline{\mathbf{W}}_{(n)});\boldsymbol{\theta})$; however, obtaining the closed from expression is tedious since \eqref{eq:Wk_full_est} involves ratios of non-zero Gaussian random variables.
Therefore, we derive Gaussian approximation for the pdf of $\mathsf{vec}(\overline{\mathbf{W}}_{(n)})$ based on first-order perturbation-theoretic approach (see supplementary material).
Using Neumann series expansion, we get:
\begin{equation}\label{eq:Gauss_approx_full}
\rho(\mathsf{vec}(\overline{\mathbf{W}}_{(n)});\boldsymbol{\theta}) \approx \mathsf{N}(\mathsf{vec}(\overline{\mathbf{V}}),\mathbf{\Sigma}).
\end{equation}
The covariance matrix $\mathbf{\Sigma}$ can be computed via the first-order approximation (see Appendix~\ref{app:Gauss_app}) and is given by:
\begin{align}\label{eq:full_cov_est}
{\mathbf{\Sigma}} = \sigma^2\left(\mathbf{I}_{\overline{T}N} + \frac{\pi^{\alpha}(\delta^{\alpha})^2}{\pi^{\beta}\delta^{\beta}}\mathsf{D}^{-2}(\boldsymbol{\mathcal{X}}^{\alpha}\mathbf{v}_n^{\alpha}) + \frac{\pi^{\alpha}(\delta^{\alpha})^3}{\pi^{\beta}(\delta^{\beta})^2}\mathsf{D}^{-2}(\boldsymbol{\mathcal{X}}^{\alpha}\mathbf{v}_n^{\alpha})\left(\sum_{b\in\overline{\mathcal{T}}}\mathsf{D}(\boldsymbol{\mathcal{X}}^{\beta;b}\mathbf{v}_n^{\beta;b})(\mathbf{I}_N\otimes\mathbf{1}_{\overline{T}\times\overline{T}})\mathsf{D}(\boldsymbol{\mathcal{X}}^{\beta;b}\mathbf{v}_n^{\beta;b})\right)\mathsf{D}^{-2}(\boldsymbol{\mathcal{X}}^{\alpha}\mathbf{v}_n^{\alpha})\right).
\end{align}
The approximation is valid for sub-phase $\alpha$ signals satisfying $\mathbf{0}_{T^{\alpha}}<\mathbf{w}_n^{\alpha}< 2\mathbf{v}_n^{\alpha}$.
In practice, this is expected to be satisfied as the probability that {$\mathbf{w}_n^{\alpha}$} is negative or larger than $2\mathbf{v}_n^{\alpha}$ is negligible.
In light of this, one can easily verify that the Gaussian approximation converges to the true distribution of $\mathsf{vec}(\overline{\mathbf{W}}_{(n)})$ in the limit $\mathbf{w}_n^{\alpha}\rightarrow\mathbf{v}_n^{\alpha}$. 
Expression \eqref{eq:full_cov_est} also captures the effect of the $C$-phase and the transmission schemes we adopted there on the uncertainty in the local copies $\overline{\mathbf{W}}_{(n)}$; specifically, the initial uncertainty in $\overline{\mathbf{W}}$, represented with the first term in \eqref{eq:full_cov_est}, increases due to (1) measurement noise in sub-phase $\beta$ (second term) and, (2) the uncertainty in the channel estimates induced in sub-phase $\alpha$ (third term).

\subsection{{Joint System Identification and State Estimation}}
\label{sec:JMLE}

By the end of the training epoch, the $n-$th controller has $\overline{\mathbf{W}}_{(n)}$ and the $C$-phase measurement vectors $\mathbf{w}_n^{\alpha}$ and $\mathbf{w}_n^{\beta}$.
The reference voltage training matrix $\mathbf{X}^{\alpha}$ is deterministic, so $\mathbf{w}_n^{\alpha}$ can still be useful when formulating the estimation problem.
On the other hand, the training matrix $\mathbf{X}^{\beta}$ in sub-phase $\beta$ is modulated with $M$-phase measurements; since controller $n$ knows only the noisy copy $\overline{\mathbf{W}}_{(n)}$, it is impossible to reconstruct $\mathbf{X}^{\beta}$ perfectly which makes $\mathbf{w}_n^{\beta}$ of no further use.
The optimal ML that uses \emph{all} available information should be defined over an augmented vector, comprising $\mathsf{vec}(\overline{\mathbf{W}}_{(n)})$ and $\mathbf{w}_n^{\alpha}$. 
Including $\mathbf{w}_n^{\alpha}$ increases the dimensionality of the problem, but the numerical investigations indicate that it does not yield any practically significant performance gain. We therefore omit $\mathbf{w}_n^{\alpha}$ from the ML for clarity of exposition.

The relation between the steady state bus voltages and the parameter vector is defined implicitly in Proposition~\ref{prop2}; therefore, we define a \emph{joint system identification and state estimation} (J-SISE) problem via \emph{constrained} ML estimation \cite{32,31}. 
We introduce the joint parameter/state vector:
\begin{equation}
\boldsymbol{\vartheta} = \begin{bmatrix}\boldsymbol{\theta}\\\mathsf{vec}(\overline{\mathbf{V}})\end{bmatrix}.
\end{equation}
We define $\hat{\boldsymbol{\vartheta}}_{-n},~n\in\mathcal{N}$ as the globally optimal solution to:
\begin{align}\label{eq:MLE_general_SB}
\hat{\boldsymbol{\vartheta}}_{-n} & = \min_{\boldsymbol{\vartheta}_{-n}}\left\{-\ln{\rho(\mathsf{vec}(\overline{\mathbf{W}}_{(n)});\boldsymbol{\theta})}\right\},\\\nonumber
 \text{s.t. } & \mathsf{vec}(\overline{\mathbf{\Omega}}) = \mathbf{0}_{\overline{T}N},
\end{align}
formulated w.r.t. the true distribution of $\mathsf{vec}(\overline{\mathbf{W}}_{(n)})$.
The problem \eqref{eq:MLE_general_SB} is neither convex nor concave due to the quadratic nature of the constrains that contain bilinear terms in the decision variables.
Since $\mathsf{vec}({\mathbf{\Omega}})$ is sufficiently differentiable in $\boldsymbol{\vartheta}_{-n}$, the constrained optimization problem \eqref{eq:MLE_general_SB} can be restated as an unconstrained one using the Lagrange method of multipliers \cite{30}.
Using the Gaussian approximation \eqref{eq:Gauss_approx_full} and applying the Karush-Kuhn-Tucker (KKT) conditions yields a {non-linear} system of equations.
Using the result of the following proposition, we propose Algorithm~\ref{alg1} based on partially linearized constraints to solve the system iteratively \cite{31}.
Specifically, denote $\boldsymbol{\vartheta}^{(j)}$ in the 
 $j$-th iteration and let:
\begin{equation}\label{eq:constraint_lin}
\mathsf{vec}(\overline{\mathbf{\Omega}})\approx {\mathbf{\Upsilon}}^{(j)}\boldsymbol{\theta} + {\mathbf{\Gamma}}^{(j)}(\mathsf{vec}(\overline{\mathbf{V}})-\mathsf{vec}(\overline{\mathbf{V}}^{(j)})),
\end{equation}
be the linear approximation of $\mathsf{vec}(\overline{\mathbf{\Omega}})$ around $\boldsymbol{\vartheta}^{(j)}$. The Jacobians ${\mathbf{\Upsilon}}^{(j)}$, ${\mathbf{\Gamma}}^{(j)}$ are evaluated in $\boldsymbol{\vartheta}^{(j)}$.
We obtain the following result:

\begin{proposition}\label{prop4}
If the sufficient excitation conditions \eqref{eq:suff_cond1}, \eqref{eq:suff_cond2} are satisfied in $\boldsymbol{\vartheta}^{(j)}$, the global solution to \eqref{eq:MLE_general_SB} after substituting the power balance constraint with \eqref{eq:constraint_lin} is given by:
\begin{align}
\label{eq:sol_KKT_1}
&\boldsymbol{\theta}_{-n} = - (({\mathbf{\Upsilon}}_{-n}^{(j)})^{\mathsf{T}}({\mathbf{\Gamma}}^{(j)}{\mathbf{\Sigma}}({\mathbf{\Gamma}}^{(j)})^{\mathsf{T}})^{-1}{\mathbf{\Upsilon}}_{-n}^{(j)})^{-1}({\mathbf{\Upsilon}}_{-n}^{(j)})^{\mathsf{T}}({\mathbf{\Gamma}}^{(j)}{\mathbf{\Sigma}}({\mathbf{\Gamma}}^{(j)})^{\mathsf{T}})^{-1}({\boldsymbol{\upsilon}}_{n}^{(j)}g_n + ({\mathbf{\Gamma}}^{(j)})^{\mathsf{T}}(\mathsf{vec}(\overline{\mathbf{W}}_{(n)}) - \mathsf{vec}(\overline{\mathbf{V}}^{(j)}))),\\
\label{eq:sol_KKT_2}
&\mathsf{vec}(\overline{\mathbf{V}}) = \mathsf{vec}(\overline{\mathbf{W}}_{(n)}) - {\mathbf{\Sigma}}({\mathbf{\Gamma}}^{(j)})^{\mathsf{T}}({\mathbf{\Gamma}}^{(j)}{\mathbf{\Sigma}}({\mathbf{\Gamma}}^{(j)})^{\mathsf{T}})^{-1}({\mathbf{\Upsilon}}^{(j)}\boldsymbol{\theta} + ({\mathbf{\Gamma}}^{(j)})^{\mathsf{T}}(\mathsf{vec}(\overline{\mathbf{W}}_{(n)}) - \mathsf{vec}(\overline{\mathbf{V}}^{(j)}))).
\end{align}
\end{proposition}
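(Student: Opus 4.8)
The plan is to observe that, once the power balance constraint in \eqref{eq:MLE_general_SB} is replaced by its partial linearization \eqref{eq:constraint_lin}, the problem becomes a quadratic program with affine equality constraints, whose unique optimum is characterized by a \emph{linear} KKT system. Under the Gaussian approximation \eqref{eq:Gauss_approx_full} the negative log-likelihood equals, up to an additive constant, the weighted residual $\tfrac12(\mathsf{vec}(\overline{\mathbf{W}}_{(n)})-\mathsf{vec}(\overline{\mathbf{V}}))^{\mathsf{T}}{\mathbf{\Sigma}}^{-1}(\mathsf{vec}(\overline{\mathbf{W}}_{(n)})-\mathsf{vec}(\overline{\mathbf{V}}))$, which is a convex quadratic in $\mathsf{vec}(\overline{\mathbf{V}})$ and does not depend on $\boldsymbol{\theta}_{-n}$; here ${\mathbf{\Sigma}}\succ\mathbf{0}$ because, by \eqref{eq:full_cov_est}, it is the sum of $\sigma^2\mathbf{I}_{\overline{T}N}$ and positive semidefinite terms. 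The linearized constraint is affine in $\boldsymbol{\vartheta}_{-n}=[\boldsymbol{\theta}_{-n}^{\mathsf{T}},\mathsf{vec}(\overline{\mathbf{V}})^{\mathsf{T}}]^{\mathsf{T}}$ after isolating the entry $g_n$ that is locally available to controller $n$, i.e. writing ${\mathbf{\Upsilon}}^{(j)}\boldsymbol{\theta}={\mathbf{\Upsilon}}_{-n}^{(j)}\boldsymbol{\theta}_{-n}+{\boldsymbol{\upsilon}}_{n}^{(j)}g_n$ with ${\boldsymbol{\upsilon}}_n^{(j)}$ the column of ${\mathbf{\Upsilon}}^{(j)}$ attached to $g_n$. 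Being a convex program with only affine equality constraints, the KKT conditions are both necessary and sufficient for global optimality, so it suffices to exhibit their unique solution.

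First I would attach a multiplier $\boldsymbol{\lambda}\in\mathbb{R}^{\overline{T}N}$ to the linearized constraint, form the Lagrangian, and write out the three optimality groups: stationarity in $\mathsf{vec}(\overline{\mathbf{V}})$, giving ${\mathbf{\Sigma}}^{-1}(\mathsf{vec}(\overline{\mathbf{V}})-\mathsf{vec}(\overline{\mathbf{W}}_{(n)}))+({\mathbf{\Gamma}}^{(j)})^{\mathsf{T}}\boldsymbol{\lambda}=\mathbf{0}$; stationarity in $\boldsymbol{\theta}_{-n}$, giving $({\mathbf{\Upsilon}}_{-n}^{(j)})^{\mathsf{T}}\boldsymbol{\lambda}=\mathbf{0}$; and primal feasibility of \eqref{eq:constraint_lin}. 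The elimination then proceeds in a fixed order: solve the first relation for $\mathsf{vec}(\overline{\mathbf{V}})$ in terms of $\boldsymbol{\lambda}$; substitute into feasibility to get a linear equation in $\boldsymbol{\lambda}$ with coefficient ${\mathbf{\Gamma}}^{(j)}{\mathbf{\Sigma}}({\mathbf{\Gamma}}^{(j)})^{\mathsf{T}}$, invertible by \eqref{eq:suff_cond2} together with ${\mathbf{\Sigma}}\succ\mathbf{0}$, and express $\boldsymbol{\lambda}$ as an affine function of $\boldsymbol{\theta}_{-n}$; finally impose $({\mathbf{\Upsilon}}_{-n}^{(j)})^{\mathsf{T}}\boldsymbol{\lambda}=\mathbf{0}$, which collapses to a linear system for $\boldsymbol{\theta}_{-n}$ whose coefficient is the reduced Gramian $({\mathbf{\Upsilon}}_{-n}^{(j)})^{\mathsf{T}}({\mathbf{\Gamma}}^{(j)}{\mathbf{\Sigma}}({\mathbf{\Gamma}}^{(j)})^{\mathsf{T}})^{-1}{\mathbf{\Upsilon}}_{-n}^{(j)}$, invertible by \eqref{eq:suff_cond1}. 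Solving that system yields \eqref{eq:sol_KKT_1}, and back-substituting $\boldsymbol{\theta}_{-n}$ and $\boldsymbol{\lambda}$ into the expression for $\mathsf{vec}(\overline{\mathbf{V}})$ yields \eqref{eq:sol_KKT_2}.

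The step that deserves genuine care, rather than being mechanical, is matching the two required inversions to the two hypotheses: one must verify that ${\mathbf{\Gamma}}^{(j)}{\mathbf{\Sigma}}({\mathbf{\Gamma}}^{(j)})^{\mathsf{T}}$ is nonsingular exactly when \eqref{eq:suff_cond2} holds (full row rank of ${\mathbf{\Gamma}}^{(j)}$), and that the reduced Gramian is positive definite exactly when \eqref{eq:suff_cond1} holds (full column rank of ${\mathbf{\Upsilon}}_{-n}^{(j)}$) — the latter also being what makes the restriction of the objective to the feasible affine set strictly convex, hence guaranteeing a unique minimizer; the feasible set is never empty, since full row rank of ${\mathbf{\Gamma}}^{(j)}$ lets one solve the linearized constraint for $\mathsf{vec}(\overline{\mathbf{V}})$ for any $\boldsymbol{\theta}_{-n}$. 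Everything else — carrying the known column ${\boldsymbol{\upsilon}}_n^{(j)}g_n$ consistently through the substitutions and collecting terms into the stated closed forms — is routine linear algebra; the main place to slip is the bookkeeping of the partition of ${\mathbf{\Upsilon}}^{(j)}$ and $\boldsymbol{\theta}$ into their locally known and unknown parts.
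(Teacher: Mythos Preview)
Your proposal is correct and follows essentially the same route as the paper's proof in Appendix~\ref{app:propIII}: form the Lagrangian, write out the three KKT relations, eliminate $\boldsymbol{\lambda}$ by combining stationarity in $\mathsf{vec}(\overline{\mathbf{V}})$ with primal feasibility, then use $({\mathbf{\Upsilon}}_{-n}^{(j)})^{\mathsf{T}}\boldsymbol{\lambda}=\mathbf{0}$ to solve for $\boldsymbol{\theta}_{-n}$ and back-substitute. Your version is in fact slightly more complete than the paper's, since you make explicit why the KKT conditions are \emph{sufficient} (convex quadratic objective, affine constraints) and tie each of the two required inversions to the corresponding rank hypothesis \eqref{eq:suff_cond1}, \eqref{eq:suff_cond2}; the paper simply carries out the linear algebra without spelling out these points.
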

\begin{proof}
See Appendix~\ref{app:propIII}.
\end{proof}

\begin{algorithm}[t]
\caption{J-SISE with partially linearized constraints}
\begin{algorithmic}[1]\label{alg1}
\renewcommand{\algorithmicrequire}{\textbf{Input:}}
\renewcommand{\algorithmicensure}{\textbf{Output:}}
\REQUIRE ${\overline{\mathbf{W}}}_{(n)}$, $\overline{\mathbf{X}}$, $\overline{\mathbf{S}}$, $\epsilon$, evaluate $\mathbf{\Sigma}$ via \eqref{eq:full_cov_est} using $\mathbf{w}_n^{\alpha}$ and $\mathbf{w}_n^{\beta}$
\ENSURE  $\hat{\boldsymbol{\vartheta}}_{-n}$
\\ \textit{Initialization}: $j = 0$, ${\varepsilon} = \infty$, compute $\boldsymbol{\theta}_{-n}^{(0)}$, $\mathsf{vec}(\overline{\mathbf{V}}^{(0)})$ via \eqref{eq:init_theta}, \eqref{eq:init_v}
\WHILE {${\varepsilon}\geq\epsilon$}
  \STATE evaluate $\mathbf{\Upsilon}^{(j)}$, $\mathbf{\Gamma}^{(j)}$ using $\boldsymbol{\vartheta}_{-n}^{(j)}$
	\STATE compute $\boldsymbol{\theta}_{-n}^{(j+1)}$, $\mathsf{vec}(\overline{\mathbf{V}}^{(j+1)})$ via \eqref{eq:sol_KKT_1} and \eqref{eq:sol_KKT_2}
	\STATE compute $\varepsilon = \|\boldsymbol{\vartheta}_{-n}^{(j+1)} - \boldsymbol{\vartheta}_{-n}^{(j)}\|$, $j=j+1$
\ENDWHILE
\RETURN $\boldsymbol{\vartheta}_{-n}^{(j+1)}$ 
\end{algorithmic} 
\end{algorithm}


The algorithm starts with an initial guess $\boldsymbol{\vartheta}_{-n}^{(0)}$.
Then, we apply Proposition~\ref{prop4} iteratively; the solutions \eqref{eq:sol_KKT_1}, \eqref{eq:sol_KKT_2} in each iteration serve as an input for the next iteration until convergence.
In order to apply Algorithm~\ref{alg1}, controller $n$ should know the covariance matrix $\mathbf{\Sigma}$ up to a scaling factor, i.e., knowledge of the noise variance $\sigma^2$ is not necessary.
To ensure fast convergence, we propose the following initialization: once $\overline{\mathbf{W}}_{(n)}$ is locally available, a reasonable initial estimate of the state $\mathsf{vec}(\overline{\mathbf{V}}^{(0)})$ can be obtained via eq. \eqref{eq:noisy_measurement}:
\begin{equation}\label{eq:init_v}
\mathsf{vec}(\overline{\mathbf{V}}^{(0)}) = \mathsf{vec}(\overline{\mathbf{W}}_{(n)}).
\end{equation}
Then, we evaluate ${\mathbf{\Upsilon}}$ in $\mathsf{vec}(\overline{\mathbf{V}}^{(0)})$ and solve \eqref{eq:g_cond_slot_lin_param} for $\boldsymbol{\theta}_{-n}$:
\begin{equation}\label{eq:init_theta}
\boldsymbol{\theta}_{-n}^{(0)} = - ({\mathbf{\Upsilon}}_{-n}^{(0)})^{\dagger}{\boldsymbol{\upsilon}}_{n}^{(0)}g_n,
\end{equation} 
where ${\boldsymbol{\upsilon}}_{n}^{(0)}$ is the $n$-th column of ${\mathbf{\Upsilon}}^{(0)}$.
It can be easily verified that $\boldsymbol{\vartheta}_{-n}^{(0)}$ satisfies the KKT conditions and is a stationary point of the objective in \eqref{eq:MLE_general_SB}. Section~\ref{sec:Results} shows that \eqref{eq:init_theta} is {unbiased} but not efficient estimator of $\boldsymbol{\theta}_{-n}$.
In this regard, Algorithm~\ref{alg1} serves to refine the initial estimate $\boldsymbol{\vartheta}_{-n}^{(0)}$ and further reduce its covariance.

\subsection{Performance}
\label{sec:CRLB}

The Mean Squared Error matrix of the unbiased estimator of ${\boldsymbol{\vartheta}}_{-n}$ is defined as:
\begin{equation}
\text{MSE}(\hat{\boldsymbol{\vartheta}}_{-n}) = \mathbb{E}\left\{(\hat{\boldsymbol{\vartheta}}_{-n} - \boldsymbol{\vartheta}_{-n})(\hat{\boldsymbol{\vartheta}}_{-n} - \boldsymbol{\vartheta}_{-n})^{\mathsf{T}}\right\}.
\end{equation}
$\text{MSE}(\hat{\boldsymbol{\theta}}_{-n})$ and $\text{MSE}(\mathsf{vec}(\hat{\overline{\mathbf{V}}}))$ are defined analogously.
In stead of deriving the MSE matrix directly, we use the CRLB inequality to bound it and derive an approximate lower bound using the Gaussian approximation \eqref{eq:Gauss_approx_full}. 
Referring to the optimization problem \eqref{eq:MLE_general_SB}, a straightforward way to bound $\text{MSE}(\hat{\boldsymbol{\vartheta}}_{-n})$ is to use the \emph{constrained} CRLB \cite{rev1}.
Let $\mathbf{O}$ denote the $\mathsf{dim}(\boldsymbol{\vartheta}_{-n})\times\mathsf{dim}(\boldsymbol{\theta}_{-n})$ matrix whose columns form the orthonormal basis for the null space of the Jacobian $[{\mathbf{\Upsilon}}_{-n},\;{\mathbf{\mathbf{\Gamma}}}]$. 
Then, $\text{MSE}(\hat{\boldsymbol{\vartheta}}_{-n})$ can be bounded as follows \cite{rev1}:
\begin{equation}\label{eq:CRLB_standard}
\textbf{\normalfont MSE}(\hat{\boldsymbol{\vartheta}}_{-n})\succeq \mathbf{O}\left(\mathbf{O}^{\mathsf{T}}\begin{bmatrix} \mathbf{0} & \mathbf{0}\\\mathbf{0} & \mathbf{\Sigma}^{-1} \end{bmatrix}\mathbf{O}\right)^{-1}\mathbf{O}^{\mathsf{T}},
\end{equation}
where $\mathbf{0}$ denote all-zero matrices of adequate dimensions.
$\mathbf{O}$ is is computed numerically, as it is unavailable in closed form.
To bound the MSE matrices of $\hat{\boldsymbol{\theta}}_{-k}$ and $\mathsf{vec}(\hat{\overline{\mathbf{V}}})$ separately, we need to perform numerical block inversion of the right-hand side of \eqref{eq:CRLB_standard}; the following proposition gives alternative and simpler closed form expressions for these bounds:

\begin{proposition}
\label{prop5}
The MSE matrices $\textbf{\normalfont MSE}(\hat{\boldsymbol{\theta}}_{-n})$ and $\textbf{\normalfont MSE}(\mathsf{vec}(\hat{\overline{\mathbf{V}}}))$ can be bounded from below as follows:
\begin{align}\label{eq:CRLB_theta}
& \textbf{\normalfont MSE}(\hat{\boldsymbol{\theta}}_{-n}) \succeq ({\mathbf{\Upsilon}}_{-n}^{\mathsf{T}}({\mathbf{\Gamma}}^{-1})^{\mathsf{T}}\mathbf{\Sigma}^{-1}{\mathbf{\Gamma}}^{-1}{\mathbf{\Upsilon}}_{-n})^{-1} =\boldsymbol{\mathcal{J}}^{-1},\\\label{eq:CRLB_v}
& \textbf{\normalfont MSE}(\mathsf{vec}(\hat{\overline{\mathbf{V}}})) \succeq {\mathbf{\Gamma}}^{-1}{\mathbf{\Upsilon}}_{-n}\boldsymbol{\mathcal{J}}^{-1}{\mathbf{\Upsilon}}_{-n}^{\mathsf{T}}({\mathbf{\Gamma}}^{-1})^{\mathsf{T}},
\end{align}
where $\boldsymbol{\mathcal{J}}$ denotes the Fisher Information Matrix of $\boldsymbol{\theta}_{-n}$.
\end{proposition}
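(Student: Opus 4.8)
The plan is to turn the abstract constrained CRLB \eqref{eq:CRLB_standard} into the explicit forms \eqref{eq:CRLB_theta}--\eqref{eq:CRLB_v} by constructing the null-space basis $\mathbf{O}$ in closed form. The crucial structural fact is that, under \eqref{eq:suff_cond2}, the $N\overline{T}\times N\overline{T}$ Jacobian $\mathbf{\Gamma}$ is square and nonsingular. Consequently the null space of the constraint Jacobian $[\mathbf{\Upsilon}_{-n},\,\mathbf{\Gamma}]$ admits the parametrization
\[
\mathrm{null}\big([\mathbf{\Upsilon}_{-n},\,\mathbf{\Gamma}]\big)=\left\{\begin{bmatrix}\boldsymbol{\theta}_{-n}\\-\mathbf{\Gamma}^{-1}\mathbf{\Upsilon}_{-n}\boldsymbol{\theta}_{-n}\end{bmatrix}:\boldsymbol{\theta}_{-n}\in\mathbb{R}^{\mathsf{dim}(\boldsymbol{\theta}_{-n})}\right\},
\]
whose dimension is exactly $\mathsf{dim}(\boldsymbol{\theta}_{-n})$ by \eqref{eq:suff_cond1}. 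Hence $\mathbf{O}=\big[\mathbf{I}\,;\,-\mathbf{\Gamma}^{-1}\mathbf{\Upsilon}_{-n}\big]\mathbf{R}$ for a \emph{square, invertible} change-of-basis matrix $\mathbf{R}$ chosen so that the columns of $\mathbf{O}$ are orthonormal, i.e. $\mathbf{R}^{\mathsf{T}}\big(\mathbf{I}+\mathbf{\Upsilon}_{-n}^{\mathsf{T}}(\mathbf{\Gamma}^{-1})^{\mathsf{T}}\mathbf{\Gamma}^{-1}\mathbf{\Upsilon}_{-n}\big)\mathbf{R}=\mathbf{I}$.

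Next I would substitute this $\mathbf{O}$ into \eqref{eq:CRLB_standard}. Since the upper-left block of $\mathrm{diag}(\mathbf{0},\mathbf{\Sigma}^{-1})$ vanishes, the inner quadratic form collapses to
\[
\mathbf{O}^{\mathsf{T}}\begin{bmatrix}\mathbf{0}&\mathbf{0}\\\mathbf{0}&\mathbf{\Sigma}^{-1}\end{bmatrix}\mathbf{O}=\mathbf{R}^{\mathsf{T}}\,\mathbf{\Upsilon}_{-n}^{\mathsf{T}}(\mathbf{\Gamma}^{-1})^{\mathsf{T}}\mathbf{\Sigma}^{-1}\mathbf{\Gamma}^{-1}\mathbf{\Upsilon}_{-n}\,\mathbf{R}=\mathbf{R}^{\mathsf{T}}\boldsymbol{\mathcal{J}}\mathbf{R},
\]
which simultaneously identifies $\boldsymbol{\mathcal{J}}=\mathbf{\Upsilon}_{-n}^{\mathsf{T}}(\mathbf{\Gamma}^{-1})^{\mathsf{T}}\mathbf{\Sigma}^{-1}\mathbf{\Gamma}^{-1}\mathbf{\Upsilon}_{-n}$ as the Fisher Information Matrix of $\boldsymbol{\theta}_{-n}$ (nonsingular again by \eqref{eq:suff_cond1}, \eqref{eq:suff_cond2}). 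Because $\mathbf{R}$ is invertible, the factors cancel, $\mathbf{R}(\mathbf{R}^{\mathsf{T}}\boldsymbol{\mathcal{J}}\mathbf{R})^{-1}\mathbf{R}^{\mathsf{T}}=\boldsymbol{\mathcal{J}}^{-1}$, so
\[
\textbf{\normalfont MSE}(\hat{\boldsymbol{\vartheta}}_{-n})\succeq\begin{bmatrix}\mathbf{I}\\-\mathbf{\Gamma}^{-1}\mathbf{\Upsilon}_{-n}\end{bmatrix}\boldsymbol{\mathcal{J}}^{-1}\begin{bmatrix}\mathbf{I}&-\mathbf{\Upsilon}_{-n}^{\mathsf{T}}(\mathbf{\Gamma}^{-1})^{\mathsf{T}}\end{bmatrix}=\begin{bmatrix}\boldsymbol{\mathcal{J}}^{-1}& -\boldsymbol{\mathcal{J}}^{-1}\mathbf{\Upsilon}_{-n}^{\mathsf{T}}(\mathbf{\Gamma}^{-1})^{\mathsf{T}}\\ -\mathbf{\Gamma}^{-1}\mathbf{\Upsilon}_{-n}\boldsymbol{\mathcal{J}}^{-1}& \mathbf{\Gamma}^{-1}\mathbf{\Upsilon}_{-n}\boldsymbol{\mathcal{J}}^{-1}\mathbf{\Upsilon}_{-n}^{\mathsf{T}}(\mathbf{\Gamma}^{-1})^{\mathsf{T}}\end{bmatrix}.
\]
Reading off the $(1,1)$ and $(2,2)$ diagonal blocks, and using that the Löwner ordering $\textbf{\normalfont MSE}(\hat{\boldsymbol{\vartheta}}_{-n})\succeq(\cdot)$ passes to the corresponding principal submatrices (restriction to the $\boldsymbol{\theta}_{-n}$ and $\mathsf{vec}(\overline{\mathbf{V}})$ coordinate blocks preserves positive semidefiniteness), yields exactly \eqref{eq:CRLB_theta} and \eqref{eq:CRLB_v}.

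The bulk of the argument is block-matrix bookkeeping; the only genuinely delicate point — the one I would be most careful to state explicitly — is the invertibility of $\mathbf{R}$ and of $\boldsymbol{\mathcal{J}}$. Both hinge on the sufficient-excitation rank conditions: \eqref{eq:suff_cond2} makes $\mathbf{\Gamma}$ square and nonsingular (so that the null space has the stated parametrization), while \eqref{eq:suff_cond1} forces the null space to have full dimension $\mathsf{dim}(\boldsymbol{\theta}_{-n})$ and $\mathbf{\Upsilon}_{-n}$ to be full column rank (so that $\boldsymbol{\mathcal{J}}\succ\mathbf{0}$). Once these are in place, the cancellation $\mathbf{R}(\mathbf{R}^{\mathsf{T}}\boldsymbol{\mathcal{J}}\mathbf{R})^{-1}\mathbf{R}^{\mathsf{T}}=\boldsymbol{\mathcal{J}}^{-1}$ is immediate, so the precise choice of orthonormalizing factor $\mathbf{R}$ is irrelevant to the final bound.
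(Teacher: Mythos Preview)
Your proof is correct but takes a genuinely different route from the paper. You start from the constrained CRLB \eqref{eq:CRLB_standard} and build the null-space basis $\mathbf{O}$ explicitly, then show that the orthonormalizing factor $\mathbf{R}$ cancels so that the diagonal blocks of the resulting bound are exactly \eqref{eq:CRLB_theta} and \eqref{eq:CRLB_v}. The paper instead bypasses the constrained formulation altogether: it invokes the implicit function theorem to write $\mathsf{vec}(\overline{\mathbf{V}})=\mathsf{vec}(\overline{\mathbf{F}}(\boldsymbol{\theta}))$ as an explicit (though unavailable in closed form) function of $\boldsymbol{\theta}$, applies the \emph{unconstrained} Gaussian CRLB to the reduced model $\mathsf{vec}(\overline{\mathbf{W}}_{(n)})\sim\mathsf{N}(\mathsf{vec}(\overline{\mathbf{F}}(\boldsymbol{\theta})),\mathbf{\Sigma})$, and uses the implicit-function Jacobian $\nabla_{\boldsymbol{\theta}_{-n}}\mathsf{vec}(\overline{\mathbf{F}})=-\mathbf{\Gamma}^{-1}\mathbf{\Upsilon}_{-n}$ to obtain $\boldsymbol{\mathcal{J}}$ directly; the bound \eqref{eq:CRLB_v} then follows from the CRLB for a transformed parameter. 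Your approach has the virtue of showing, by pure block algebra, that the constrained CRLB \eqref{eq:CRLB_standard} and the closed-form expressions \eqref{eq:CRLB_theta}--\eqref{eq:CRLB_v} are literally the same matrix, which is what the paper asserts but only verifies numerically. The paper's approach, on the other hand, explains \emph{why} $\boldsymbol{\mathcal{J}}$ deserves to be called the Fisher Information Matrix of $\boldsymbol{\theta}_{-n}$: it is the FIM of the unconstrained problem obtained after eliminating $\mathsf{vec}(\overline{\mathbf{V}})$, not merely a convenient abbreviation. Both arguments rely on the same invertibility of $\mathbf{\Gamma}$ guaranteed by \eqref{eq:suff_cond2}.
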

\begin{proof}
See Appendix~\ref{app:propIV}.
\end{proof}

The expressions \eqref{eq:CRLB_theta} and \eqref{eq:CRLB_v} can be verified to be asymptotically tight; it can be shown that if Algorithm~\ref{alg1} converges to the global optimum, the MSE matrix is of the same analytical form as \eqref{eq:CRLB_theta} and \eqref{eq:CRLB_v}, but evaluated at $\hat{\boldsymbol{\vartheta}}_{-n}$.
Conversely, expressions \eqref{eq:CRLB_theta} and \eqref{eq:CRLB_v} prove the asymptotic efficiency of Algorithm~\ref{alg1}.

\subsection{Discussion}
\label{sec:disc}

We take a closer look on few crucial aspects that set the applicability boundaries of the proposed method.
We consider the sufficient excitation conditions, outlined in subsection~\ref{sec:suff_exct}; they provide guidelines for designing the training sequences and they determine the overall duration of the training epoch.
A straightforward way to guarantee \eqref{eq:suff_cond1}, \eqref{eq:suff_cond2} is to ensure that $\overline{T} \geq N^{-1}\mathsf{dim}(\boldsymbol{\theta}_{-n})$ and $\mathsf{rank}(\overline{\mathbf{X}}) = N$ and/or $\mathsf{rank}(\overline{\mathbf{S}}) = N$.
The minimal duration of the $C$-phase is determined by the conditions for uniqueness of ${\overline{\mathbf{W}}}_{(n)}$, such that the total duration of the training epoch (in slots) $T = \overline{T}(1+L) + T^{\alpha}$ in a system with $N$ DERs is lower bounded as:
\begin{equation}
T\geq \frac{1}{2}N^2 + 5N + \frac{5}{2} - \frac{1}{N} = T_{\min}.
\end{equation}
The lower bound on $T$ can be attained by random training sequences. 
Alternatively, when using deterministic codes such as orthogonal Walsh-Hadamard sequences, meeting the rank conditions and, possibly additional conditions such as \eqref{eq:train_alpha}, \eqref{eq:train_beta} might require more time slots than $T_{\min}$.

The frequency of the training epoch should match the requirements of the upper layer application.
If the application runs periodically, then the training epoch should be invoked in each period, preferably at the beginning, while in event-triggered applications, the training epoch should be invoked whenever the application is triggered.
While the frequencies should be equal, the total duration (in seconds) $T\tau$ is expected to constitute only a fraction $0<\gamma<1$ of the average time $\tau^{\text{u.app}}$ between two consecutive application runs.
Then, we have the following upper bound on the slot duration:
\begin{equation}
\tau \leq \frac{\gamma\tau^{\text{u.app}}}{T}\leq \frac{\tau^{\text{u.app}}}{T_{\min}}=\tau_{\max},
\end{equation}
where $\tau_{\max}$ is obtained by fixed $\gamma = 1$ and $T=T_{\min}$.

Further, since the proposed method is developed in classical estimation framework, each controller requires perfect knowledge of the training matrices $\overline{\mathbf{X}}$, $\overline{\mathbf{S}}$, $\Delta{\mathbf{X}}^{\alpha}$ and $\Delta{\mathbf{X}}^{\beta}$. 
This means that the training matrices should be designed a priori, delivered to the controllers and kept fixed afterward (via hard-coding for instance).
Relaxing this condition requires adequate modifications of the problem formulation, which is out of the paper's scope.
For instance, if no prior knowledge is available, we have no choice but to model the training matrices are deterministic unknowns and modify the definition of the parameter vector to include them. 

The method can only identify buses that host at least one DER whose primary controller engages in decentralized training.
In other words, buses that host only loads are unidentifiable.
However, we can still apply the method in MGs with (potentially many) load buses; in this case, the method identifies the \emph{Kron-reduced} conductance matrix, which is obtained by isolating the DER buses in the original network and applying block inversion on the original conductance matrix.
Analyzing the structure of the Kron-reduced conductance matrix, the controllers might be able to deduce some information on the original conductance matrix, see \cite{33}.


\section{Decentralized OED via Training}
\label{sec:DOED}

We illustrate the practical potential of the proposed system identification method by applying it in decentralized OED (DOED) as the most common upper layer application in power systems. In OED, each DER $n\in\mathcal{N}$ is assigned a monotonic and convex \emph{cost function} $c_n(p_n)$ that determines the cost of the output power $p_n$ of DER $n$.
The aim of the OED is to find the optimal local output powers, referred to as \emph{optimal dispatch policies} $p_n^*,~n\in\mathcal{N}$ that minimize the total cost $\sum_{n\in\mathcal{N}}c_n(p_n)$ such that the total load demand $d^{\star} = \mathbf{1}_{3N}^{\mathsf{T}}\mathbf{d}$ is balanced and the box constraints on the output powers are satisfied:
\begin{align}\label{eq:OED}
 \mathbf{p}^* & = \min_{\mathbf{p}}c(\mathbf{p}),\\\nonumber
\text{s.t. } & \mathbf{1}_N^{\mathsf{T}}\mathbf{p} = d^{\star},~\mathbf{0}_N \leq \mathbf{p} \leq \mathbf{g},~v_{\min}\mathbf{1}_N\leq\mathbf{v}\leq v_{\max}\mathbf{1}_N,
\end{align}
where $c(\mathbf{p})=\sum_{n\in\mathcal{N}}c_n(p_n)$, $\mathbf{p} = [p_1,\hdots,p_N]^{\mathsf{T}}$ and $\mathbf{v} = [v_1,\hdots,v_N]^{\mathsf{T}}$.
Distributed MGs with small-scale DERs typically use linear cost functions~\cite{6}.
Hence, we adopt $c_n(p_n) = a_np_n$ where $a_n$ is the constant \emph{marginal cost} of the $n-$th DER per unit of injected/stored power.
Without loss of generality, the costs are ordered as $a_n \leq a_{n+1},~n\in\mathcal{N}$, which divides the DERs in several ordered cost groups based on the marginal costs. The optimal solution to \eqref{eq:OED} is the following decentralized program:
\begin{align}\label{dOED}
p_{n}^* = \left\{
  \begin{array}{lr}
    g_n 		 &  d^{\star} > \sum_{m:a_m\leq a_n}g_m,\\ 
    0        &  d^{\star} < \sum_{m:a_m < a_n}g_m,\\ 
    g_n\frac{d^{\star} - \sum_{m:a_m < a_n}g_m}{\sum_{m:a_m = a_n}g_m} &  \text{\normalfont otherwise}
  \end{array}
\right.
\end{align}
for any $n\in\mathcal{N}$ (see also \cite{6,last}).
Specifically, the total load demand is first filled with the capacities of the DERs from the cheapest cost groups, until the third condition in \eqref{dOED} is met. 
Then, the DERs from the cost group that meets this condition share the remaining net load demand proportionally to their local capacities while the DERs from the remaining, most expensive cost groups do not inject power. 
The DERs that satisfy the first condition in \eqref{dOED} are operated at a constant power (at capacity) and their local controllers are configured in CSC mode (forming the subset $\mathcal{N}^{\text{C}}$), whereas the DERs that satisfy the third condition have flexible power outputs and their local controllers are configured in VSC mode, tuned for proportional power sharing (forming $\mathcal{N}^{\text{V}}$).

\begin{figure}[t]
\centering
\includegraphics[scale=0.35]{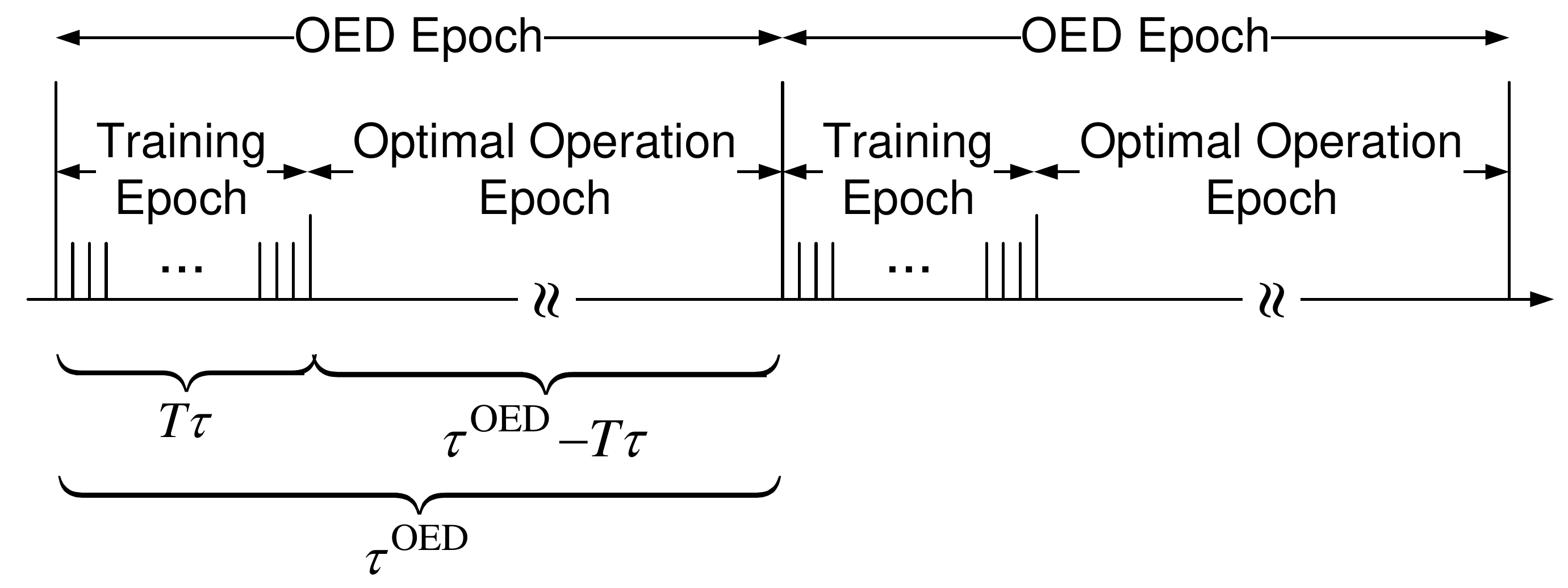}
\caption{Proposed decentralized OED organization with periodic training.}
\label{OEDtime}
\end{figure}

Knowing $\boldsymbol{\theta}$, specifically $\mathbf{g}$ and $d^{\star}$, is sufficient for implementing the decentralized program \eqref{dOED}.
We design a OED protocol in which the controllers utilize decentralized training and Algorithm~\ref{alg1} to acquire the information necessary to execute \eqref{dOED}.
Fig.~\ref{OEDtime} illustrates the temporal organization of the protocol.
The OED typically runs periodically, every $5-30$ minutes depending on the average rate of change of $\mathbf{g}$ and/or $\mathbf{d}$ \cite{3,6}. 
Therefore, we (i) divide the time axis into periodic \emph{OED epochs}, each of 
of duration $\tau^{\text{OED}}$, and (ii) assume that $\boldsymbol{\theta}$ changes independently at the beginning of and OED epoch and remains fixed throughout the epoch \cite{6}.
In each epoch, the DERs locally run the program \eqref{dOED} using up-to-date information about the generation capacities and load demands. 
To obtain this information, a fraction of the total duration $\tau^{\text{OED}}$ of the OED epoch is allocated for decentralized training, see Fig.~\ref{OEDtime}.
The OED epoch is split into a training epoch of duration $T\tau$ and an optimal operation epoch of total duration $\tau^{\text{OED}} - T\tau$.
In the training epoch, the DER controllers perform decentralized training and estimation as described in Sections~\ref{sec:problem} and \ref{sec:main1}. At the end of the training epoch, controller $n$ obtains $\hat{\boldsymbol{\theta}}_{-n}$, used at the beginning of the optimal operation epoch to determine the local dispatch policy $\hat{p}_n^*$, i.e., to determine which condition in \eqref{dOED} is satisfied locally.
Hence, each DER individually decides its primary control configuration via \eqref{dOED} using $\hat{\boldsymbol{\theta}}_{-n}$ and configures the local controller accordingly, forming the subsets $\hat{\mathcal{N}}^{\text{V}}\subset\mathcal{N}$ and $\hat{\mathcal{N}}^{\text{C}}\subset\mathcal{N}$.
We use $\hat{\cdot}$ to denote that \eqref{dOED} is solved using $\hat{\boldsymbol{\theta}}_{-n}$.

Implicit in the derivation of the decentralized program \eqref{dOED} is the assumption that the MG is balanced $d^{\star} \leq \sum_{m\in{\mathcal{N}}^{\text{C}}\cup{\mathcal{N}}^{\text{V}}}g_m$.
However, the stochastic renewable generation might sometimes violate the balance condition.
Moreover, due to estimation errors in $\hat{\boldsymbol{\theta}}_{-n}$, the resulting dispatch policies $\hat{{p}}_n^*$ will in general differ from ${p}_n^*$, attainable only when $\boldsymbol{\theta}$ is known perfectly; hence, $\hat{\mathcal{N}}^{\text{C/V}}\neq\mathcal{N}^{\text{C/V}}$ in general.
This leads to slightly suboptimal MG operation, but it might also 
violate the balance condition even when $\mathcal{N}^{\text{C/V}}$ satisfy it.
This results in loss of voltage regulation as the bus voltage quickly (i) drops towards the lower margin $v_{\min}$ when the net load demand is positive 
$d^{\star}>\sum_{m\in\hat{\mathcal{N}}^{\text{C}}\cup\hat{\mathcal{N}}^{\text{V}}}g_m$ or (ii) rises towards the upper margin $v_{\max}$ when the net-load demand is negative $d^{\star}<\sum_{m\in\hat{\mathcal{N}}^{\text{C}}}g_m$.
Clearly, additional generation/storage capacity is necessary to balance the remaining demand.
We employ a solution based on classical \emph{DC bus signaling}, where a \emph{backup source/storage} is activated if the bus voltage crosses certain thresholds \cite{14,15}.
The marginal costs of the backups are denoted with $c_{\text{source}}^{\text{extra}}/c_{\text{storage}}^{\text{extra}}$ per unit generated/stored power; these values are always larger than the largest marginal cost among the DERs in $\mathcal{N}$, i.e., $c_{\text{source}}^{\text{extra}}/c_{\text{storage}}^{\text{extra}}> c_N$.
In normal operating conditions, the MG is balanced, the backups are not active, and the bus voltage is regulated by the DERs in $\hat{\mathcal{N}}^{\text{V}}$, using the droop control law \eqref{eq:droop_gen} with parameters:
\begin{equation}
x_n = (1+\xi)x,\;\Delta v_n = 2\xi x,\;n\in\hat{\mathcal{N}}^{\text{V}},
\end{equation}
dimensioned to maintain the bus voltages in a tight region around the rated voltage $x$, i.e., in the interval $[(1-\xi)x,(1+\xi)x]$ with $\xi$ being a small positive number.
If the bus voltage drops below $(1-\xi)x$, it signals power deficit and the backup source is activated and configured in droop-controlled VSC mode, using \eqref{eq:droop_gen} with parameters set as:
\begin{equation}
x_{\text{source}}^{\text{extra}} = (1 + \xi)x,\;\Delta v_{\text{source}}^{\text{extra}} = (1 - \xi)x - v_{\min},
\end{equation}
maintaining the bus voltages in $[v_{\min},(1-\xi)x]$.
Conversely, if the voltage rises above $(1+\xi)x$, it signals power surplus and the storage is activated and also configured in droop-controlled VSC mode, using \eqref{eq:droop_gen} with parameters set as:
\begin{equation}
x_{\text{storage}}^{\text{extra}} = v_{\max},\;\Delta v_{\text{storage}}^{\text{extra}} = v_{\max} - (1 + \xi)x,
\end{equation}
maintaining the bus voltages in $[(1+\xi)x,v_{\max}]$.
Fig.~\ref{VIcomplete} summarizes the complete operational dynamics of the proposed system on a single $v-i$ diagram.
Note that installing backup generation/storage is standard practice when dimensioning standalone systems \cite{3,14,15}.
In grid-connected systems, the grid can be used as backup, effectively acting as ideal voltage source with infinite generation/storage capacity \cite{3}.

\begin{figure}[t]
\centering
\includegraphics[scale=0.35]{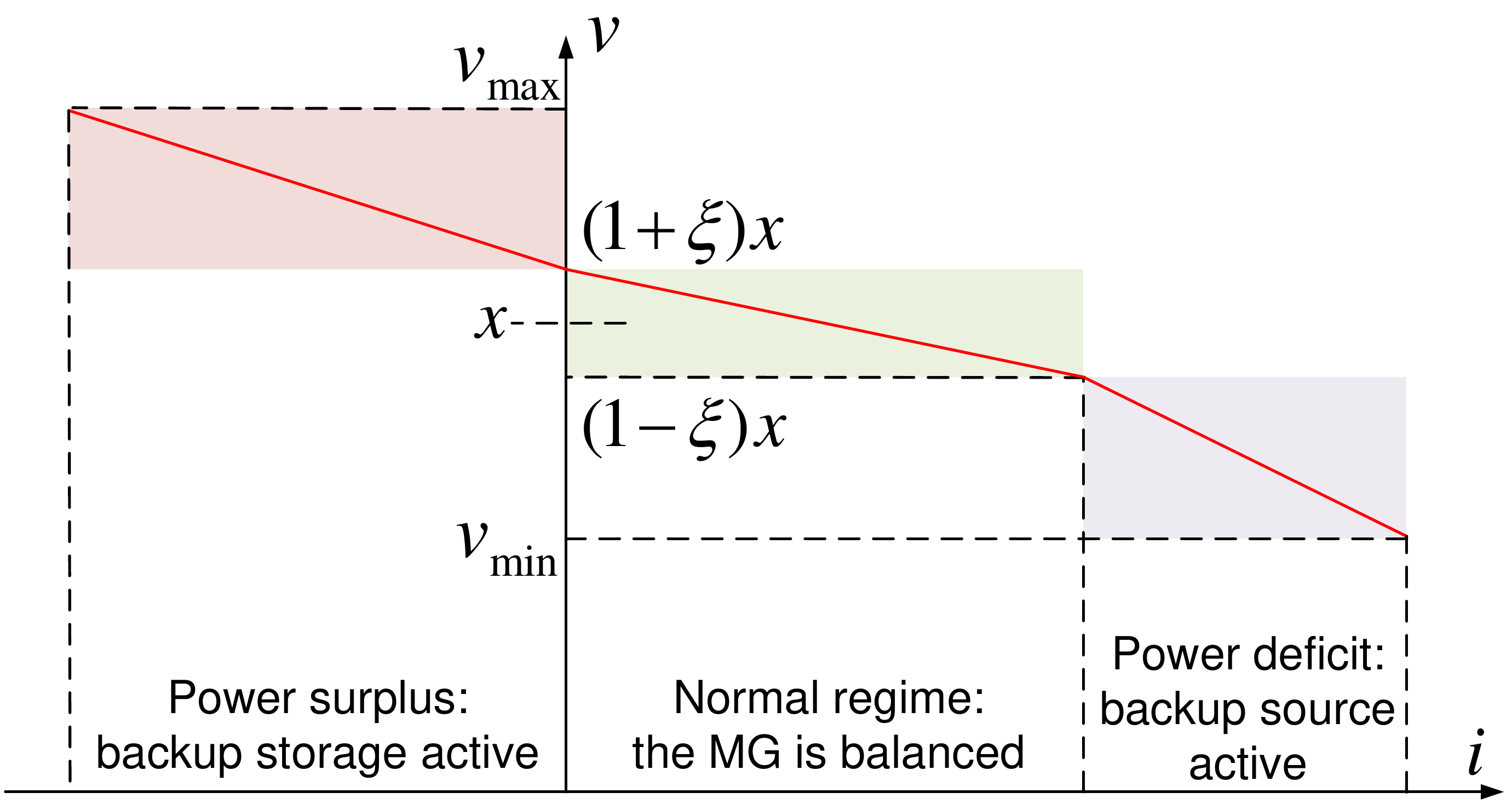}
\caption{Aggregate $v-i$ diagram of the proposed DC MG.}
\label{VIcomplete}
\end{figure}

\section{Evaluation}
\label{sec:Results}

\begin{table}
\renewcommand{\arraystretch}{1.15}
\caption{Fixed Simulation Parameters}
\label{table:table2}
\centering
\begin{tabular}{|c|c|}
\hline
Parameter & Value \\
\hline\hline
Simulation platform & MATLAB \\\hline
Reference voltage $x$ (volts) & $400$ \\\hline
Lower and upper voltage margins $v_{\min}$, $v_{\max}$ (volts) & $385$, $415$ \\\hline
Distribution network topology & cut-ring \\\hline
Max. gen. capacity per DER $g$ (kW) & $1$ \\\hline
Max. const. conductance demand per bus $d^{\text{ca}}$ (kW) & $0.2$ \\\hline
Max. const. current demand per bus $d^{\text{cc}}$ (kW) & $0.2$ \\\hline
Max. const. power demand per bus $d^{\text{cp}}$ (kW) & $0$ \\\hline
Average conductance per line $y$ (S) & $1$ \\\hline
Sampling frequency $\phi_S$ (kHz) & $50$ \\\hline
Sampling noise standard dev. $\sigma_S$ (volts/sample) & $0.1$ \\\hline
Transient time duration $\tau^{\text{transit}}$ (ms) & $2.5$ \\\hline
Nominal droop control params. $\tilde{x}$, $\Delta\tilde{v}$ (volts) & $400$, $15$ \\\hline
Max. voltage drop in $M$-phase $\Delta v$ (volts) & $15$ \\\hline
Total number of slots in the training epoch & $T=600$ \\\hline
Total number of slots in sub-phase $\alpha$ & $T_{\alpha} = 2N$ \\\hline
Total number of slots per block in sub-phase $\beta$ & $L=2N$ \\\hline
Other $C$-phase params. $\kappa^{\alpha}$, $\kappa^{\beta}$, $\chi_n,n\in\mathcal{N}$ & $1$, $1$, $\overline{v}_n$ \\\hline
OED epoch duration $\tau^{\text{OED}}$ (s) & $300$ \\\hline 
Backup gen./storage cost $c_{\text{source}}^{\text{extra}}/c_{\text{storage}}^{\text{extra}}$ (units/W) & $12$ \\\hline
DC bus signaling threshold $\xi$ & $6.25\cdot 10^{-4}$ \\\hline
\end{tabular}
\end{table}

\subsection{General Simulation Description and Design Parameters}

Table~\ref{table:table2} summarizes the numerical values of the simulation parameters that remain fixed in all simulation studies; the values of the remaining parameters are provided in the captions of the respective plots.
We consider a line, i.e., cut-ring distribution network topology, where all buses are connected to two other buses except for buses $n=1$ and $n=N$ that are connected to a single bus each.
As it is a regular practice for any power system, the MG is dimensioned to operate over a range of load demands.
For simplicity, we use $d_n^{\text{c}\cdot}\leq d^{\text{c}\cdot}$ for any $n\in\mathcal{N}$ (``$\cdot$'' stands for either ``a'', ``c'' or ``p''); similarly, $g_n\leq g$ for any $n\in\mathcal{N}$, see Table~\ref{table:table2}.

The measurement noise variance $\sigma^2$ after averaging $\phi_S(\tau - \tau^{\text{transit}})$ samples per slot, see Fig.~\ref{TimeAxis}, can be computed as:
\begin{equation}\label{eq:noise_calculation}
\sigma^2 = \frac{\sigma_S^2}{\phi_S(\tau - \tau^{\text{transit}})},
\end{equation}
where $\sigma_S^2$ is the noise variance of the PECs' ADCs \cite{rev11}.

The number of slots $\overline{T}$ in the $M$-phase for fixed $T^{\alpha}=2N$ and $L=2N$, see Table~\ref{table:table2}, is determined from the total number of slots $T=(1+L)\overline{T}+T^{\alpha}$ which is also fixed:
\begin{equation}
\overline{T} = \left\lfloor{\frac{T-2N}{1+2N}}\right\rfloor.
\end{equation}
The perturbation signals are set as (see also Fig.~\ref{VItraining}):
\begin{align}
x_n(t) & = x + \sqrt{\pi}\Delta x_n(t),\\\label{eq:droop_slopes_Mphase}
s_n(t) & = (\Delta v (x_n(t) - x + \Delta v))^{-1},~t\in\overline{\mathcal{T}},~n\in\mathcal{N}.
\end{align}
The binary sequences $\Delta x_n(t)\in\left\{-1,+1\right\},~t\in\overline{\mathcal{T}}$ are formed by tossing a fair coin for any $n\in\mathcal{N}$.
This is done a priori, i.e., $N$ binary Bernoulli sequences of length $\overline{T}$ are generated, confirmed to satisfy \eqref{eq:suff_cond1}, \eqref{eq:suff_cond2} and stored.
The droop slope perturbation laws \eqref{eq:droop_slopes_Mphase} ensure that the bus voltages will not drop below $x-\Delta v\geq v_{\min}$ or rise above $x + \Delta v\leq v_{\min}$ as long as $\sqrt{\pi}<\Delta v$, see Fig.~\ref{VItraining}.
The reference voltage training sequences in sub-phase $\alpha$ and block $b$ in sub-phase $\beta$ have fixed length of $2N$ slots and are set as:
\begin{align}
\Delta\mathbf{x}_n^{\alpha/\beta;b}=\mathbf{e}_n\otimes\begin{bmatrix}1\\-1\end{bmatrix},~b\in\overline{\mathcal{T}},~n\in\mathcal{N}.
\end{align}
Hence, $\delta^{\alpha} = \delta^{\beta} = 2$.
We also fix $\sqrt{\pi^{\alpha}} = \kappa^{\alpha}\sqrt{{\pi}}$ and $\sqrt{\pi^{\beta}} = \kappa^{\beta}\sqrt{\pi}$, where $0<\kappa^{\alpha},\kappa^{\beta}\leq 1$ are set to keep the reference voltage deviation amplitudes in the $C$-phase relatively small, ensuring that the model \eqref{eq:lin_approx_time} is valid for any $\sqrt{\pi}\in(0,\Delta v)$.

\begin{figure}[t]
\centering
\includegraphics[scale=0.35]{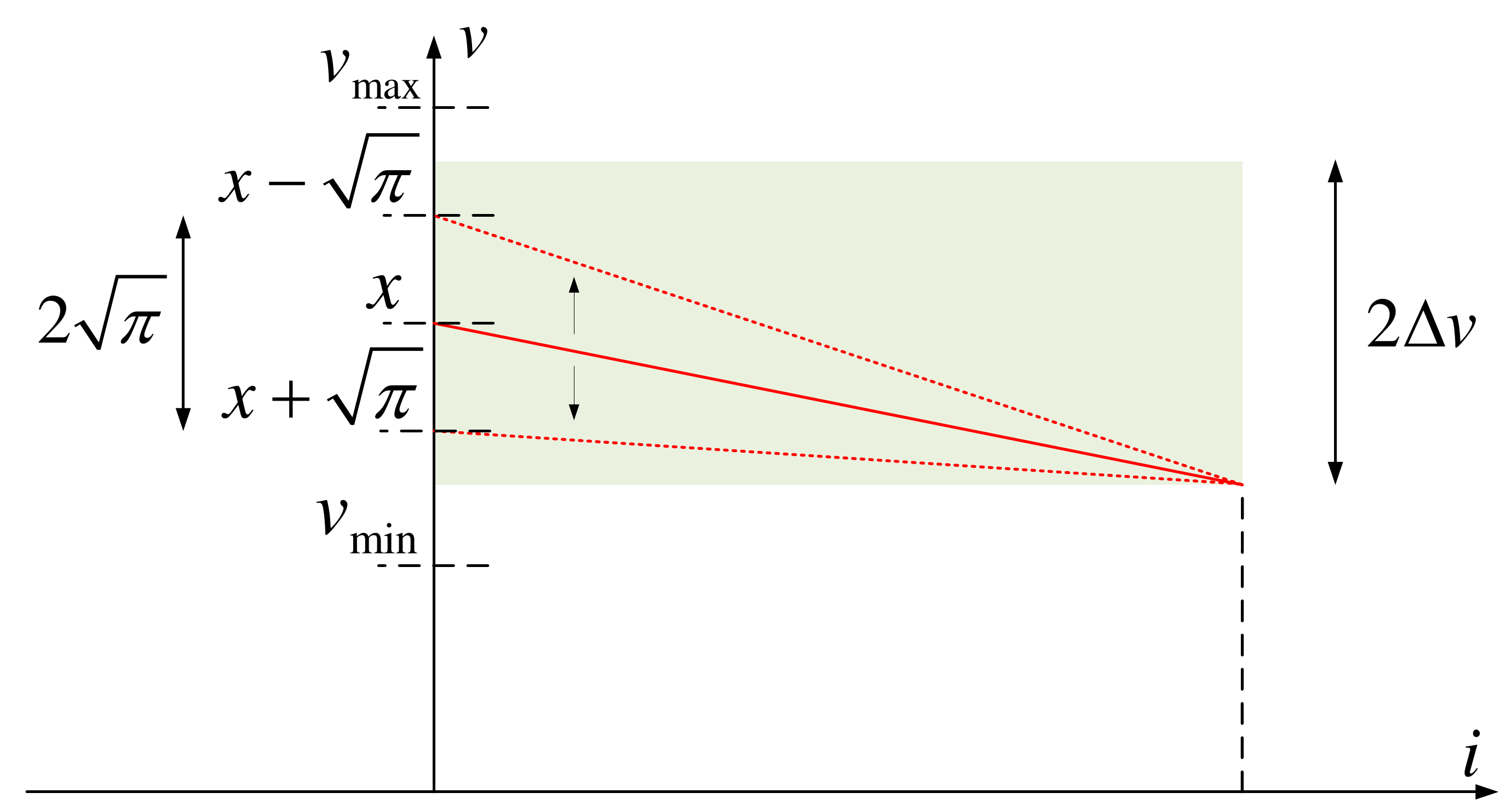}
\caption{$v-i$ diagram of the $M$-phase droop control perturbation law.}
\label{VItraining}
\end{figure}

The performance of J-SISE w.r.t. the MSE and the performance of DOED w.r.t. the cost, are determined by the configuration of the training epoch, which in turn is determined by variety of factors such as slot duration, number of slots, nominal droop control parameters, training matrices and deviation amplitudes.  
With all specifications listed above and in Table~\ref{table:table2}, most of these factors are kept fixed in our evaluations and the design parameters of the training epoch are the slot duration $\tau$ and the reference voltage deviation amplitude $\sqrt{\pi}$. Next, we evaluate the performance of the J-SISE in terms of the design parameters and show how to find their optimal values w.r.t. DOED.

\subsection{J-SISE Performance}
\label{sec:jmle_perf}

First, we investigate the performance, the scalability and the convergence properties of Algorithm~\ref{alg1} w.r.t. $\boldsymbol{\theta}_{-n}$ from the perspective of controller $n=1$ and compare it against CRLB.
We fix the generation capacities of all DERs to have equal values, i.e., $g_n=g,~n\in\mathcal{N}$ and we do the same with the load components $d_n^{\text{ca}} = d_n^{\text{ca}},~d_n^{\text{cc}} = d_n^{\text{cc}},~d_n^{\text{cp}} = d_n^{\text{cp}}$ and the line conductances $y_{n,m} = y$ for all $n,m\in\mathcal{N}$.
We use the \emph{Relative Root Mean Squared Error (RRMSE)} metric, derived from the MSE matrix as follows:
\begin{equation}\label{eq:rrmse}
\text{RRMSE}(\hat{\cdot})=\frac{\sqrt{\mathsf{trace}(\text{MSE}(\hat{\cdot}))}}{\|\cdot\|_2}.
\end{equation}
To evaluate the MSE matrix, we use statistical average of individual MSE matrices, obtained for $1000$ different realizations of the noise matrix $\mathbf{Z}$.
``$\cdot$'' in the above definition stands for either the full vector $\boldsymbol{\theta}_{-n}$ or its constituent vectors, i.e., $\mathbf{g}_{-n}$, $\mathbf{d}$ or $\boldsymbol{\psi}$; in either case, the RRMSE is interpreted as the standard deviation of the estimation error per component of the vector that is used as argument.
Note that, when applied to a constituent vector of $\boldsymbol{\theta}_{-n}$, we plug the diagonal block of the MSE matrix corresponding to that particular constituent vector.
To compute the corresponding lower bound on the RRMSE, we use the CRLB matrix in \eqref{eq:rrmse} instead of the MSE matrix.

We focus particularly on the RRMSE as function of $\sqrt{\pi}$, since RRMSE decreases linearly with ${\tau}$ in the log-domain, see eq. \eqref{eq:noise_calculation}).
Fig.~\ref{results1} depicts the performance of J-SISE for each of the constituent vectors of $\boldsymbol{\theta}_{-n}$, i.e., $\mathbf{g}_{-n}$, $\mathbf{d}$ and $\boldsymbol{\psi}$ against the corresponding lower bounds, for $N=6$ DERs. We have evaluated the lower bounds using both, the constrained CRLB \eqref{eq:CRLB_standard} and expression \eqref{eq:CRLB_theta} from Proposition~\ref{prop5}, and they both yield numerically identical results.
Empty markers correspond to the initial estimate that initializes Algorithm~\ref{alg1}, obtained via \eqref{eq:init_theta}, while filled markers correspond to $\hat{\boldsymbol{\theta}}_{-n}$ after Algorithm~\ref{alg1} converges.
As expected, J-SISE is efficient and attains the CRLB as $\sqrt{\pi}$ increases, except for values very close to $\Delta v$; here, the RRMSE hits a turning point, after which it increases sharply as a result of the fact that when $\sqrt{\pi}\rightarrow\Delta v$, the droop slope $s_n(t)$ grows arbitrarily large and the virtual resistance $y_n^{\text{va}}\rightarrow 0$.
Hence, the controller starts to behave as an ideal voltage source with infinite capacity, pushing the bus voltages to a fixed value $x-\Delta v$ and making the MG insusceptible to reference voltage perturbations.

We further observe that the generation capacities, Fig.~\ref{results1}\subref{results1a}, and the line conductances, Fig.~\ref{results1}\subref{results1c}, can be identified with very high precision (less than $1\%$ of the true value).
In contrast, the RRMSE of the load demands of individual components, Fig.~\ref{results1}\subref{results1b}, is several orders of magnitude higher.
We conclude that, identifying the individual components of the loads with satisfactory performance might require excessive (even prohibitive) training epoch durations to suppress the noise.
However, in many upper layer applications, detailed knowledge on the individual load component demands is not necessary and knowing only the total bus demand $d_n^{\star}=d_n^{\text{ca}}+d_n^{\text{cc}}+d_n^{\text{cp}}$ is sufficient \cite{6,last}; in such case, an estimate of the total load demand vector $\mathbf{d}^{\star} = [d_1^{\star},\hdots,d_N^{\star}]^{\mathsf{T}}$, comprising the total demands at each bus, can be obtained from $\hat{\mathbf{d}}$ via $\hat{\mathbf{d}}^{\star} = [\mathbf{I}_N,\mathbf{I}_N,\mathbf{I}_N]\hat{\mathbf{d}}$.
Fig.~\ref{results1}\subref{results1b} shows that $\hat{\mathbf{d}}^{\star}$ can be identified with a precision comparable to the one achieved for the generation capacities and line conductances.

\begin{figure*}[t]
\centering
\subfloat[$\mathbf{g}_{-n}$]{\includegraphics[scale=0.61]{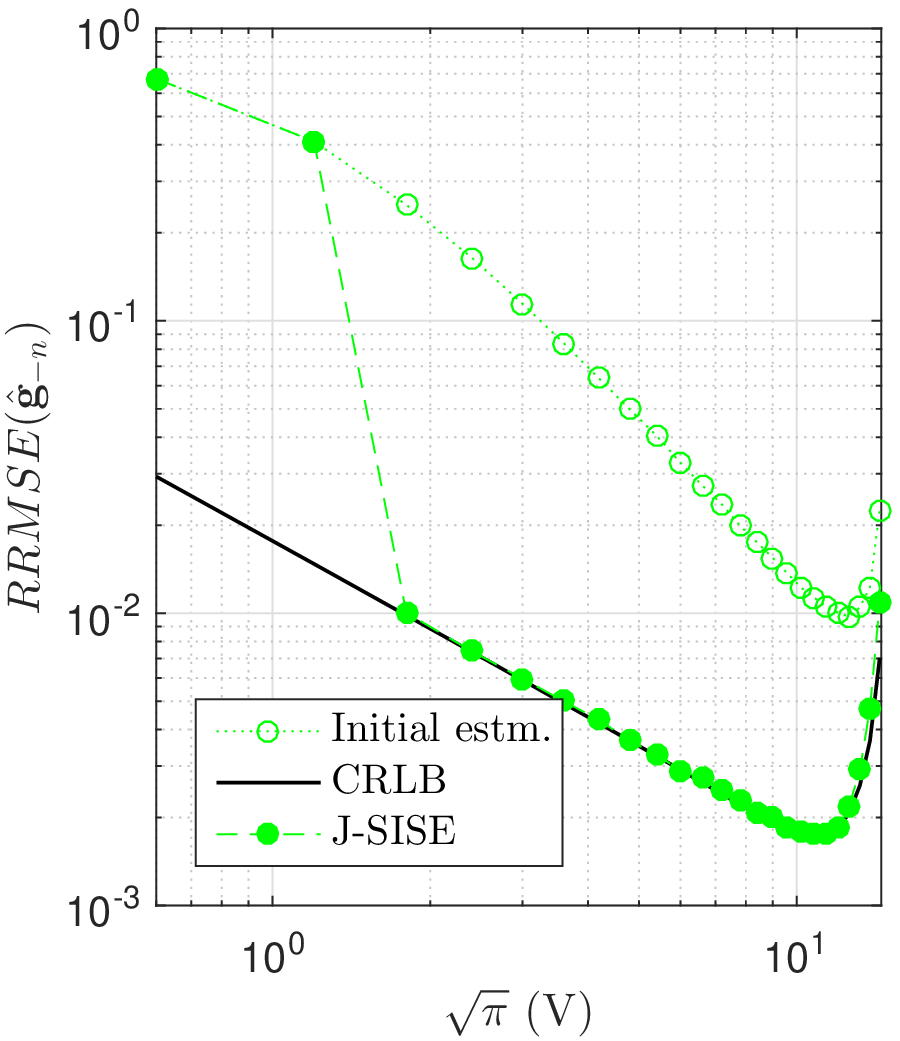}\label{results1a}}
\hfil
\subfloat[$\mathbf{d}$ and $\mathbf{d}^{\star}$]{\includegraphics[scale=0.61]{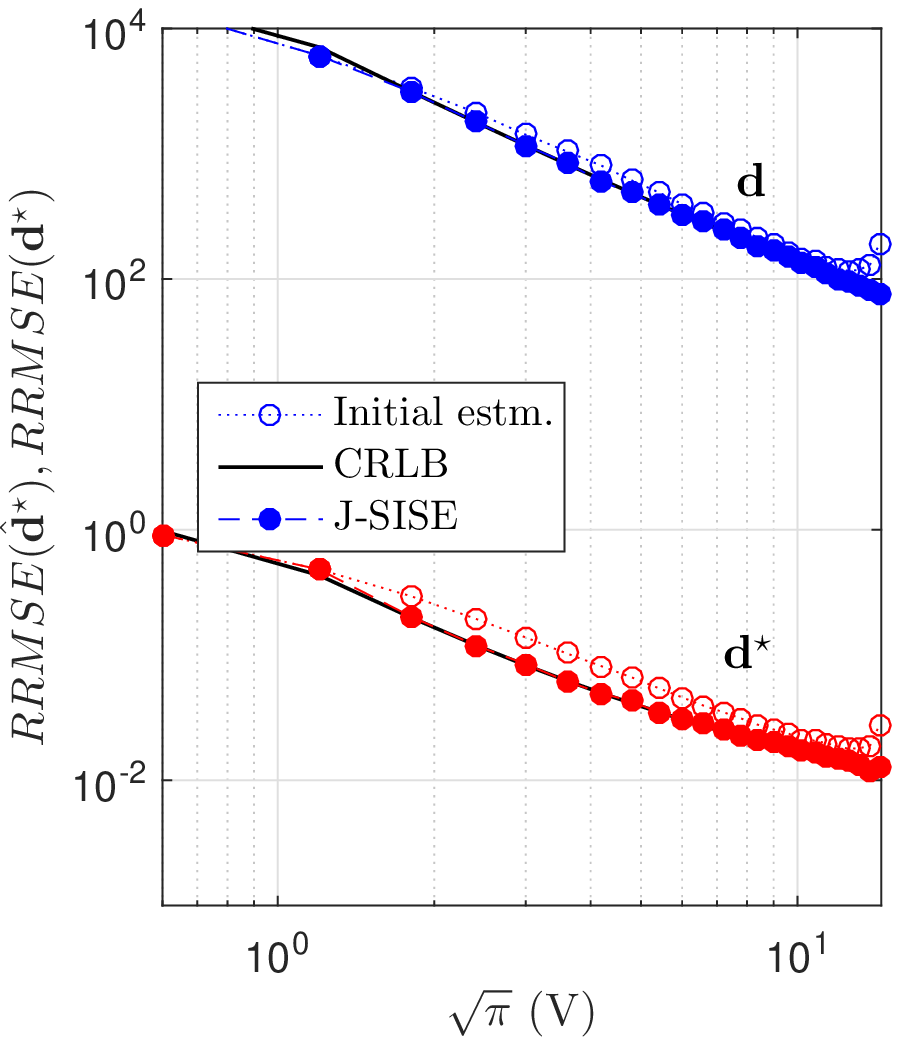}\label{results1b}}
\hfil
\subfloat[$\mathbf{\psi}$]{\includegraphics[scale=0.61]{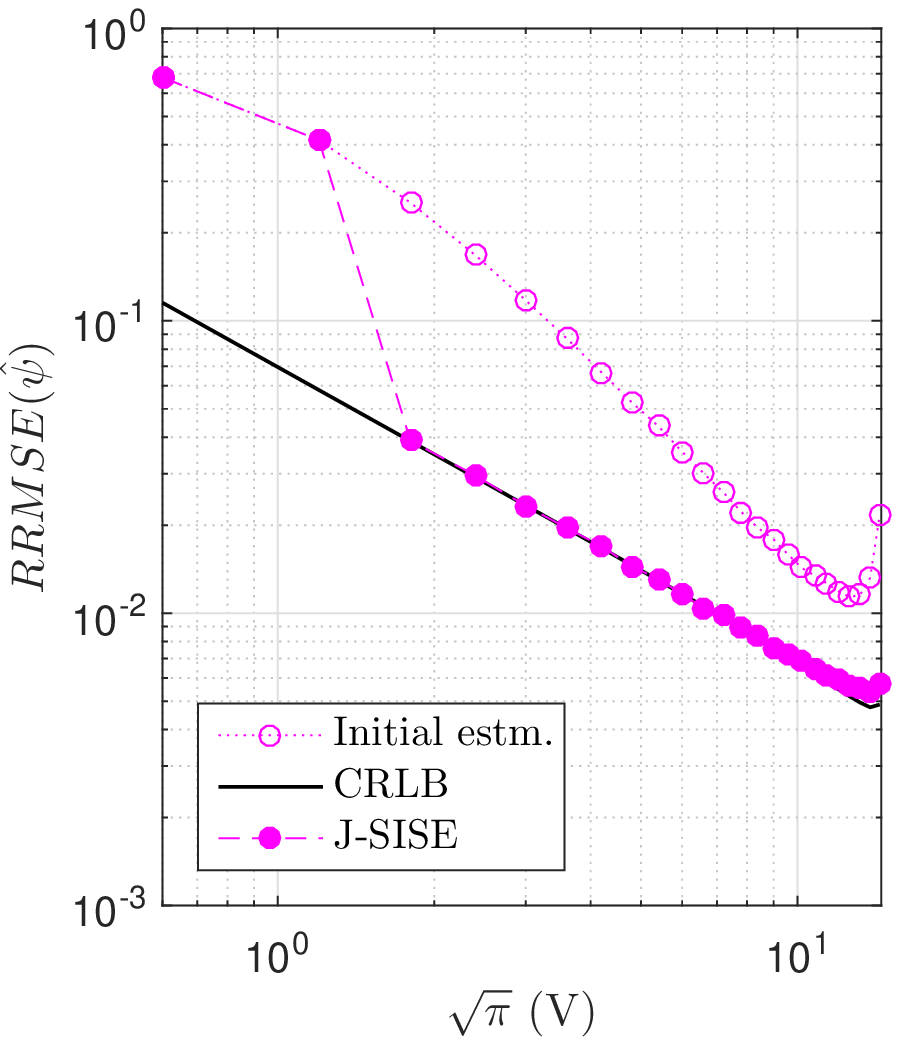}\label{results1c}}
\caption{Standard deviation of the estimation error per component of the vectors $\mathbf{g}_{-n}$, $\mathbf{d}$, $\mathbf{d}^{\star}$ and $\boldsymbol{\psi}$ in a cut-ring MG with $N=6$ DERs and slot duration $\tau=50$ ms.}
\label{results1}
\end{figure*}

\begin{figure}[t]
\centering
\includegraphics[scale=0.61]{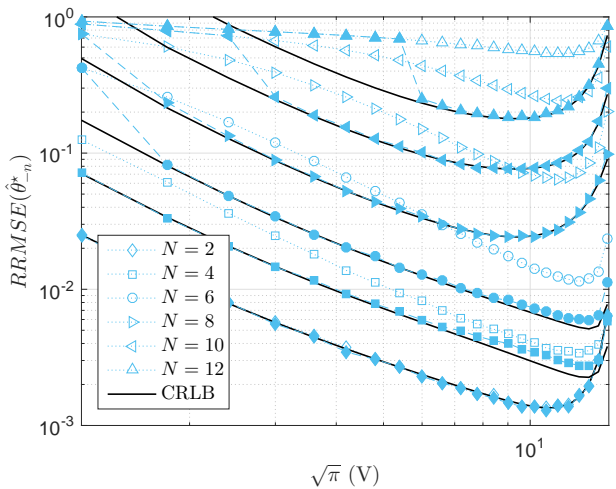}
\caption{Scalability performance per component of the parameter vector ${\boldsymbol{\theta}}_{-n}^{\star} = [\mathbf{g}^{\mathsf{T}},(\mathbf{d}^{\star})^{\mathsf{T}},\boldsymbol{\psi}^{\mathsf{T}}]$ in cut-ring MG with increasing number of DERs and slot duration $\tau=50$ ms.}
\label{results2}
\end{figure}

The improvement of $\hat{\boldsymbol{\theta}}_{-n}$ w.r.t. $\boldsymbol{\theta}_{-n}^{(0)}$ given with \eqref{eq:init_theta}, is also evident, clearly showing that the initial estimate is not efficient.
The numerical results (not shown here due to space limitations) show that the average of $\boldsymbol{\theta}_{-n}^{(0)} - \hat{\boldsymbol{\theta}}_{-n}$ converges to zero vector asymptotically.
We conclude that the initial estimate $\boldsymbol{\theta}_{-n}^{(0)}$ is indeed unbiased estimator of $\boldsymbol{\theta}_{-n}$ and can be still used in practice even though it is not efficient, particularly, when $\sqrt{\pi}$ is of the same order as/smaller than $\sigma$ or for small $N$.
In the first case, Algorithm~\ref{alg1} does not converge, see Fig.~\ref{results1}, and $\boldsymbol{\theta}_{-n}^{(0)}$ remains as the only reasonable choice.
The second case can be more clearly observed in Fig.~\ref{results2} that investigates the performance of the framework for increasing number of buses; we see that for small number of buses (e.g. $N=2$), the RRMSE of the initial estimate approaches the CRLB; in such case, the gain from applying Algorithm~\ref{alg1} is marginal, and $\boldsymbol{\theta}_{-n}^{(0)}$ is sufficient for all practical purposes.

From Fig.~\ref{results2}, we also observe that, the performance of J-SISE tends to deteriorate as the number of buses increases, which is expected due to the increase of $\mathsf{dim}(\boldsymbol{\theta}_{-n})$.
A straightforward way to improve the performance of Algorithm~\ref{alg1} and make the estimation error arbitrarily small for large $N$, is to increase $\tau$.
However, note that \eqref{eq:MLE_general_SB} treats the vector $\boldsymbol{\psi}$ as full vector, when in fact it may be sparse, containing many zero entries. 
This might prove to be problematic as the size of the MG scales, i.e., as the number of buses increases since larger distribution systems are significantly sparser \cite{rev9}, so estimating $\boldsymbol{\psi}$ as if it is full vector might lead to performance degradation \cite{rev10}.
So, an appropriate way to improve the performance when $N$ is large (which is out of the scope of this work) is to modify \eqref{eq:MLE_general_SB} by adding sparsity constraint on $\boldsymbol{\psi}$ and apply a common relaxation method \cite{rev10}.

Finally, we comment on the convergence speed of Algorithm~\ref{alg1}; in all tested cases, that is for $N\leq 12$, Algorithm~\ref{alg1} converges already after $10$ iterations.
This remarkable result can be mainly attributed to the fact that the initial estimates $\boldsymbol{\chi}^{(0)}$, $\boldsymbol{\theta}_{-n}^{(0)}$, given with eq. \eqref{eq:init_v}, \eqref{eq:init_theta}, respectively, form a stationary point of the optimization problem \eqref{eq:MLE_general_SB} (see subsection~\ref{sec:JMLE}).
The additional fact that they are also (asymptotically) unbiased, implies that $\boldsymbol{\theta}_{-n}^{(0)}$ must lie in a neighborhood around $\hat{\boldsymbol{\theta}}_{-n}$, possibly being an inflection point from which it can easily converge to the global optimum only after several iterations. 

\subsection{Optimizing the Cost Trade-off in DOED}
\label{sec:cost_tradeoff}

The results presented in the previous subsection do not consider (i) the effect that the estimation error has on the upper layer applications, and (ii) the effect that the power dissipation during training has on the overall performance of the MG.
In other words, improving the performance of J-SISE, which is desirable from the perspective of the upper layer application, comes at the ``price'' of increased power dissipation during training, either by using large perturbation amplitudes or long slot durations, which in turn compromises the performance of the upper layer control application.
This leads to a fundamental trade-off between the performance of J-SISE, which is determined by the configuration of the training epoch, and the performance of the application.
Our goal is to (i) show how to characterize this trade-off via \emph{utility function} that jointly captures the performances of J-SISE and the upper layer application, and (ii) provide guidelines on how to design optimal training epochs, namely, how to choose $\tau$ and $\sqrt{\pi}$ such that the utility function is optimized.

As a case study, we take the DOED protocol, described in subsection~\ref{sec:DOED}, noting that the approach described below can be applied to any upper layer application.
The performance of specific DOED policy vectors $\mathbf{p}$ is assessed via the cost $c(\mathbf{p})=\mathbf{a}^{\mathsf{T}}\mathbf{p}$. 
The cost of the optimal policy $\mathbf{p}^{*}$ is $c^* = c(\mathbf{p}^*) + c^{\text{extra}} = \mathbf{a}^{\mathsf{T}}\mathbf{p}^* + c^{\text{extra}}$ with $c^{\text{extra}}$ denoting any extra cost entailed by activating backups in case the MG is unbalanced.
$c^*$ is in fact the minimal cost, attainable only when $\mathbf{g}$ and ${d}^{\star}$ are \emph{perfectly known} to each controller.
However, when running the DOED protocol using the estimated parameter vector, see subsection~\ref{sec:DOED}, the cost of the resulting dispatch policy vector $\hat{\mathbf{p}}^*$ should also account for (i) the fact that $\hat{\mathbf{p}}^*\neq\mathbf{p}^*$, i.e., the DOED policy $\hat{\mathbf{p}}^*$ is, in general, suboptimal, (ii) the fact that $\hat{\mathbf{p}}^*$ is valid only in the optimal operation epoch within the OED epoch, and (iii) the power dissipation incurred in the training epoch.
We denote this cost with $\hat{c}^*$ and we write:
\begin{align}\label{eq:OED_cost_total}
\hat{c}^* = \frac{\tau}{\tau^{\text{OED}}}\mathbf{1}_T^{\mathsf{T}}\mathbf{P}\mathbf{a} + \frac{\tau^{\text{OED}} - T\tau}{\tau^{\text{OED}}}\left(\mathbf{a}^{\mathsf{T}}\hat{\mathbf{p}}^* + \hat{c}^{\text{extra}}\right),
\end{align}
where the $T\times N$ matrix $\mathbf{P}$ is defined as $[\mathbf{P}]_{t,n}=p_n(t),~n\in\mathcal{N},~t\in\mathcal{T}$ and $p_n(t)$ is the output power of DER $n$ in slot $t$.
The first term corresponds to the cost of training, whereas the second gives the actual cost of $\hat{\mathbf{p}}^*$.
We define the \emph{Relative Cost Increase (RCI)} $\hat{\mu}$, relative to the optimal cost $c^*$:
\begin{equation}\label{eq:ave_RCI}
\hat{\mu} = \frac{\tau}{\tau^{\text{OED}}}\frac{\mathbf{1}_T^{\mathsf{T}}\mathbf{P}\mathbf{a}}{\mathbf{a}^{\mathsf{T}}\mathbf{p}^* + c^{\text{extra}}} + \frac{\tau^{\text{OED}} - T\tau}{\tau^{\text{OED}}}\frac{\mathbf{a}^{\mathsf{T}}\hat{\mathbf{p}}^* + \hat{c}^{\text{extra}}}{\mathbf{a}^{\mathsf{T}}\mathbf{p}^* + c^{\text{extra}}} - 1.
\end{equation}
The RCI can be interpreted as a measure of the additional monetary charge that the the community served by the MG will be subjected to when operating autonomously using the proposed DOED protocol, without any access to external communication enabler.
We observe that $\hat{\mu}$ is a random variable whose pdf is parametrized w.r.t fixed $\boldsymbol{\theta}$.
In practice, it is desirable to optimize the performance of the upper layer application over the range of $\boldsymbol{\theta}$, which the MG is foreseen to operate in.
Therefore, we choose the \emph{average RCI}, denoted by ${\mu}$ and computed as an average of $\hat{\mu}$ over $\boldsymbol{\theta}$, to be the utility function for the DOED. 
The aim is to find the optimal training epoch configuration parameters, namely, $\tau$ and $\sqrt{\pi}$, that minimize the average RCI:
\begin{align}\label{eq:DOED_optimization}
  \tau^*,\sqrt{\pi^*} & = \min_{\tau,\sqrt{\pi}}\mu(\tau,\sqrt{\pi}),\\\nonumber
\text{s.t. } & \tau^{\text{transit}}<\tau\leq\tau_{\max},~0<\sqrt{\pi}<\Delta v.
\end{align}
Computing $\mu$ in closed form is far from trivial; therefore, we resort to Monte-Carlo simulation, run the DOED protocol for $100000$ different values of $\boldsymbol{\theta}$ and use the statistical average of the individual RCIs as an estimate of $\mu$.
In each trial, $\boldsymbol{\theta}$ is generated independently from the uniform distribution, i.e., $\mathbf{g}\in\mathsf{Unif}[\mathbf{0}_N,g\mathbf{1}_N]$, $\mathbf{d}^{\text{c}\cdot}\in\mathsf{Unif}[\mathbf{0}_N,d^{\text{c}\cdot}\mathbf{1}_N]$, where $g$, $d^{\text{c}\cdot}$ are given in Table~\ref{table:table2}; note that we keep the line conductances fixed to $y$ as the topology changes very infrequently compared to the generation and the load.

Rewriting $p_n(t)=\tilde{p}_n+\Delta p_n(t)$, where $\tilde{p}_n$ is the output power of DER $n$ corresponding to the nominal droop parameters, the time average of the power dissipation $\sum_{t\in\mathcal{T}}\Delta p_n(t)\approx 0$.
We conclude that with \eqref{eq:ave_RCI} and linear OED cost function, it is difficult to asses the impact of power dissipation during training.
Therefore, we introduce a \emph{quadratically-modified} RCI (QRCI), denoted with $\hat{\eta}$:
\begin{equation}\label{eq:eta}
\hat{\eta} = \hat{\mu} + q\mathbf{1}_T^{\mathsf{T}}\mathbf{Q}\mathbf{1}_N,
\end{equation}
where the $T\times N$ matrix $\mathbf{Q}$ is defined as $[\mathbf{Q}]_{t,n} = (p_n(t) - \tilde{p}_n)^2,~n\in\mathcal{N},~t\in\mathcal{T}$ and $0<q\leq \frac{\tau}{\tau^{\text{OED}}c^*}=q_{\max}$.
In similar way as $\mu$, we define the average QRCI, denoted with $\eta$ and restate the optimization problem \eqref{eq:DOED_optimization} with $\eta$ as utility function.

\begin{figure*}[t]
\centering
\subfloat[Average RCI.]{\includegraphics[scale=0.61]{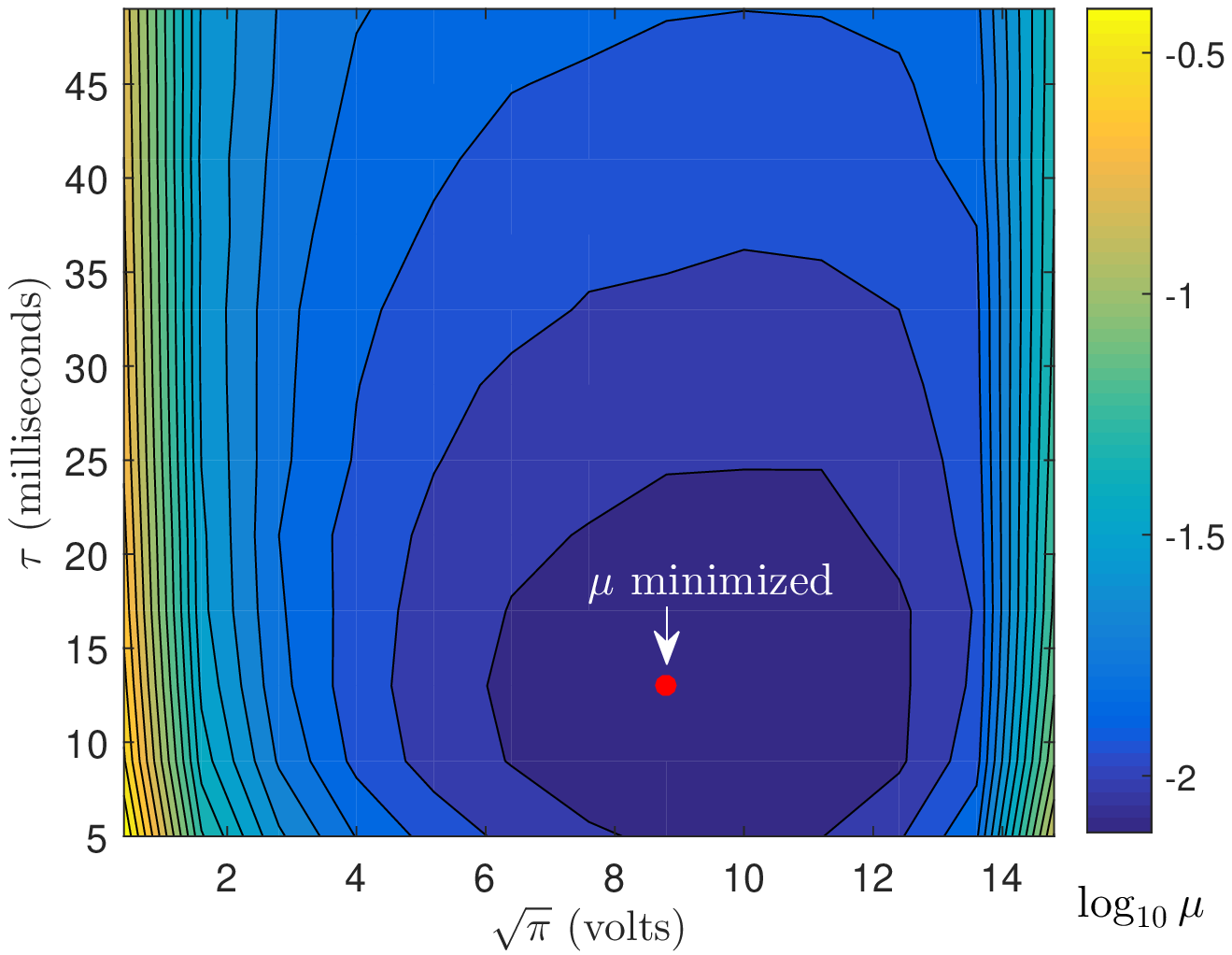}\label{Lambda}}
\hfil
\subfloat[Average QRCI.]{\includegraphics[scale=0.61]{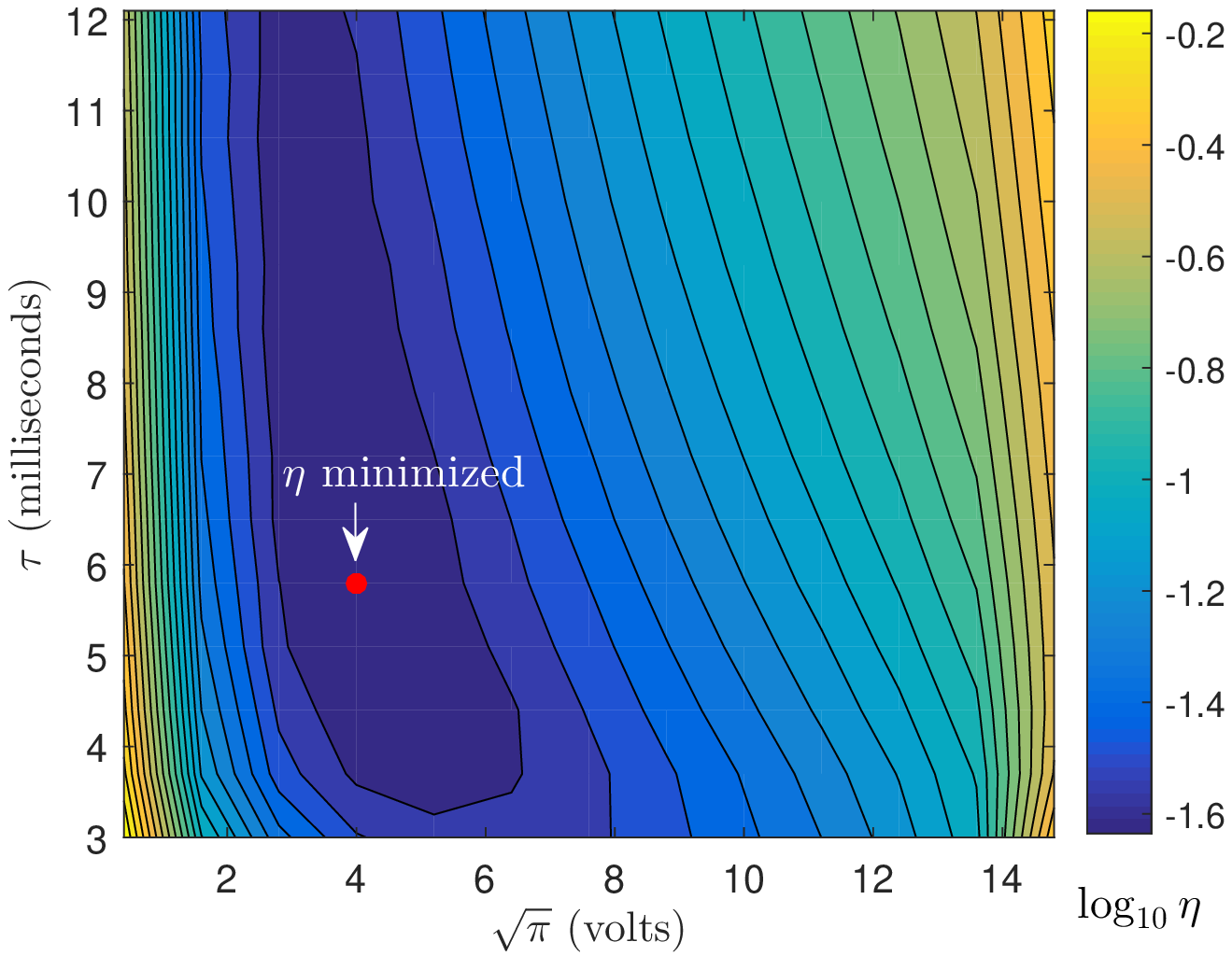}\label{qLambda}}
\caption{Finding the optimal training epoch parameters $\tau$ and $\sqrt{\pi}$ that minimize the DOED utility function in DC MG with $N=6$ DERs and marginal costs $\mathbf{a} = [3,3,5,5,8,11]^{\mathsf{T}}$ units/W, $q=q_{\max}$. The utility functions, i.e., the average RCI \eqref{eq:ave_RCI} and its quadratically modified variant \eqref{eq:eta} that penalizes power dissipation during training more heavily, quantify the additional monetary cost (expressed in log-domain in the figure) entailed by the proposed solution relative to the cost of the optimal dispatch policy. It is interesting to note that both, the average RCI and the QRCI are below $5\%$ (in log-domain below $\approx -1.3$) for a wide range of training epoch configurations.}
\label{Lambda_all}
\end{figure*}

The results are presented in Fig.~\ref{Lambda_all}.
We observe that within the investigated domain, the average RCI, see Fig.~\ref{Lambda} is a convex function of $\tau$ and $\sqrt{\pi}$.
Specifically, for fixed $\sqrt{\pi}$, $\mu$ decrease as $\tau$ increases due to the effect of noise suppression, see \eqref{eq:noise_calculation}.
In this regime, the duration of the training epoch is still very short relative to $\tau^{\text{OED}}$, such that the first term in \eqref{eq:OED_cost_total} is negligible and the RCI is dominated by the second term which decreases towards $c^*$ as the estimation error is reduced.
However, $\mu$ hits a turning point when $\tau$ and, consequently, the duration of the training epoch become long enough such that the first term in \eqref{eq:OED_cost_total} starts to dominate over the second; after this, it makes no sense to keep increasing $\tau$ as $\mu$ will also increase.
Conversely, for fixed $\tau$, $\mu$ decreases as $\sqrt{\pi}$ increases until it hits the turning point after which it starts to increase quickly; evidently, this is happening when we get very close to $\Delta v$.
As discussed in the previous subsection, the performance of J-SISE starts to deteriorate when $\sqrt{\pi}\rightarrow\Delta v$, pushing the second term in \eqref{eq:OED_cost_total} away from its lower bound $c^*$.
Hence, within the domain of interest, the average RCI for an MG, specified in Table~\ref{table:table2} and the caption of Fig.~\ref{Lambda}, is minimized when $\sqrt{\pi}\approx 8.8$ volts and $\tau\approx 13$ milliseconds.
The minimized average RCI is $\mu^*\approx 0.008$; in other words, the average increase of the cost is less than $1\%$ of the optimal cost $c^*$.
This increase, besides being completely tolerable by the OED \cite{3}, it is also comparable to the additional operating cost charges imposed by mobile operators when employing wireless cellular solution not including the cost of installing dedicated communication hardware \cite{3,6}.

Similarly as the average RCI, the average QRCI, see Fig.~\ref{qLambda}, is also a convex function of $\tau$ and $\sqrt{\pi}$ within the investigated domain with behaviour governed by the same reasoning we used on the average RCI.
However, the minimum this time moves closer to the down-left corner due to the second term in \eqref{eq:eta}.
Specifically, $\eta$ is minimized when $\sqrt{\pi}\approx 4$ volts and $\tau\approx 5.1$ milliseconds with average RCI $\mu\approx 0.015$, i.e., still around $1\%$ of $c^*$.

\section{Concluding Remarks}
\label{sec:conc}

We introduced autonomous system identification solution, based on temporary primary control perturbations and iterative ML-based algorithm for DC MGs and without access to an external communication system.
The method is implemented in a decentralized manner within the primary droop controllers of the PECs and enables the controllers to learn i) the generation capacities of power sources, ii) the load demands, and iii) distribution network topology using only local bus voltage measurements.
The key enabling tool is the decentralized training where the primary controllers inject small, amplitude-modulated training sequences that complete the rank of the estimation problem and enable regaining full system observability.
We evaluated the performance of the ML-based algorithm, showing that we can achieve high reliability in DC MGs of small to moderate size ($N\leq 12$).
Then, we showcased the potential of the solution in fully decentralized OED where the controllers perform training periodically and reconfigure according to the locally estimated information.
Last but not least, we illustrated an elaborate methodology for designing training epochs that optimize the operational cost of an autonomous DC MG.

Although we focused on DC MGs and we used several assumptions that simplified the developments, the same design principles introduced in this paper can be applied to any cyber-physical system with dual-layer control architecture that does not not have access to external communication resources, under broader circumstances.
Such investigations are part of our on-going and future work.

\appendices
\section{Proof of Proposition I}\label{app:propI}
The power balance condition in each slot states that:
\begin{equation}\label{TSP2017_eq:MBMG_power_balance_compact}
\omega_n(t) = 0,~n\in\mathcal{N},~t\in\mathcal{T}.
\end{equation}
Recall that during training all DERs are in droop-controlled VSC mode configured for proportional power sharing.
Hence $\zeta_n=1$ for any $n\in\mathcal{N}$.
In such case \eqref{TSP2017_eq:MBMG_power_balance_compact} can be rewritten as:
\begin{align}\label{TSP2017_eq:MBMG_power_balance_compact_proof}
\omega_n(t) & = v_n^2(t)\left( s_n(t)g_n + \frac{1}{x^2}d_n^{\text{ca}}\right) + v_n(t)\sum_{m\in\mathcal{N}}(v_n(t)-v_m(t))y_{n,m} - v_n(t)\left(x_n(t)s_n(t)g_n - \frac{1}{x}d_n^{\text{cc}}\right) + d_n^{\text{cp}}. 
\end{align}
Let $\boldsymbol{\omega}^t$ be defined as $[\boldsymbol{\omega}^t]_n=\omega_n(t)$; we get:
\begin{align}
\boldsymbol{\omega}^t & = \left(\mathsf{D}(\mathbf{g})\mathbf{s}^t + \frac{1}{x^2}\mathbf{d}^{\text{ca}}\right)\odot\mathbf{v}^t\odot\mathbf{v}^t + (\mathbf{Y}\mathbf{v}^t)\odot\mathbf{v}^t - \left(\mathsf{D}(\mathbf{g})(\mathbf{x}^t\odot\mathbf{s}^t) - \frac{1}{x}\mathbf{d}^{\text{cc}}\right)\odot\mathbf{v}^t + \mathbf{d}^{\text{cp}},
\end{align}
where $\mathbf{v}^{t}$, $\mathbf{x}^{t}$ and $\mathbf{v}^{t}$, defined as $[\mathbf{v}^{t}]_n = v_n(t)$, $[\mathbf{x}^{t}]_n = x_n(t)$ and $[\mathbf{s}^{t}]_n = s_n(t)$, represent the $t$-th rows of $\mathbf{V}$, $\mathbf{X}$ and $\mathbf{S}$, respectively.
Stacking $(\boldsymbol{\omega}^t)^{\mathsf{T}}$ vertically for each $t\in\mathcal{T}$, we get the power balance matrix $\mathbf{\Omega}$:
\begin{align}\nonumber
\mathbf{\Omega} & = 
\begin{bmatrix}
(\mathbf{s}^1)^{\mathsf{T}}\mathsf{D}(\mathbf{g}) + \frac{(\mathbf{d}^{\text{ca}})^{\mathsf{T}}}{x^2}\\
\vdots\\
(\mathbf{s}^T)^{\mathsf{T}}\mathsf{D}(\mathbf{g}) + \frac{(\mathbf{d}^{\text{ca}})^{\mathsf{T}}}{x^2}
\end{bmatrix}\odot
\begin{bmatrix}
(\mathbf{v}^1)^{\mathsf{T}}\\
\vdots\\
(\mathbf{v}^T)^{\mathsf{T}}
\end{bmatrix}\odot
\begin{bmatrix}
(\mathbf{v}^1)^{\mathsf{T}}\\
\vdots\\
(\mathbf{v}^T)^{\mathsf{T}}
\end{bmatrix} +
\begin{bmatrix}
(\mathbf{v}^1)^{\mathsf{T}}\mathbf{Y}\\
\vdots\\
(\mathbf{v}^T)^{\mathsf{T}}\mathbf{Y}
\end{bmatrix}\odot
\begin{bmatrix}
(\mathbf{v}^1)^{\mathsf{T}}\\
\vdots\\
(\mathbf{v}^T)^{\mathsf{T}}
\end{bmatrix}\\
& -
\begin{bmatrix}
((\mathbf{s}^1)^{\mathsf{T}}\odot(\mathbf{x}^1)^{\mathsf{T}})\mathsf{D}(\mathbf{g}) - \frac{(\mathbf{d}^{\text{cc}})^{\mathsf{T}}}{x}\\
\vdots\\
((\mathbf{s}^1)^{\mathsf{T}}\odot(\mathbf{x}^1)^{\mathsf{T}})\mathsf{D}(\mathbf{g}) - \frac{(\mathbf{d}^{\text{cc}})^{\mathsf{T}}}{x}
\end{bmatrix}\odot
\begin{bmatrix}
(\mathbf{v}^1)^{\mathsf{T}}\\
\vdots\\
(\mathbf{v}^T)^{\mathsf{T}}
\end{bmatrix} + 
\begin{bmatrix}
(\mathbf{d}^{\text{cp}})^{\mathsf{T}}\\
\vdots\\
(\mathbf{d}^{\text{cp}})^{\mathsf{T}}
\end{bmatrix} = \mathbf{0}_{T\times N},
\end{align}
yielding the compact form \eqref{eq:PowerBalanceMatrix} which completes the derivation.

\section{$\boldsymbol{\theta}_{-n}$ {is not identifiable when the system is not observable}}\label{app:nonident}
We consider the following situation: controller $n$ knows only $\mathbf{w}_n$ and knows $\mathbf{X}$ and $\mathbf{S}$ \emph{completely}.
In other words, the controllers \emph{do not} exchange any local steady state voltage measurements as in the proposed solution, i.e., the $C$-phase training matrices are completely deterministic and known.
Hence, all other columns $\mathbf{w}_m,m\neq n$ are \emph{not observable}. 
Since the power balance equation concerning the observable voltages $\boldsymbol{\omega}_n=\mathbf{0}_T$ also includes and depends on $\mathbf{v}_m,m\neq n$ (as a result of the fact that the buses are connected through $\mathbf{Y}$) and if classical, non-Bayesian framework is employed (without exploiting any prior knowledge), $\mathbf{v}_m,m\neq n$ should be treated as \emph{unknown} parameters in the same way as the generation capacities, load demands and line conductances. 
Therefore, the parameter vector $\boldsymbol{\theta}$ should be redefined as:
\begin{equation}
\boldsymbol{\theta} = [\mathbf{g}^{\mathsf{T}},\mathbf{d}^{\mathsf{T}},\boldsymbol{\psi}^{\mathsf{T}},\mathbf{v}_m^{\mathsf{T}}]_{m\neq n}^{\mathsf{T}},
\end{equation}
with
\begin{equation}
\mathsf{dim}(\boldsymbol{\theta}) = \frac{1}{2}N(N+7) + (N-1)T.
\end{equation}
The sufficient excitation conditions in this case should be restated in term of $\mathbf{\omega}_n$ since only $\mathbf{v}_n$ is observable; we get:
\begin{align}
\mathsf{rank}(\mathbf{\Upsilon}_{-n}) & = \mathsf{dim}(\boldsymbol{\theta}_{-n}),\\
\mathsf{rank}(\mathbf{\Gamma}) & = NT,
\end{align}
where $\mathbf{\Upsilon}_{-n}$ and $\mathbf{\Gamma}$ are the Jacobians of $\mathbf{\omega}_n$ w.r.t. $\boldsymbol{\theta}_{-n}$ and $\mathsf{vec}(\mathbf{V})$, respectively.
It becomes immediately evident that $\mathbf{\Upsilon}_{-n}$ is a fat matrix, i.e., $\mathsf{dim}(\mathbf{\Upsilon}_{-n}) = T\times\mathsf{dim}(\boldsymbol{\theta}_{-n})$ with column rank at most $T<\mathsf{dim}(\boldsymbol{\theta}_{-n})$; hence, the first sufficient excitation condition is not satisfied and $\boldsymbol{\theta}_{-n}$ cannot be uniquely identified.

Equivalently, one can look at the same problem from the perspective of the constrained ML optimization.
Namely, the joint parameter/state vector now is:
\begin{equation}
\boldsymbol{\vartheta} = \begin{bmatrix}
\boldsymbol{\theta}\\\mathbf{v}_n
\end{bmatrix}.
\end{equation}
The constrained ML optimization problem should be formulated over $\mathbf{\omega}_n$ since only $\mathbf{v}_n$ is observable: 
\begin{align}
\hat{\boldsymbol{\vartheta}}_{-n} & = \min_{{\boldsymbol{\vartheta}}_{-n}}\left\{-\ln\rho(\mathbf{w}_n;\boldsymbol{\theta})\right\}\\\nonumber
\text{s.t. }&\mathbf{\omega}_n = \mathbf{0}_{T}.
\end{align}
Clearly, the number of linearly independent equality constraints is at most $T<\mathsf{dim}(\boldsymbol{\theta}_{-n})$, yielding an ill-conditioned optimization problem that does not converge to any meaningful solution. 

\section{Proof of Proposition II}\label{app:propII}
Controller $n$ derives the channel estimator $\hat{\mathbf{h}}_n$ using the measurement vector from sub-phase $\alpha$, i.e., $\mathbf{w}_n^{\alpha}$.
Replacing $\sqrt{\pi_n(t)}=\sqrt{\pi^{\alpha}},~n\in\mathcal{N},~t\in\mathcal{T}^{\alpha}$ in the linear model
\begin{equation}\label{eq:linear_model}
\mathbf{w}_n^{\alpha/\beta} \approx \tilde{v}_n\mathbf{1}_{T^{\alpha/\beta}} + (\mathbf{\Pi}^{\alpha/\beta}\odot\Delta\mathbf{X}^{\alpha/\beta})\mathbf{h}_n + \mathbf{z}_n^{\alpha/\beta},
\end{equation}
we get:
\begin{align}
\mathbf{w}_n^{\alpha} \approx \tilde{v}_n\mathbf{1}_{T^{\alpha}} + \sqrt{\pi^{\alpha}}\Delta\mathbf{X}^{\alpha}\mathbf{h}_n + \mathbf{z}_n^{\alpha}.
\end{align}
Using the above, $\hat{\mathbf{h}}_n$ is obtained by solving the linear least squares problem:
\begin{align}\label{TSP2017_eq:LLSE_ch_est}
\hat{\mathbf{h}}_n & = \min_{\mathbf{h}_n}\|\mathbf{w}_n^{\alpha} - \tilde{v}_n\mathbf{1}_{T^{\alpha}} - \sqrt{\pi^{\alpha}}\Delta\mathbf{X}^{\alpha}\mathbf{h}_n\|_2^2\\
& = \frac{1}{\sqrt{\pi^{\alpha}}\delta^{\alpha}}(\Delta\mathbf{X}^{\alpha})^{\mathsf{T}}\mathbf{w}_n^{\alpha}.
\end{align}
Using $\sqrt{\pi_n(t)}=\sqrt{\pi^{\beta}}(\overline{w}_n(b) - \chi_n),~n\in\mathcal{N},~t\in\mathcal{T}^{\beta;b},~b\in\overline{\mathcal{T}}$, \eqref{eq:linear_model} can be rewritten as:
\begin{align}
\mathbf{w}_n^{\beta;b} & = \tilde{v}_n\mathbf{1}_{L} + \sqrt{\pi^{\beta}}\Delta\mathbf{X}^{\beta;b}\mathsf{D}(\overline{\mathbf{w}}^b - \boldsymbol{\chi})\mathbf{h}_n + \mathbf{z}_n^{\beta;b}\\
& = \tilde{v}_n\mathbf{1}_{L} + \sqrt{\pi^{\beta}}\Delta\mathbf{X}^{\beta;b}\mathsf{D}(\mathbf{h}_n)(\overline{\mathbf{w}}^b - \boldsymbol{\chi}) + \mathbf{z}_n^{\beta;b},~b\in\overline{\mathcal{T}},
\end{align}
where we used the commutative property of the product $\mathsf{D}(\overline{\mathbf{w}}^b - \boldsymbol{\chi})\mathbf{h}_n$.
Note that $\overline{\mathbf{w}}^b$ is the $b$-th row of the $M$-phase measurement matrix $\overline{\mathbf{W}}$ and contains the data transmitted by the controllers in block $b$.
Using the the channel estimate, controller $n$ obtains a local copy of $\overline{\mathbf{w}}^b$, denoted with $\overline{\mathbf{w}}_{(n)}^b$ by solving the following linear least squares problem:
\begin{align}
\overline{\mathbf{w}}_{(n)}^b & =\min_{\overline{\mathbf{w}}^b}\| \mathbf{w}_n^{\beta;b} - \tilde{v}_n\mathbf{1}_{L} - \sqrt{\pi^{\beta}}\Delta\mathbf{X}^{\beta;b}\mathsf{D}(\hat{\mathbf{h}}_n)(\overline{\mathbf{w}}^b - \boldsymbol{\chi})\|_2^2\\
& = \frac{1}{\sqrt{\pi^{\beta}}\delta^{\beta}}\mathsf{D}^{-1}(\hat{\mathbf{h}}_n)(\Delta\mathbf{X}^{\beta;b})^{\mathsf{T}}\mathbf{w}_n^{\beta;b} + \boldsymbol{\chi}.
\end{align}
Note that $\overline{\mathbf{W}}_{(n)} = \sum_{b\in\overline{\mathcal{T}}}\mathbf{e}_b(\overline{\mathbf{w}}_{(n)}^b)^{\mathsf{T}}$; so we get:
\begin{align}
\overline{\mathbf{W}}_{(n)} = \frac{1}{\sqrt{\pi^{\beta}}\delta^{\beta}}\left(\sum_{b\in\overline{\mathcal{T}}}\mathbf{e}_b({\mathbf{w}}_{n}^{\beta;b})^{\mathsf{T}}\Delta\mathbf{X}^{\beta;b}\right)\oslash(\mathbf{1}_{\overline{T}}\hat{\mathbf{h}}_n^{\mathsf{T}}) - \mathbf{1}_{\overline{T}}\boldsymbol{\chi}^{\mathsf{T}}
\end{align}.
Vectorizing the above, we obtain:
\begin{align}
\mathsf{vec}(\overline{\mathbf{W}}_{(n)}) & = \frac{1}{\sqrt{\pi^{\beta}}\delta^{\beta}}\left(\sum_{b\in\overline{\mathcal{T}}}\mathsf{vec}(\mathbf{e}_b({\mathbf{w}}_{n}^{\beta;b})^{\mathsf{T}}\Delta\mathbf{X}^{\beta;b})\right)\oslash\mathsf{vec}(\mathbf{1}_{\overline{T}}\hat{\mathbf{h}}_n^{\mathsf{T}}) + \mathsf{vec}(\mathbf{1}_{\overline{T}}\boldsymbol{\chi}^{\mathsf{T}})\\
& = \frac{\sqrt{\pi^{\alpha}}\delta^{\alpha}}{\sqrt{\pi^{\beta}}\delta^{\beta}}\left(\sum_{b\in\overline{\mathcal{T}}}\boldsymbol{\mathcal{X}}^{\beta;b}{\mathbf{w}}_{n}^{\beta;b}\right)\oslash(\boldsymbol{\mathcal{X}}^{\alpha}\mathbf{w}_n^{\alpha}) + \boldsymbol{\mathcal{I}}\boldsymbol{\chi}\\
& = \frac{\sqrt{\pi^{\alpha}}\delta^{\alpha}}{\sqrt{\pi^{\beta}}\delta^{\beta}}\mathsf{D}^{-1}(\boldsymbol{\mathcal{X}}^{\alpha}\mathbf{w}_n^{\alpha})\sum_{b\in\overline{\mathcal{T}}}\boldsymbol{\mathcal{X}}^{\beta;b}{\mathbf{w}}_{n}^{\beta;b} + \boldsymbol{\mathcal{I}}\boldsymbol{\chi},
\end{align}
which completes the derivation.

\section{{Derivation of the Gaussian approximation of} $\rho(\mathsf{vec}(\overline{\mathbf{W}});\boldsymbol{\theta})$}\label{app:Gauss_app}
Let $\mathbf{w}_n^{\alpha}=\mathbf{v}_n^{\alpha} + \Delta\mathbf{w}_n^{\alpha}$ where $\Delta\mathbf{w}_n^{\alpha}\sim\mathsf{N}(\mathbf{0}_{T^{\alpha}},\sigma^2\mathbf{I}_{T^{\alpha}})$.
Similarly, $\mathbf{w}_n^{\beta;b}=\mathbf{v}_n^{\beta;b} + \Delta\mathbf{w}_n^{\beta;b}$ where $\Delta\mathbf{w}_n^{\beta;b}\sim\mathsf{N}(\mathbf{0}_{L},\sigma^2\mathbf{I}_{L})$ for $b\in\overline{\mathcal{T}}$.
Then, \eqref{eq:Wk_full_est} can be written as:
\begin{align}
\mathsf{vec}(\overline{\mathbf{W}}_{(n)}) & = \frac{\sqrt{\pi^{\alpha}}\delta^{\alpha}}{\sqrt{\pi^{\beta}}\delta^{\beta}}\mathsf{D}^{-1}(\boldsymbol{\mathcal{X}}^{\alpha}\mathbf{w}_n^{\alpha})\sum_{b\in\overline{\mathcal{T}}}\boldsymbol{\mathcal{X}}^{\beta;b}{\mathbf{w}}_{n}^{\beta;b} + \boldsymbol{\mathcal{I}}\boldsymbol{\chi}\\
& = \frac{\sqrt{\pi^{\alpha}}\delta^{\alpha}}{\sqrt{\pi^{\beta}}\delta^{\beta}}\mathsf{D}^{-1}(\boldsymbol{\mathcal{X}}^{\alpha}(\mathbf{v}_n^{\alpha} + \Delta\mathbf{w}_n^{\alpha}))\sum_{b\in\overline{\mathcal{T}}}\boldsymbol{\mathcal{X}}^{\beta;b}(\mathbf{v}_n^{\beta;b} + \Delta\mathbf{w}_n^{\beta;b}) + \boldsymbol{\mathcal{I}}\boldsymbol{\chi}\\
& = \frac{\sqrt{\pi^{\alpha}}\delta^{\alpha}}{\sqrt{\pi^{\beta}}\delta^{\beta}}\mathsf{D}^{-1}(\boldsymbol{\mathcal{X}}^{\alpha}\mathbf{v}_n^{\alpha})(\mathbf{I}_{\overline{T}N} + \mathsf{D}^{-1}(\boldsymbol{\mathcal{X}}^{\alpha}\mathbf{v}_n^{\alpha})\mathsf{D}(\boldsymbol{\mathcal{X}}^{\alpha}\Delta\mathbf{w}_n^{\alpha}))^{-1}\sum_{b\in\overline{\mathcal{T}}}\boldsymbol{\mathcal{X}}^{\beta;b}(\mathbf{v}_n^{\beta;b} + \Delta\mathbf{w}_n^{\beta;b}) + \boldsymbol{\mathcal{I}}\boldsymbol{\chi}\\
& \overset{(a)}{\approx}\frac{\sqrt{\pi^{\alpha}}\delta^{\alpha}}{\sqrt{\pi^{\beta}}\delta^{\beta}}\mathsf{D}^{-1}(\boldsymbol{\mathcal{X}}^{\alpha}\mathbf{v}_n^{\alpha})(\mathbf{I}_{\overline{T}N} - \mathsf{D}^{-1}(\boldsymbol{\mathcal{X}}^{\alpha}\mathbf{v}_n^{\alpha})\mathsf{D}(\boldsymbol{\mathcal{X}}^{\alpha}\Delta\mathbf{w}_n^{\alpha}))\sum_{b\in\overline{\mathcal{T}}}\boldsymbol{\mathcal{X}}^{\beta;b}(\mathbf{v}_n^{\beta;b} + \Delta\mathbf{w}_n^{\beta;b}) + \boldsymbol{\mathcal{I}}\boldsymbol{\chi}\\\nonumber
& \approx \underbrace{\frac{\sqrt{\pi^{\alpha}}\delta^{\alpha}}{\sqrt{\pi^{\beta}}\delta^{\beta}}\mathsf{D}^{-1}(\boldsymbol{\mathcal{X}}^{\alpha}\mathbf{v}_n^{\alpha})\sum_{b\in\overline{\mathcal{T}}}\boldsymbol{\mathcal{X}}^{\beta;b}\mathbf{v}_n^{\beta;b} + \boldsymbol{\mathcal{I}}\boldsymbol{\chi}}_{\mathsf{vec}(\overline{\mathbf{W}})}\\\label{eq:first_order_approximation}
& + \frac{\sqrt{\pi^{\alpha}}\delta^{\alpha}}{\sqrt{\pi^{\beta}}\delta^{\beta}}\mathsf{D}^{-1}(\boldsymbol{\mathcal{X}}^{\alpha}\mathbf{v}_n^{\alpha})\sum_{b\in\overline{\mathcal{T}}}\boldsymbol{\mathcal{X}}^{\beta;b}\Delta\mathbf{w}_n^{\beta;b} + \frac{\sqrt{\pi^{\alpha}}\delta^{\alpha}}{\sqrt{\pi^{\beta}}\delta^{\beta}}\mathsf{D}^{-2}(\boldsymbol{\mathcal{X}}^{\alpha}\mathbf{v}_n^{\alpha})\sum_{b\in\overline{\mathcal{T}}}\mathsf{D}(\boldsymbol{\mathcal{X}}^{\beta;b}\mathbf{v}_n^{\beta;b})(\boldsymbol{\mathcal{X}}^{\alpha}\Delta\mathbf{w}_n^{\alpha}),
\end{align}
where $(a)$ follows from the Neumann expansion valid on the subset $\mathbf{0}_{T^{\alpha}}<\mathbf{w}_n^{\alpha}<2\mathbf{v}_n^{\alpha}$.
From \eqref{eq:first_order_approximation}, we see that $\mathsf{vec}(\overline{\mathbf{W}}_{(n)})$ can be approximated with Gaussian random vector with mean:
\begin{equation}
\mathbb{E}\left\{\mathsf{vec}(\overline{\mathbf{W}}_{(n)})\right\} \approx \mathbb{E}\left\{\mathsf{vec}(\overline{\mathbf{W}})\right\} = \mathsf{vec}(\overline{\mathbf{V}}),
\end{equation}
and covariance matrix:
\begin{align}
\mathbf{\Sigma} & \approx \mathsf{cov}\left\{\mathsf{vec}(\overline{\mathbf{W}})\right\} + \mathsf{cov}\left\{\frac{\sqrt{\pi^{\alpha}}\delta^{\alpha}}{\sqrt{\pi^{\beta}}\delta^{\beta}}\mathsf{D}^{-1}(\boldsymbol{\mathcal{X}}^{\alpha}\mathbf{v}_n^{\alpha})\sum_{b\in\overline{\mathcal{T}}}\boldsymbol{\mathcal{X}}^{\beta;b}\Delta\mathbf{w}_n^{\beta;b}\right\} + \mathsf{cov}\left\{\frac{\sqrt{\pi^{\alpha}}\delta^{\alpha}}{\sqrt{\pi^{\beta}}\delta^{\beta}}\mathsf{D}^{-2}(\boldsymbol{\mathcal{X}}^{\alpha}\mathbf{v}_n^{\alpha})\sum_{b\in\overline{\mathcal{T}}}\mathsf{D}(\boldsymbol{\mathcal{X}}^{\beta;b}\mathbf{v}_n^{\beta;b})(\boldsymbol{\mathcal{X}}^{\alpha}\Delta\mathbf{w}_n^{\alpha})\right\}\\\nonumber
& = \sigma^2\mathbf{I}_{\overline{T}N} + \frac{\pi^{\alpha}(\delta^{\alpha})^2}{\pi^{\beta}(\delta^{\beta})^2}\mathsf{D}^{-1}(\boldsymbol{\mathcal{X}}^{\alpha}\mathbf{v}_n^{\alpha})\left(\sum_{b\in\overline{\mathcal{T}}}\boldsymbol{\mathcal{X}}^{\beta;b}\mathsf{cov}\left\{\Delta\mathbf{w}_n^{\beta;b}\right\}(\boldsymbol{\mathcal{X}}^{\beta;b})^{\mathsf{T}}\right)\mathsf{D}^{-1}(\boldsymbol{\mathcal{X}}^{\alpha}\mathbf{v}_n^{\alpha})\\
& + \frac{\pi^{\alpha}(\delta^{\alpha})^2}{\pi^{\beta}(\delta^{\beta})^2}\mathsf{D}^{-2}(\boldsymbol{\mathcal{X}}^{\alpha}\mathbf{v}_n^{\alpha})\left(\sum_{b\in\overline{\mathcal{T}}}\mathsf{D}(\boldsymbol{\mathcal{X}}^{\beta;b}\mathbf{v}_n^{\beta;b})\boldsymbol{\mathcal{X}}^{\alpha}\mathsf{cov}\left\{\Delta\mathbf{w}_n^{\alpha}\right\}(\boldsymbol{\mathcal{X}}^{\alpha})^{\mathsf{T}}\mathsf{D}(\boldsymbol{\mathcal{X}}^{\beta;b}\mathbf{v}_n^{\beta;b})\right)\mathsf{D}^{-2}(\boldsymbol{\mathcal{X}}^{\alpha}\mathbf{v}_n^{\alpha})\\\nonumber
& = \sigma^2\mathbf{I}_{\overline{T}N} + \sigma^2\frac{\pi^{\alpha}(\delta^{\alpha})^2}{\pi^{\beta}\delta^{\beta}}\mathsf{D}^{-1}(\boldsymbol{\mathcal{X}}^{\alpha}\mathbf{v}_n^{\alpha})\left(\sum_{b\in\overline{\mathcal{T}}}\mathbf{I}_N\otimes(\mathbf{e}_b\mathbf{e}_b^{\mathsf{T}})\right)\mathsf{D}^{-1}(\boldsymbol{\mathcal{X}}^{\alpha}\mathbf{v}_n^{\alpha})\\
& + \sigma^2\frac{\pi^{\alpha}(\delta^{\alpha})^3}{\pi^{\beta}(\delta^{\beta})^2}\mathsf{D}^{-2}(\boldsymbol{\mathcal{X}}^{\alpha}\mathbf{v}_n^{\alpha})\left(\sum_{b\in\overline{\mathcal{T}}}\mathsf{D}(\boldsymbol{\mathcal{X}}^{\beta;b}\mathbf{v}_n^{\beta;b})(\mathbf{I}_N\otimes\mathbf{1}_{\overline{T}\times\overline{T}})\mathsf{D}(\boldsymbol{\mathcal{X}}^{\beta;b}\mathbf{v}_n^{\beta;b})\right)\mathsf{D}^{-2}(\boldsymbol{\mathcal{X}}^{\alpha}\mathbf{v}_n^{\alpha})\\
& = \sigma^2\left(\mathbf{I}_{\overline{T}N} + \frac{\pi^{\alpha}(\delta^{\alpha})^2}{\pi^{\beta}\delta^{\beta}}\mathsf{D}^{-2}(\boldsymbol{\mathcal{X}}^{\alpha}\mathbf{v}_n^{\alpha}) + \frac{\pi^{\alpha}(\delta^{\alpha})^3}{\pi^{\beta}(\delta^{\beta})^2}\mathsf{D}^{-2}(\boldsymbol{\mathcal{X}}^{\alpha}\mathbf{v}_n^{\alpha})\left(\sum_{b\in\overline{\mathcal{T}}}\mathsf{D}(\boldsymbol{\mathcal{X}}^{\beta;b}\mathbf{v}_n^{\beta;b})(\mathbf{I}_N\otimes\mathbf{1}_{\overline{T}\times\overline{T}})\mathsf{D}(\boldsymbol{\mathcal{X}}^{\beta;b}\mathbf{v}_n^{\beta;b})\right)\mathsf{D}^{-2}(\boldsymbol{\mathcal{X}}^{\alpha}\mathbf{v}_n^{\alpha})\right).
\end{align}

\section{Proof of Proposition III}\label{app:propIII}
The Lagrange method of multipliers casts the original constrained ML problem into an unconstrained as follows:
\begin{equation}\label{TSP2017_eq:MLE_Lagrange}
 \hat{\boldsymbol{\vartheta}}_{-n} = \min_{\hat{\boldsymbol{\vartheta}}_{-n},\boldsymbol{\lambda}}\left\{-\frac{1}{2}\left\|\mathbf{\Sigma}^{-\frac{1}{2}}(\mathsf{vec}(\overline{\mathbf{W}}_{(n)}) - \mathsf{vec}(\overline{\mathbf{V}}))\right\|_2^2 + \boldsymbol{\lambda}^{\mathsf{T}}\mathsf{vec}(\overline{\mathbf{\Omega}})\right\},
\end{equation}
where we used the Gaussian approximation for the pdf $\rho(\mathsf{vec}(\overline{\mathbf{W}}_{(n)});\boldsymbol{\theta})$.
$\boldsymbol{\lambda}$ is $\overline{T}N\times 1$ vector of multipliers.
Applying the KKT conditions to \eqref{TSP2017_eq:MLE_Lagrange} after replacing the power balance constraint with its first order approximation, we get the following system of equations:
\begin{align}\label{TSP2017_eq:KKT_normal_eq_linear}
	   & \mathbf{\Sigma}^{-1}(\mathsf{vec}(\overline{\mathbf{W}}_{(n)}) - \mathsf{vec}(\overline{\mathbf{V}})) - (\mathbf{\Gamma}^{(j)})^{\mathsf{T}}\boldsymbol{\lambda}  = \mathbf{0}_{\overline{T}N}, \\\label{TSP2017_eq:KKT_normal_eq2_linear}
     & (\mathbf{\Upsilon}_{-n}^{(j)})^{\mathsf{T}}\boldsymbol{\lambda}  = \mathbf{0}_{\mathsf{dim}(\boldsymbol{\theta}_{-n})}, \\\label{TSP2017_eq:KKT_normal_eq3_linear}
		 & {\mathbf{\Upsilon}}^{(j)}\boldsymbol{\theta} + {\mathbf{\Gamma}}^{(j)}(\mathsf{vec}(\overline{\mathbf{V}})-\mathsf{vec}(\overline{\mathbf{V}}^{(j)}))  = \mathbf{0}_{\overline{T}N},
\end{align}
The above system is linear in $\boldsymbol{\vartheta}_{-n}$ and can be solved efficiently; the derivation of the solution follows similar steps as in \eqref{31}.
Multiplying \eqref{TSP2017_eq:KKT_normal_eq_linear} with $\mathbf{\Gamma}^{(j)}\mathbf{\Sigma}$ yields:
\begin{align}
\mathbf{\Gamma}^{(j)}\mathsf{vec}(\overline{\mathbf{V}}) = \mathbf{\Gamma}^{(j)}\mathsf{vec}(\overline{\mathbf{W}}_{(n)}) - \mathbf{\Gamma}^{(j)}\mathbf{\Sigma}(\mathbf{\Gamma}^{(j)})^{\mathsf{T}}\boldsymbol{\lambda},
\end{align}
which is substituted in \eqref{TSP2017_eq:KKT_normal_eq3_linear} to yield:
\begin{align}
\mathbf{\Upsilon}^{(j)}\boldsymbol{\theta} + \mathbf{\Gamma}^{(j)}(\mathsf{vec}(\overline{\mathbf{W}}_{(n)})-\mathsf{vec}(\overline{\mathbf{V}}^{(j)})) - \mathbf{\Gamma}^{(j)}\mathbf{\Sigma}(\mathbf{\Gamma}^{(j)})^{\mathsf{T}}\boldsymbol{\lambda} = \mathbf{0}_{\overline{T}N}.
\end{align}
Solving for $\boldsymbol{\lambda}$ gives:
\begin{align}\label{TSP2017_eq:sol_KKT_0}
\boldsymbol{\lambda} & = (\mathbf{\Gamma}^{(j)}\mathbf{\Sigma}(\mathbf{\Gamma}^{(j)})^{\mathsf{T}})^{-1}(\mathbf{\Upsilon}^{(j)}\boldsymbol{\theta} + \mathbf{\Gamma}^{(j)}(\mathsf{vec}(\overline{\mathbf{W}}_{(n)})-\mathsf{vec}(\overline{\mathbf{V}}^{(j)}))).
\end{align}
Multiplying \eqref{TSP2017_eq:sol_KKT_0} with $(\mathbf{\Upsilon}_{-n}^{(j)})^{\mathsf{T}}$ on both sides, gives:
\begin{align}
(\mathbf{\Upsilon}_{-k}^{(j)})\boldsymbol{\lambda} = (\mathbf{\Upsilon}_{-k}^{(j)})^{\mathsf{T}}(\mathbf{\Gamma}^{(j)}\mathbf{\Sigma}(\mathbf{\Gamma}^{(j)})^{\mathsf{T}})^{-1}(\mathbf{\Upsilon}^{(j)}\boldsymbol{\theta} + \mathbf{\Gamma}^{(j)}(\mathsf{vec}(\overline{\mathbf{W}}_{(n)})-\mathsf{vec}(\overline{\mathbf{V}}^{(j)}))),
\end{align}
which, after replacing $\mathbf{\Upsilon}^{(j)}\boldsymbol{\theta} = \mathbf{\Upsilon}_{-n}^{(j)}\boldsymbol{\theta}_{-n} + \boldsymbol{\upsilon}_n^{(j)}g_n$ and solving for $\boldsymbol{\theta}_{-n}$ gives \eqref{eq:sol_KKT_1}.
Finally, replacing \eqref{TSP2017_eq:sol_KKT_0} in \eqref{TSP2017_eq:KKT_normal_eq_linear} and solving for $\mathsf{vec}(\overline{\mathbf{V}})$ produces \eqref{eq:sol_KKT_2}, completing the proof.

\section{Proof of Proposition IV}\label{app:propIV}
Recall that the implicit function theorem governs the existence of an explicit solution of the system of power balance equations $\omega_n = 0,~n\in\mathcal{N}$ of the following form:
\begin{equation}
v_n = f_n(\boldsymbol{\theta}),~n\in\mathcal{N}.
\end{equation}
Hence, again by the implicit function theorem, the solution of the $M$-phase power balance equation $\overline{\mathbf{\Omega}}=\mathbf{0}_{\overline{T}\times N}$ exists and can be written in the following form:
\begin{equation}\label{eq:V_theta_explicit}
\overline{\mathbf{V}} = \overline{\mathbf{F}}(\boldsymbol{\theta}),
\end{equation}
where the $\overline{T}\times N$ matrix $\overline{\mathbf{F}}$ is defined as $[\overline{\mathbf{F}}]_{b,n}=f_n(b)~n\in\mathcal{N},~b\in\overline{\mathcal{T}}$.
If $\overline{\mathbf{F}}$ is available in closed form, the $M$-phase measurement matrix (i.e., its vectorization) can be written explicitly in terms of $\boldsymbol{\theta}$ as:
\begin{equation}
\mathsf{vec}(\overline{\mathbf{W}}) = \mathsf{vec}(\overline{\mathbf{F}}(\boldsymbol{\theta})) + \mathsf{vec}(\overline{\mathbf{W}}).
\end{equation}
Using the above, we derive the CRLB.
In particular, the MSE matrix of $\hat{\boldsymbol{\theta}}_{-n}$ can be bounded from below as:
\begin{equation}
\text{MSE}(\hat{\boldsymbol{\theta}}_{-n}) \succeq \boldsymbol{\mathcal{J}}^{-1}(\boldsymbol{\theta}_{-n}),
\end{equation}
where $\boldsymbol{\mathcal{J}}(\boldsymbol{\theta}_{-n})$ is the Fisher Information Matrix (FIM) defined as:
\begin{align}
\boldsymbol{\mathcal{J}}(\boldsymbol{\theta}_{-n}) & = \mathbb{E}\left\{\nabla_{\boldsymbol{\theta}_{-n}}^{\mathsf{T}}\ln\rho(\mathsf{vec}(\overline{\mathbf{W}});\boldsymbol{\theta})\nabla_{\boldsymbol{\theta}_{-n}}\ln\rho(\mathsf{vec}(\overline{\mathbf{W}});\boldsymbol{\theta})\right\}.
\end{align}
Using the Gaussian approximation for the pdf of $\mathsf{vec}(\overline{\mathbf{W}})$, the FIM can be approximated with the following Grammian:
\begin{align}\label{eq:FIM_GAuss}
\boldsymbol{\mathcal{J}}(\boldsymbol{\theta}_{-n}) & \approx \nabla_{\boldsymbol{\theta}_{-n}}^{\mathsf{T}}\mathsf{vec}(\overline{\mathbf{F}}(\boldsymbol{\theta}))\mathbf{\Sigma}^{-1}\nabla_{\boldsymbol{\theta}_{-n}}\mathsf{vec}(\overline{\mathbf{F}}(\boldsymbol{\theta})).
\end{align}
Applying the implicit function theorem, we obtain the following expression for the Jacobian $\nabla_{\boldsymbol{\theta}_{-n}}\mathsf{vec}(\overline{\mathbf{F}}(\boldsymbol{\theta}))$:
\begin{align}
\nabla_{\boldsymbol{\theta}_{-n}}\mathsf{vec}(\overline{\mathbf{F}}(\boldsymbol{\theta})) & = - \nabla_{\mathsf{vec}(\overline{\mathbf{V})}}^{-1}\mathsf{vec}(\overline{\mathbf{\Omega}})\nabla_{\boldsymbol{\theta}_{-n}}\mathsf{vec}(\overline{\mathbf{\Omega}})\\
& = - \mathbf{\Gamma}^{-1}\mathbf{\Upsilon}_{-n}.
\end{align}
Substituting the above in \eqref{eq:FIM_GAuss} gives expression \eqref{eq:CRLB_theta}.
To bound the MSE matrix of $\mathsf{vec}(\hat{\overline{\mathbf{V}}})$, we use \eqref{eq:V_theta_explicit}, i.e., the fact that $\mathsf{vec}(\overline{\mathbf{V}})$ is a transformed version of $\boldsymbol{\theta}$ and apply the corresponding CRLB formula, i.e.:
\begin{align}
\text{MSE}(\mathsf{vec}(\hat{\overline{\mathbf{V}}})) & \succeq \nabla_{\boldsymbol{\theta}_{-n}}\mathsf{vec}(\overline{\mathbf{F}}(\boldsymbol{\theta}))\boldsymbol{\mathcal{J}}^{-1}(\boldsymbol{\theta}_{-n})\nabla_{\boldsymbol{\theta}_{-n}}^{\mathsf{T}}\mathsf{vec}(\overline{\mathbf{F}}(\boldsymbol{\theta}))\\
& = \mathbf{\Gamma}^{-1}\mathbf{\Upsilon}_{-n}\boldsymbol{\mathcal{J}}^{-1}(\boldsymbol{\theta}_{-n})\mathbf{\Upsilon}_{-n}^{\mathsf{T}}(\mathbf{\Gamma}^{-1})^{\mathsf{T}},
\end{align}
completing the proof.

\bibliographystyle{IEEEtran}

\end{document}